\newtheorem{thm}{Theorem}
\newtheorem{cor}{Corollary}
\newtheorem{lem}{Lemma}
\newtheorem{prop}{Proposition}
\newtheorem{rem}{Remark}
\newcommand{\dint}{\displaystyle\int}
\newcommand{\F}{\mathcal{F}}
\newcommand{\br}{\boldsymbol{r}}
\numberwithin{equation}{section} \numberwithin{lem}{section}
\numberwithin{thm}{section} \numberwithin{prop}{section}
\numberwithin{cor}{section} \numberwithin{rem}{section}
\title{\bf{On the equilibrium of the Poisson-Nernst-Planck-Bikermann model equipping with the steric and correlation effects}}
\begin{document}
\author{Jian-Guo Liu$^{\,1}$ and
	Yijia Tang$^{\,2}$ and Yu Zhao$^{\,3}$\footnote{Corresponding author: Yu Zhao.}}

%  \thanks{yijia\underline{~}tang@sjtu.edu.cn}
% % \affiliation{School of Mathematical Sciences, Shanghai Jiao Tong University, Shanghai, 200240, P. R. China.}
% \quad 
% Yu Zhao \thanks{y.zhao@pku.edu.cn}}
% % School of Mathematical Sciences, Peking University, Beijing, 100871, P.R.China.

\maketitle
\begin{center}
{\footnotesize
1-Department of Mathematics and Department of Physics, Duke University, Durham, NC 27708, USA. \\
email: jliu@phy.duke.edu \\
\smallskip
2-School of Mathematical Sciences, Shanghai Jiao Tong University, Shanghai, 200240, P. R. China. \\
email: yijia\underline{~}tang@sjtu.edu.cn \\
\smallskip
3-School of Mathematical Sciences, Peking University, Beijing, 100871, P. R. China. \\  
email: y.zhao@pku.edu.cn 
}
\end{center}

%-------------------------------------------------------------------
%--------------------------------------abstract---------------------
\begin{abstract}
The Poisson-Nernst-Planck-Bikermann (PNPB) model, in which the ions and water molecules are treated as different species with non-uniform sizes and valences with interstitial voids,
can describe the steric and correlation effects in ionic solution neglected by the Poisson-Nernst-Planck and Poisson-Boltzmann theories with point charge assumption.
In the PNPB model, the electric potential is governed by the fourth-order Poisson-Bikermann (4PBik) equation instead of the Poisson equation so that it can describe the correlation effect. What's more, the steric potential is included in the ionic and water fluxes as well as the equilibrium Fermi-like distributions which characterizes the steric effect quantitatively. 

In this work, after doing a nondimensionalization step, we analyze the self-adjointness and the kernel of the fourth-order operator of the 4PBik equation. Also, we show the positivity of the void volume function and the convexity of the free energy. Following these properties, the well-posedness of the PNPB model in equilibrium is given. 
Furthermore, because the PNPB model has an energy dissipated structure, we adopt a finite volume scheme which preserves the energy dissipated property at the semi-discrete level. After that, various numerical investigations are given to show the parameter dependence of the steric effect to the steady state.
\end{abstract}	

{\bf Keywords: Poisson-Nernst-Planck-Bikermann; fourth-order Poisson-Bikermann equation; free energy; steady state; steric and correlation effects}

%-------------------------------------------------------------------
%---------------------------Introduction----------------------------
\section{Introduction}
\label{sec: intro}

The transport and distribution of ions are crucial in the study of many physical and biological problems, such as ion particles in the electrokinetic fluids, and ion channels in cell membranes. From the perspective of mathematics, some classical models have been proposed to portray the ion transport in solutions dating back to the 19th century \cite{Nernst,Planck}. The Poisson-Boltzmann (PB) equation is proposed by Gouy \cite{Gouy} and Chapman \cite{Chapman} to describe the equilibrium of the ionic fluids, while the Poisson-Nernst-Planck (PNP) equation is to describe the dynamics \cite{EisenbergChen93}. The PNP and PB theories are mean-field descriptions, where the ions are volumeless point charges, water is the dieletric medium, and both are modelled as continuum distributions.
% which is a typical type of implicit solvent model \cite{RouxSimonson99}. 
We refer the readers to \cite{Davis1990Electrostatics,FogolariBrigoMolinari02,Baker05,Lu2008Recent} for the reviews of the PB equation and \cite{Eisenberg98,MarkowichRinghoferSchimeiser90,Flavell2014A} for those of the PNP equation. 

At the equilibrium state, when considering a system of $K$ different ionic species, the classical PB theory shows that the $i^{\text{th}}$ ionic concentrations $C_i$ for $i = 1, \cdots, K$, are demonstrated by the Boltzmann distributions, i.e. 
\begin{equation}
   C_i (\br)=C_i^{\mathrm{B}}\exp\left(-\frac{q_i}{k_B T}\phi(\br)\right), \quad i = 1, \cdots, K,
   \label{eq: Boltzmann}
\end{equation}
while the electric potential $\phi$ determined by Gauss's law satisfies the Poisson equation, i.e. 
\begin{equation}
    -\varepsilon_s \Delta \phi(\br) = \rho(\br),
    \label{eq: poisson}
\end{equation}
where $\rho(\br) := \sum_{i = 1}^{K} q_i C_i(\br)$ is called the total charged density.
Here, $\varepsilon_s$ is the dielectric constant of the solvent, $k_B$ is the Boltzmann constant and $T$ is the absolute temperature. $q_i=z_i e$ is the charge of the $i^{\text{th}}$ species with valence $z_i$, $e$ is the proton charge. $C_i^{\mathrm{B}}>0$ is the reference concentration often chosen as the bulk concentration. 
%%%%%%%%%%%%%%%%%%%%%

As is well-known, the PB model can only describe the diffusion and electric effect. So many improvements have been made to demonstrate nonideal effects. On one hand, to describe the correlation (or nonlocal screening, polarization) effect, in 2006, Santangelo \cite{Santangelo06} refined the second-order Poisson equation into a fourth-order equation for the equilibrium profile of the pointwise counterions near a charged wall, so that the correlation effect is modelled by introducing a parameter of correlation length $l_c$ in the fourth-order dielectric operator $L:=\varepsilon_s\left(l_c^2\Delta-I\right)\Delta$. Since then, $L$ is used to describe the medium permittivity and the dielectric response of correlated ions \cite{BazantStoreyKornyshev11,LiuEisenberg13,LiuEisenberg14,LiuXieEisenberg17,LiuEisenberg20}. 

On the other hand, due to the limitation of the particles' volumeless assumption, the PB theory fails to describe the steric (or finite size) effect which has been shown to be important in a variety of chemical and biological systems \cite{2001Ion,BAZANT200948,exchanger,nanopore}. The PB equation models the mean-field electrostatic potential where the ions interact through Coulomb force, 
% the electric potential $\phi$ is determined by the charged density function $\rho$, 
which is independent of volume of the particles. 
A lot of efforts have been made to improve the Boltzmann distribution for a proper description incorporating the steric effect. The first trial for binary ionic liquids with different sizes and voids was made by Bikerman in 1942. 
Then Bazant et al. \cite{BazantStoreyKornyshev11} proposed a significant PB modification in 2011, where the size of ions, which is assumed to be equal, is considered in a Fermi-like concentration distribution in place of the Boltzmann distribution.
Here Fermi-like distributions of all species are derived from the volume exclusion of classical particles. The excluded volume of particles leads to the space competition, which produces a steric energy.
Recently, Liu and Eisenberg \cite{LiuEisenberg13,LiuEisenberg14} gave the microscopic interpretation of the Fermi-like distributions using the configuration entropy model. They derived the Poisson-Fermi (PF) model to describe the equilibrium of aqueous electrolytes with $K$ ionic species of non-uniform size \cite{LiuEisenberg13} and extended it to the case where the water molecules and the interstitial voids are treated as the $K+1$ and $K+2$ species \cite{LiuEisenberg14}. Later, a slight modification was made in the Fermi-like distribution such that the steric energy varies with species size as the electric energy varies with species valence \cite{LiuXieEisenberg17}.
In this work, we focus on this novel model developed by Liu and Eisenberg, called the Poisson-Nernst-Planck-Bikerman (PNPB) model to cover the non-equilibrium state at the same time. 
% Last but not least, as for the non-equilibrium system for ion transport, to describe the finite size effect and steric potential, Eisenberg, Hyon, and Liu \cite{EnVarA,EnVarA2,EnVarA3} used a variational method EnVarA (energy variational analysis) to create a variational field theory that includes flow, friction, and complex
% structure with physical boundary conditions.  

In the PNPB model, all the ions and water molecules are assumed as hard spheres with volume $v_i$, hence the voids occupy the exclusive volume. 
Then the Fermi-like distributions of concentrations $C_i$ in equilibrium, for $i = 1, \cdots, K+1$ are given by 
\begin{equation}
\label{eq: fermi}
C_{i}(\br)=C_{i}^{\mathrm{B}} \exp \left(-\frac{q_i}{k_BT} \phi(\br)+\frac{v_{i}}{v_{0}}S(\br)\right),
\end{equation}
% {\color{red} grand canonical ensemble}
% also known as the fourth-order Poisson-Bikerman (4PBik) equation in honor of Bikerman's pioneer work, 
and the electrostatic field is given by the fourth-order Poisson-Bikerman (4PBik) equation 
\begin{equation}
\label{eq: 4pbik}
L\phi(\br)=\varepsilon_{s}\left(l_{c}^{2} \Delta-I\right) \Delta \phi(\br)=\rho(\br).
\end{equation}
Here 
$v_{0}=\left(\sum_{i=1}^{K+1} v_{i}\right) /(K+1)$ is the average volume. 
$S$ in \eqref{eq: fermi} is called the steric potential, which keeps the ions from getting too close and leads to the charge/space competition \cite{Boda_et_al}. 
The steric potential $S$ is given in the following form
\begin{equation}
\label{eq: steric}
S=\ln{\frac{\Gamma(\br)}{\Gamma^{\mathrm{B}}}},
\end{equation}
where the void volume function 
\begin{equation}
\Gamma(\br) =1-\sum_{i=1}^{K+1} v_{i} C_{i}(\br)
\label{eq: gamma}
\end{equation}
is the fraction left by the excluded volume of ions and water. $\Gamma^{{\mathrm{B}}}>0$ is the reference void volume fraction given by $\Gamma^{{\mathrm{B}}}=1-\sum_{i=1}^{K+1} v_{i} C_{i}^{{\mathrm{B}}}$. \eqref{eq: fermi} indicates the void interacts with the ions and water molecules as a repulsive force and results in the steric energy like the Lennard-Jones in \cite{EnVarA}.
% which represents for the steric effect due to the finite size of ions. The steric effect 
Steric effects are seen in the resulting Fermi-like distribution of finite size ions at large potentials near electrodes and boundaries. 
% Analogously to the PB theory, the PF model can also been regarded as the minimizer of the Gibbs free energy, gives as \cite{LiuXieEisenberg17}
% \begin{equation*}
% \mathcal{F} =k_{B} T \int\left\{\sum_{i=1}^{K+1} C_{i}(\br)\left(\ln \frac{C_{i}(\br)}{C_{i}^{\mathrm{B}}}-1\right)+\frac{\Gamma(\br)}{v_{0}}\left(\ln \frac{\Gamma(\br)}{\Gamma^{\mathrm{B}}}-1\right)\right\} \mathrm{d} \br + \frac{1}{2} \int \rho(\br) \phi(\br) \mathrm{d} \br,
% \end{equation*}
% where the entropy terms of water species and void are added 
And the electric potential %in the electrostatic energy 
is governed by the 4PBik equation \eqref{eq: 4pbik} instead of the Poisson equation \eqref{eq: poisson}. 
Hence, they generalize the equilibrium model to the non-equilibrium case for ion transport \cite{LiuEisenberg14}, which is the PNPB system. The non-equilibrium PNPB system is coupled by the Nernst-Planck-Bikerman (NPB) equation with the 4PBik equation. The former is a generalization of the classic Nernst-Planck equation in the sense that the motion of ions and water molecules is driven by the steric force besides the electric force and concentration gradient. %The steric potential $S$ can also be defined in dynamical case. 
In dynamical case, if the ions are crowded, the steric force $\nabla S$ would spread them out. Hence, the PNPB model are capable to describe the ionic fluids with the steric effect and correlation effect. 
% We emphasize that in the classical PNP and PB theories, only the long-range ion-ion Coulomb interaction is considered, while in the PNPB theory, the short-range ion-ion interaction and the ion-water interaction come into play.
% screening effect of water, saturation, as well as charge/space competition \cite{Boda_et_al}. 
% A generalized Debye-Huckel theory, which is formulated in the linearized 4PBik equation, was proposed in \cite{LiLiu20}. 
See \cite{LiuEisenberg20} for a comprehensive review of the PNPB theory.
We remark that the energy variational analysis (EnVarA) proposed by Eisenberg, Hyon, and Liu \cite{EnVarA} is another approach to describe the finite size effect and steric potential for ion transport, which combines Hamilton's least action and Rayleigh's dissipation principles to create a variational field theory \cite{EnVarA,EnVarA2,EnVarA3}.

% Last but not least, as for the non-equilibrium system for ion transport, to describe the finite size effect and steric potential, Eisenberg, Hyon, and Liu \cite{EnVarA,EnVarA2,EnVarA3} used a variational method EnVarA (energy variational analysis) to create a variational field theory that includes flow, friction, and complex
% structure with physical boundary conditions.  

Though providing a unified framework, the PNPB theory is constructed merely based on physical principles, and there are still many mathematical problems requiring to be answered. The first one is the positivity of the volume function of the interstitial voids. 
% As mentioned before, the entropy of the void is included in the free energy. As we shall see in section \ref{sec: model}, the steric potential has the form of the logarithm of $\Gamma$. 
If $\Gamma$ is not greater than zero, the steric potential $S$ makes no sense. So, it is vital to show the positivity of $\Gamma$ in steady state problem and the positivity-preserving property of $\Gamma$ in the dynamical problem theoretically. 
Another issue is how to prescribe suitable and physical boundary conditions for the PNPB model to describe various actual physical and chemical problems.
What's more, to show the existence, uniqueness and regularity of the solutions to both the equilibrium and non-equilibrium PNPB model remains a major concern.

In this work, we want to give more mathematical understandings and improve the theoretical framework on the PNPB model by some analytical approaches and numerical investigations. 
On the theoretical side, we try to tackle some of the difficulties mentioned before. We mainly focus on the equilibrium problem in this paper. 
First of all, we look into the basic properties of operator $L$ in the 4PBik equation \eqref{eq: 4pbik}, such as the self-adjointness and its kernel structure. Then we show the positivity of $\Gamma$ in equilibrium. 
To prove this rigorously, we first give four equivalent statements for the steady state and then deduce the positivity by contradiction using one of them. As a consequence of this, it is impossible for a volume to be completely filled with ions or water molecules. Hence, the presence of void gives rise to the saturation of ions. 
Thanks to the convexity of the free energy functional, we can demonstrate the well-posedness of the weak solution using standard calculus of variation. 

Besides, though the steric, correlation, saturation and other effects have already been shown in the applications of PNPB model to the biological ion channels in \cite{Liu13,LiuEisenberg15,exchanger,LiuEisenberg20}, we want to investigate these effects from the perspective of parameter dependence. 
% We aim to gain more understandings on the effects of the steric potential $S$ and correlation (\ref{eq: 4pbik}) numerically, especially to show how the parameters influence the concentrations of the ions at the steady state.
After doing nondimensionalization, the PNPB model is tuned by five dimensionless parameters $\eta, \lambda, \nu, z_i$ and $v_i$. 
% Then, given the free energy dissipation property of the PNPB system, we adopt the semi-implicit finite volume scheme proposed in \cite{YongZhaoZhou} which is energy dissipated in the semi-discrete level. 
The parameters can be divided into three classes: the one related to the steric effect: $\eta$, the one determining the electric field: $\lambda, \nu$, and the one that is the intrinsic properties of ions: $z_i, v_i$. 
We aim to gain more understandings on how these parameters influence the concentrations of the ions at the steady state numerically. 
By doing various numerical tests, we demonstrate that $\eta$ dominates the steric repulsion effect, $\eta = 0$ means the steric effect vanishes, thus the concentrations of ionic species are more peaked than the case $\eta > 0$. Also, the steric effect becomes stronger as $\eta$ increases. The dimensionless parameters $\lambda$ and $\nu$ play the role of correlation length and Debye length after scaling, which contribute oppositely to the electric potential. We conclude that the electric potential is higher with smaller correlation length and larger Debye length. Moreover, the valence $z_i$ is reflected on the electrostatic effect while the volume $v_i$ affects the steric effect as expected. In a word, larger $|z_i|, v_i$ lead to stronger effects.

The paper is organized as follows. In section \ref{sec: model}, we briefly introduce the PNPB theory, including the basic setup, the free energy functional, the 4PBik equation, the Fermi-like distribution at equilibrium and the PNPB equation. We also do a nondimensionalization step in section \ref{sec: model}, so that it suffices to analyze the dimensionless equations thereafter. In section \ref{sec: equilibrium}, we deal with the steady state problem and discuss the well-posedness. To be specific, we analyze the self-adjointness and the kernel of the fourth-order operator of the 4PBik equation. Also, we show the positivity of the void volume function and the well-posedness of the 4PBik equation in equilibrium. In section \ref{sec: numer}, we show the energy dissipation relation of both the PNPB equation and its semi-discrete form. Thus, we can deduce a numerical method based on this form. Various numerical tests are given, where we can see the parameter dependence of the model. Concluding remarks are dawn in section \ref{sec: conclusion}.

\section{Preliminaries}
\label{sec: model}

In this section, we give a brief introduction to the PNPB theory, and an in-depth review can be found in \cite{LiuEisenberg20}. We review the generalized free energy functional of the PNPB model and discuss the corresponding equilibrium and non-equilibrium problem. At the end of this section, we do a nondimensionalization step, so that it suffices to analyze the dimensionless equations thereafter.

\subsection{Settings}  

Consider an aqueous electrolytic system with $K$ ionic species in a bounded solvent domain $\Omega$. 
The ions of the $i^{\text{th}}$ species for $i=1,\cdots, K$ are treated as hard spheres with volume $v_i \in \mathbb{R}_{+}$ and valence $z_i \in \mathbb{Z}$. Besides, treat water as the $(K+1)^{\text{th}}$ species and regard it as hard spheres with volume $v_{K+1} \in \mathbb{R}_{+}$ as well, but it is electroneutrality with valence $z_{K+1}=0$. Due to the sizes of ions and water molecules, the void between all the hard spheres is indispensable. It is a whole treated as the $(K+2)^{\text{th}}$ species. 
% This is the problem setup of the PNPB model \cite{LiLiu20}.
Let $C_i (\br)$ be the concentration of the $i^{\text{th}}$ species for $i=1,\cdots, K+1$ and $\Gamma (\br)$ be the void volume function.
% \begin{equation}
% \Gamma(\br) =1-\sum_{i=1}^{K+1} v_{i} C_{i}(\br),
% \end{equation}
% which represents for the volume fraction of void at position $\r\in \Omega $.
% By the assumption on $\Omega$, 
Then the void volume function $\Gamma (\br)$ satisfies \eqref{eq: gamma}, which is 
\begin{equation*}
\Gamma(\br) =1-\sum_{i=1}^{K+1} v_{i} C_{i}(\br).
\end{equation*}
Further assume the surface of the bounded domain $\Omega$ has a fixed shape and the ions and water molecules can not go through.
The total concentrations of all the $K+2$ species are conserved in $\Omega$.

% Taking the correlation and polarization effects into consideration, the correlated electric potential $\phi(\br)$ satisfies the 4PBik equation
% \cite{BazantStoreyKornyshev11}
% \begin{equation}
% \label{4pbik}
% L \phi(\br) = \epsilon_{s}\left(l_{c}^{2} \Delta-I\right) \Delta \phi(\br)=\rho(\br),\quad \r \in \Omega.
% \end{equation}
% Here $\rho(\br)=\sum_{i=1}^K q_i C_i(\br)$ is the total charged density, $q_i=z_i e$ is the charge of the $i^{\text{th}}$ species, $e$ is the proton charge. $l_c$ is the correlation length.
% $\epsilon_s$ is the dielectric constant of the solvent, $\epsilon_s=\epsilon_w\epsilon_0$ with $\epsilon_w, \epsilon_0$ being the relative dielectric constant of water and vacuum permittivity respectively. 

As mentioned in the introduction, taking the correlation effects into consideration, the correlated electric potential $\phi(\br)$ satisfies the 4PBik equation \eqref{eq: 4pbik}: $L \phi(\br) = \sum_{i = 1}^{K} q_i C_i(\br)$.
The fourth-order operator $$L = \varepsilon_s\left(l_{\mathrm{c}}^{2} \Delta-I\right) \Delta$$ is used to approximate the permittivity of the bulk solvent and the linear response of correlated ions. The basic properties of $L$ including the self-adjointness and the invertibility are given in section \ref{subsec: operator L} under a dimensionless setting. $L$ is derived by a convolution of the classic Poisson dielectric operator with a Yukawa potential kernel \cite{XieLiuEisenberg16}. Therefore, it reduces to the classical Poisson equation if there is no correlation and polarization effect which is modelled by the introduced correlation length $l_c$. Namely, when $l_c = 0$, the 4PBik equation \eqref{eq: 4pbik} reduces to the Poisson equation \eqref{eq: poisson}.

\subsection{Generalized Gibbs free energy functional and the equilibrium}
\label{subsec: 2.2}
Conventionally, the Gibbs free energy functional for an aqueous electrolytic system with $K$ ionic species is given as 
$$
\frac{1}{2} \int_{\Omega} \rho(\br) \phi(\br) \mathrm{d} \br+
k_{B} T \int_{\Omega}\sum_{i=1}^{K} C_{i}(\br)
\left(\ln \frac{C_{i}(\br)}{C_{i}^{\mathrm{B}}}-1\right) \mathrm{d} \br.
$$
The first part is the electrostatic energy where the electric field $\phi$ is created by the charge on different ionic species and generally satisfies the Poisson equation \eqref{eq: poisson}. The second part is the entropy term, which describes the particle Brownian motion of the $K$ ion species.
% $C_i^{\mathrm{B}}>0$ is the reference density often chosen as the bulk concentration. 
% $k_B$ is the Boltzmann constant and $T$ is the absolute temperature.

For the PNPB model, there are two more species: water and void. So, the entropy part of water and void should be included except for ions. And the electric potential $\phi$ satisfies the 4PBik equation \eqref{eq: 4pbik} instead. 
Let $\mathbf{C} = (C_1, \cdots, C_{K+1})'$, the generalized Gibbs free energy \cite{LiuXieEisenberg17} is given in the following form 
\begin{equation}
\mathcal{F}(\mathbf{C}) =\mathcal{F}_{e l}(\mathbf{C})+\mathcal{F}_{e n}(\mathbf{C}),
\label{eq: energy}
\end{equation}
where the electric term 
\begin{equation}
\mathcal{F}_{e l}(\mathbf{C}) 
   =\frac{1}{2} \int_{\Omega} \rho(\br) \phi(\br) \mathrm{d} \br, \text{ s.t. } L \phi(\br) = %\varepsilon_{s}\left(l_{c}^{2} \Delta-1\right) \Delta \phi(\br)=
   \rho(\br),
   \label{eq: electric energy}
  \end{equation}
and the entropy term
  \begin{equation}
\mathcal{F}_{e n}(\mathbf{C}) =k_{B} T \int_{\Omega}\left\{\sum_{i=1}^{K+1} C_{i}(\br)\left(\ln \frac{C_{i}(\br)}{C_{i}^{\mathrm{B}}}-1\right)+\frac{\Gamma(\br)}{v_{0}}\left(\ln \frac{\Gamma(\br)}{\Gamma^{\mathrm{B}}}-1\right)\right\} \mathrm{d} \br,
\label{eq: entropy energy}
\end{equation}
subject to the mass conservation constraints of ions and void: for the $i^{\text{th}}$ species, $i=1,\cdots,K+1,$
\begin{equation*}
    \int_{\Omega} C_i(\br)\ \mathrm{d} \br=: m_0^i,
    % \label{eq: conservation}
\end{equation*}
and the void
\begin{equation*}
   \int_{\Omega} \Gamma(\br)\ \mathrm{d} \br=V-\sum_{i=1}^{K+1}v_i m_i^0.
\end{equation*}
%Here, $\mathcal{F}_{e l}(\mathbf{C}), \mathcal{F}_{e n}(\mathbf{C})$ is the electrostatic and entropy functional for the PNPB model respectively. 
Here $m_i^0>0$ is the total concentration of the $i^{\text{th}}$ species, $V$ is the volume of domain $\Omega$.

% $v_{0}=\left(\sum_{i=1}^{K+1} v_{i}\right) /(K+1)$ is the average volume, $\Gamma^{{\mathrm{B}}}>0$ is the reference void volume fraction given by $\Gamma^{{\mathrm{B}}}=1-\sum_{i=1}^{K+1} v_{i} C_{i}^{{\mathrm{B}}}$. 
% Hence, the Gibbs free energy functional of the PNPB model is 
% \eqref{eq: energy}-\eqref{eq: entropy energy} subject to density constraints \eqref{eq: conservation}.

If the operator $L^{-1}$ is self-adjoint, the variational derivative of $\mathcal{F}(\mathbf{C})$ with respect to $C_i$ for $i = 1, \cdots, K+1$, is given by
\begin{equation*}
\frac{\delta \mathcal{F}(\mathbf{C})}{\delta C_{i}} = k_{B} T\left[\ln \frac{C_{i}(\br)}{C_{i}^{\mathrm{B}}}-\frac{v_{i}}{v_{0}} \ln \frac{\Gamma(\br)}{\Gamma^{\mathrm{B}}}\right]+q_{i} \phi(\br).
\end{equation*}
$\frac{\delta \mathcal{F}(\mathbf{C})}{\delta C_{i}}=0$ yields the Fermi-like distribution \eqref{eq: fermi}.
% \begin{equation}
% \label{eq: fermi}
% C_{i}(\br)=C_{i}^{\mathrm{B}} \exp \left(-\frac{q_i}{k_BT} \phi(\br)+\frac{v_{i}}{v_{0}}S(\br)\right),
% \end{equation}
% where 
% $S(\br)$ is called the {\bf steric potential} given by 
% \begin{equation}
%      S(\br)=\ln \frac{\Gamma(\br)}{\Gamma^{\mathrm{B}}}.
%      \label{eq: steric}
% \end{equation}
It is straightforward from \eqref{eq: fermi} that the concentration of each species depends on the steric potential which in turn depends on the concentration of all species.
We note that %$S(\br)$ 
the steric potential makes sense only when 
\begin{equation*}
    \Gamma(\br)>0,
\end{equation*} 
which is an important issue we need to claim afterwards.

Thus combined the field equation of the electric potential \eqref{eq: 4pbik} with the Fermi-like distribution \eqref{eq: fermi}, the equilibrium of the PNPB system is summarized
as follows, for $\br \in \Omega$,
\begin{equation}
\label{eq: pnpb}
%\left\{
\begin{aligned}
C_{i}(\br) &= C_{i}^{\mathrm{B}} \exp \left(-\frac{q_i}{k_BT} \phi(\br)+\frac{v_{i}}{v_{0}}S(\br)\right), \quad i = 1, \cdots, K+1, \\
\int_{\Omega} C_i(\br)\ \mathrm{d} \br &= m_0^i,\quad i = 1, \cdots, K+1,\\
L \phi(\br) &= \varepsilon_{s}\left(l_{c}^{2} \Delta-1\right) \Delta \phi(\br)=\rho(\br), \\
S(\br) &= \ln \frac{\Gamma(\br)}{\Gamma^{\mathrm{B}}}, \\
\rho(\br) &= \sum_{i=1}^{K} q_{i} C_{i}(\br).
\end{aligned}
%\right.
\end{equation}

If we ignore the finite size of ions, i.e., $v_i=0$, the steric effect vanishes. Hence, the Fermi-like distribution \eqref{eq: fermi} will reduce to the Boltzmann distribution \eqref{eq: Boltzmann}.
In the limiting case when we ignore the steric effect and correlation effect, the equilibrium system \eqref{eq: pnpb} will reduce to the classical PB equation with mass conservation. However, the lattice-based modified PB models used to account for steric effects are not consistent with the classical PB as the volume of ions tends to zero \cite{LiuEisenberg14}, this suggests the superiority of the PNPB model.

\subsection{Dynamics: the Poisson-Nernst-Planck-Bikerman equation}

As for the ion transport, it is natural to generalize the gradient flow structure of the classical PNP model to the PNPB model. Assume all the densities are not degenerate, that is,
\begin{equation*}
    C_i(\br,t)>0,~\text{for } i=1,\cdots, K+1, \quad \Gamma(\br,t)>0 ~ a.e. ~ \text{in} ~ \Omega.
    % \label{eq: pos assup}
\end{equation*}
The chemical potential $\mu_i$ of the $i^{\text{th}}$ ionic species is described by the variational derivative
\begin{equation*}
% \label{eq: potential}
\mu_i = \dfrac{\delta \mathcal{F}(\mathbf{C})}{\delta C_i}
\end{equation*}
and is referred to in channel biology as the "driving force" for the current of the $i^{\text{th}}$ ionic species \cite{EnVarA}. In the PNPB theory, the chemical potential of species $i$ for $i = 1, \cdots, K+1$ becomes
\begin{equation}
\label{eq: potential}
    \mu_i = k_{B} T\left[\ln \frac{C_{i}(\br, t)}{C_{i}^{\mathrm{B}}}-\frac{v_{i}}{v_{0}} \ln \frac{\Gamma(\br, t)}{\Gamma^{\mathrm{B}}}\right]+q_{i} \phi(\br, t).
\end{equation}
Hence, the non-equilibrium system is
\begin{equation}
\label{eq: dynamicp}
\begin{aligned}
\frac{\partial C_{i}(\br, t)}{\partial t} &
%=-\nabla \cdot {J}_{i}(\br, t), \quad \br \in \Omega.
= \nabla \cdot \left( \frac{D_i}{k_B T} C_i \nabla \mu_i \right), \quad \br \in \Omega,
\end{aligned}
\end{equation}
with $D_i$ being the diffusion coefficient.
\eqref{eq: dynamicp} is the Nernst-Planck-Bikerman (NPB) equation and can be rewritten as 
\begin{equation*}
\begin{aligned}
\frac{\partial C_{i}(\br, t)}{\partial t} &
=-\nabla \cdot {J}_{i}(\br, t), \quad \br \in \Omega,
%= \nabla \cdot \left( \frac{D_i}{k_B T} C_i \nabla \mu_i \right), \quad \br \in \Omega.
\end{aligned}
\end{equation*}
where the flux density ${J}_{i}$ for each ionic species $i = 1, \cdots, K + 1$ is  
\begin{equation}
\label{eq: flux}
J_{i}(\br, t)=-\frac{D_i}{k_B T} C_i \nabla \mu_i.
\end{equation}
Define the dynamic steric potential
\begin{equation}
\label{eq: stericp}
S(\br, t) := \ln \frac{\Gamma(\br, t)}{\Gamma^{\mathrm{B}}}.
\end{equation}
Then, substituting \eqref{eq: stericp} into \eqref{eq: potential} and \eqref{eq: flux}, for $i = 1, \cdots, K+1$, 
the flux $J_{i}$ becomes
\begin{equation*}
\label{eq: flux2}
J_{i}(\br, t)=-D_{i}\left(\nabla C_{i}(\br, t)+\frac{q_i}{k_BT}C_i(\br, t) \nabla \phi(\br, t) - \frac{v_i}{v_0}C_{i}(\br, t) \nabla S(\br, t)\right).
\end{equation*}
The correlated electric potential $\phi(\br, t)$ satisfies the 4PBik equation
\begin{equation}
\label{eq: 4pbikp}
%L \phi(\br, t) = \varepsilon_{s}\left(l_{c}^{2} \Delta-I\right) \Delta \phi(\br, t)=\rho(\br, t) := \sum_{i = 1}^{K} q_i C_i(\br, t), \quad \br \in \Omega.
L \phi(\br, t) = \rho(\br, t) := \sum_{i = 1}^{K} q_i C_i(\br, t), \quad \br \in \Omega.
\end{equation}
Hence the NPB equation \eqref{eq: dynamicp} coupled with the 4PBik equation \eqref{eq: 4pbikp} is the so-called PNPB system.

In addition, to make the problem completed, we impose the initial condition 
\begin{equation}
 \label{eq: initial}
C_i(\br, 0) = C_i^0(\br),
\end{equation}
and the no-flux boundary condition
\begin{equation}
\label{eq: boundary}
J_{i} \cdot \mathbf{n} = 0,
\end{equation}
where $\mathbf{n}$ is the outward unit normal at the boundary $\partial \Omega$.

\begin{rem}[Mass conservation]
Let $C_i(\br, t), i = 1, \cdots, K+1$, be non-negative solutions to the PNPB model (\ref{eq: dynamicp}), (\ref{eq: 4pbikp}) equipped with the initial condition (\ref{eq: initial}) and the boundary condition (\ref{eq: boundary}). It's trivial that the system has the following mass conservation property, 
$$\int_{\Omega} C_i(\br, t)\ \mathrm{d} \br\equiv \int_{\Omega} C_i^0(\br)\ \mathrm{d} \br =: m_0^i$$ 
for each species $i = 1, \cdots, K + 1$. Consequently, 
$$
\int_{\Omega} \Gamma(\br, t)\ \mathrm{d} \br=V-\sum_{i=1}^{K+1}v_im_i^0.
$$
\end{rem}

% At equilibrium, $J_i(\br)=0$, which implies
% \begin{equation*}
% C_{i}(\br)\propto \exp \left(-\frac{q_i}{k_BT} \phi(\br)+\frac{v_{i}}{v_{0}}S(\br)\right).
% \end{equation*}
% So, the equilibrium distribution of NPB equation is the Fermi-like distribution anologous to the relation of NP equation and Boltzmann distribution.

Moreover, the flux $J_i = 0$ yields the Fermi-like distribution \eqref{eq: fermi}, which implies system \eqref{eq: pnpb} is the equilibrium of the PNPB system.

\subsection{Nondimensionalization}

To get a better understanding of the steric and correlation effects, we non-dimensionalizing the PNPB model and thus it can be dominated by certain dimensionless parameters. Denote $L=V^{1/d}$ as the diameter of the solvent region $\Omega$, $C_*$ and $D$ as the characteristic concentration and a typical diffusion coefficient. Rescale the variables 
$\tilde{\br}=\frac{\br}{L}, \ \tilde{t}=\frac{Dt}{L^2}.$ Let $\tilde{\Omega}=\{\tilde{\br}\in \mathbb{R}^d: \tilde{\br}L\in \Omega\}$ and introduce the dimensionless quantities 
$$
\tilde{\phi}=\frac{e\phi}{k_BT},\quad \tilde{C}_i=\frac{C_i}{C_*},\quad
\tilde{C}_i^{\mathrm{B}}=\frac{C_i^{\mathrm{B}}}{C_*},\quad \tilde{C}_i^0=\frac{C_i^0}{C_*},\quad 
\tilde{D}_i=\frac{D_i}{D},\quad 
\tilde{v}_i=\frac{v_i}{V},\quad 
\tilde{v}_0=\frac{v_0}{V}.
$$
For notation convenience, we drop the tildes from now on.

Define the dimensionless parameter $\eta=V C_*$.
Non-dimensionalizing \eqref{eq: dynamicp}, we obtain
% \begin{equation}
% \frac{\partial C_i(\br,t)}{\partial t}=\nabla\cdot\left(D_i\left(\nabla C_i(\br,t)+z_i C_i(\br,t)\nabla\phi(\br,t)-\frac{v_i}{v_0}C_i(\br,t)\nabla S(\br,t)\right)\right),
%     \label{eq: no-dim dynamic}
% \end{equation}
\begin{equation}
\label{eq: no-dim dynamic}
\begin{aligned}
\frac{\partial C_{i}(\br, t)}{\partial t} &
%=-\nabla \cdot {J}_{i}(\br, t), \quad \br \in \Omega.
= \nabla \cdot \left(D_i C_i \nabla \mu_i \right), \quad \br \in \Omega,
\end{aligned}
\end{equation}
where the chemical potential  
\begin{equation*}
\mu_i=\ln \frac{C_{i}(\br, t)}{C_{i}^{\mathrm{B}}}+z_i \phi(\br,t)-\frac{v_i}{v_0} S(\br,t).
\end{equation*}
One has the steric potential
$$
S(\br, t) = \ln \frac{\Gamma(\br, t)}{\Gamma^{\mathrm{B}}}
$$
with the void volume function
\begin{equation}
    \Gamma(\br, t)=1-\eta\sum\limits_{i=1}^{K+1}v_i C_i,\quad \Gamma^{\mathrm{B}}=1-\eta\sum\limits_{i=1}^{K+1}v_i C_i^{\mathrm{B}},
    \label{eq: no-dim gamma}
\end{equation} 
and the flux 
\begin{equation*}
    J_i = -D_i\left(\nabla C_i(\br,t)+z_i C_i(\br,t)\nabla\phi(\br,t)-\frac{v_i}{v_0}C_i(\br,t)\nabla S(\br,t)\right), ~~ i = 1, \cdots, K+1.
\end{equation*}
\eqref{eq: no-dim dynamic} is regarded as the dimensionless NPB equation. For all species $i = 1, \cdots, K+1$, the no-flux boundary condition and initial condition become
\begin{equation}
    J_{i} \cdot \mathbf{n} = 0,
    \label{eq: no-dim bc}
\end{equation}
and 
\begin{equation}
C_i(\br, 0) = C_i^0(\br).
\label{eq: no-dim ic}
\end{equation}

Further, introduce the Debye length
$l_D=\sqrt{\dfrac{\varepsilon_s k_B T}{e^2C_*}}$.
% (in general, the Bjerrum screening length is $l_B=\dfrac{e^2}{4\pi k_B T \varepsilon_s}$, the Deybe screening length is $l_D=\frac{1}{\sqrt{8\pi l_B \sum\limits_{i}z_i^2C_i}}=\sqrt{\frac{\varepsilon_s k_B T}{\sum\limits_{i}z_i^2e^2C_i}}$).
Define the dimensionless parameters
$$\nu=\frac{l_D}{L},\quad  \lambda=\frac{l_c}{L},$$
then the dimensionless 4PBik equation reads
\begin{equation}
%L\phi=\nu^2\left(\lambda^2\Delta -I \right)\Delta \phi=\rho=\sum_{i=1}^K z_i C_i.
L\phi(\br, t) = \rho(\br, t),
\label{eq: no-dim 4pbik}
\end{equation}
where the fourth-order operator $L$ becomes
$$
L = \nu^2\left(\lambda^2\Delta - I \right)\Delta,
$$
and the total charged density
$$
\rho(\br, t) = \sum_{i=1}^K z_i C_i(\br, t).
$$
Note that the fourth-order equation \eqref{eq: no-dim 4pbik} reduces to the dimensionless Poisson equation when $\lambda=0$.
Similarly, the dimensionless PNPB system \eqref{eq: no-dim dynamic}, \eqref{eq: no-dim 4pbik} is valid under the assumption that $C_i, \Gamma$ are not degenerate.

\begin{rem}[Mass conservation']
Let $C_i(\br, t), i = 1, \cdots, K+1$, be non-negative solutions to the PNPB model (\ref{eq: no-dim dynamic}), (\ref{eq: no-dim 4pbik}) equipped with the initial condition (\ref{eq: no-dim ic}) and the boundary condition (\ref{eq: no-dim bc}). Let $m_i^0$ be $m_i^0/\eta$ for simplicity, then the dimensionless PNPB system satisfies the mass conservation property, 
$$
\int_{\Omega} C_i(\br, t)\ \mathrm{d} \br\equiv \int_{\Omega} C_i^0(\br)\ \mathrm{d} \br = m_0^i, \quad
i = 1, \cdots, K + 1,$$ 
and consequently,
$$
\int_{\Omega} \Gamma(\br, t)\ \mathrm{d} \br=1-\eta\sum_{i=1}^{K+1}v_i m_i^0.
$$
\end{rem}

Let $J_i = 0$ for all $i = 1, \cdots, K+1$, the equilibrium of the above dimensionless PNPB system becomes 
% \begin{align}
% C_{i}(\br)=C_{i}^{\mathrm{B}} \exp \left(-z_i \phi(\br)+\frac{v_{i}}{v_{0}}S(\br)\right), \quad S(\br)=\ln \frac{\Gamma(\br)}{\Gamma^{\mathrm{B}}}.
% \end{align}
% So the dimensionless system is given by
\begin{equation}
\label{eq: no-dim PNPB}
% \left\{
\begin{aligned}
C_{i}(\br) &= C_{i}^{\mathrm{B}} \exp \left(-z_i \phi(\br)+\frac{v_{i}}{v_{0}}S(\br)\right), \quad i = 1, \cdots, K+1, \\
\int_{\Omega} C_i(\br)\ \mathrm{d} \br &= m_0^i, \quad
i = 1, \cdots, K + 1,\\
L \phi(\br) &=\nu^2 \left(\lambda^2\Delta -1\right)\Delta \phi=\rho(\br), \\
S(\br) &= \ln \frac{\Gamma(\br)}{\Gamma^{\mathrm{B}}}, \\
\rho(\br) &=  \sum_{i=1}^{K} z_{i} C_{i}(\br).
\end{aligned}
% \right.
\end{equation}

In the rest of the paper, we only take into consideration the dimensionless equations.

\section{Steady state and its well-posedness}
\label{sec: equilibrium}

% {\color{red} I am confused with the three terms: 'equilibrium', ' steady state' and 'stationary problem'.}
In this section, we focus on the theoretical side of the equilibrium problem \eqref{eq: no-dim PNPB}. First of all, we show the self-adjointness and the kernel of the fourth-order operator $L$ of the 4PBik equation \eqref{eq: no-dim 4pbik}. The self-adjoint property is significant in the derivation of the first-order variation of the free energy functional. As mentioned in subsection \ref{subsec: 2.2}, the positivity of the void volume function $\Gamma$ is of vital importance. In order to show this property, we give four equivalent characterizations of the steady state of the PNPB system, from which the positivity of $\Gamma$ naturally follows. The existence of the weak solution to the equilibrium problem \eqref{eq: no-dim PNPB} can be proved by the calculus of variations. In addition, we show that the free energy functional is strictly convex. This ensures the uniqueness of the solution.

\subsection{Properties of the fourth-order operator $L$}
\label{subsec: operator L}

\subsubsection{Self-adjointness of the operator $L$}

Define the operator 
\begin{equation*}
\begin{aligned}
L : \mathcal{D}(L) &\to \mathcal{H} \\ %Ran(L) \subset \mathcal{H} \\
\phi &\mapsto \rho = L \phi = \nu^2 \left(\lambda^2 \Delta-I \right) \Delta \phi.
\end{aligned}
\end{equation*}
Here we consider $\mathcal{D}(L) = \{ u \in  H^4(\Omega), u|_{\partial \Omega} = \Delta u|_{\partial \Omega} = 0\}$ and $\mathcal{H} = L^2(\Omega)$, then $\overline{\mathcal{D}(L)} = \mathcal{H}$. \\ 

\begin{prop}
	The linear operator $L$ is symmetric and positive definite.
\end{prop}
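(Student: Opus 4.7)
The plan is to verify both symmetry and positive definiteness directly by repeated integration by parts, exploiting the fact that the boundary conditions $u|_{\partial\Omega} = \Delta u|_{\partial\Omega} = 0$ in $\mathcal{D}(L)$ are precisely the right ones to kill every boundary contribution when a copy of $\Delta$ is moved across the inner product.

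For symmetry, given $u, v \in \mathcal{D}(L)$, I write
\begin{equation*}
\langle Lu, v\rangle = \nu^2 \lambda^2 \int_\Omega \Delta^2 u \cdot v\, \mathrm{d}\br - \nu^2 \int_\Omega \Delta u \cdot v\, \mathrm{d}\br,
\end{equation*}
and apply Green's identity twice. For the lower-order term, since $u = v = 0$ on $\partial\Omega$, one gets $\int_\Omega \Delta u\cdot v = \int_\Omega u\,\Delta v$. For the biharmonic term I integrate by parts once to reach $\int_\Omega \Delta u\,\Delta v$ (the boundary terms vanish because $v|_{\partial\Omega}=0$ and $\Delta u|_{\partial\Omega}=0$), then once more to reach $\int_\Omega u\,\Delta^2 v$, using $u|_{\partial\Omega}=0$ and $\Delta v|_{\partial\Omega}=0$. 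Both pieces are symmetric in $(u,v)$, hence $\langle Lu, v\rangle = \langle u, Lv\rangle$.

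For positive definiteness, I specialize to $v = u$. The biharmonic term yields $\int_\Omega (\Delta u)^2\, \mathrm{d}\br$, and the Laplacian term yields $-\int_\Omega \Delta u \cdot u = \int_\Omega |\nabla u|^2$, so
\begin{equation*}
\langle Lu, u\rangle = \nu^2\Bigl[\lambda^2 \int_\Omega (\Delta u)^2\, \mathrm{d}\br + \int_\Omega |\nabla u|^2\, \mathrm{d}\br\Bigr] \geq 0.
\end{equation*}
Equality forces $\int_\Omega |\nabla u|^2 = 0$, hence $u$ is constant on each connected component of $\Omega$; combined with the Dirichlet condition $u|_{\partial\Omega}=0$ this gives $u\equiv 0$. (One should note in passing that both parameters $\nu, \lambda > 0$; the $\lambda^2$ coefficient is not strictly needed for positive definiteness on $\mathcal{D}(L)$ because the $|\nabla u|^2$ term already dominates, but it is needed for the natural $H^2$-coercivity that will matter later.)

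I do not anticipate a genuine obstacle here: the content is a textbook application of Green's formulas, and the chosen domain $\mathcal{D}(L)$ is tailored so that every boundary integral arising from integration by parts drops out. The only mild subtlety worth stating carefully is that one must perform the integration by parts in the correct order so that at each stage one of $u$, $v$, $\Delta u$, or $\Delta v$ vanishing on $\partial\Omega$ is available to eliminate the surface term; once this bookkeeping is set up, both symmetry and the inequality $\langle Lu, u\rangle > 0$ for $u\neq 0$ follow immediately.
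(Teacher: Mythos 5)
Your symmetry argument is exactly the paper's (integration by parts, with the boundary conditions $u|_{\partial\Omega}=\Delta u|_{\partial\Omega}=0$ killing every surface term), and your derivation of the identity $\langle Lu,u\rangle=\nu^2\lambda^2\int_\Omega(\Delta u)^2+\nu^2\int_\Omega|\nabla u|^2$ is also the same. The one substantive difference is the final step of positive definiteness: you conclude only that $\langle Lu,u\rangle>0$ for $u\neq 0$ (via ``zero gradient plus Dirichlet condition implies $u\equiv 0$''), whereas the paper applies the Poincar\'{e} inequality $\|u\|_{L^2(\Omega)}^2\leq C(\Omega)\|\nabla u\|_{L^2(\Omega)}^2$ to obtain the coercive bound $\langle Lu,u\rangle\geq C\|u\|_{L^2(\Omega)}^2$. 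For an unbounded operator these are not equivalent, and it is the coercive form that the paper actually needs downstream (the Friedrichs-extension argument in the next proposition uses $A\geq\beta>0$, and the invertibility of $L$ and the norm equivalence $\|\rho\|_L\sim\|\rho\|_{H^{-2}}$ in the well-posedness proof rest on the uniform lower bound, not on mere strict positivity). The gap is cosmetic rather than genuine: your own identity already gives $\langle Lu,u\rangle\geq\nu^2\int_\Omega|\nabla u|^2$, so one line of Poincar\'{e} upgrades your conclusion to the paper's. I would also drop the parenthetical suggesting the $|\nabla u|^2$ term ``already dominates''---it dominates $\|u\|_{L^2}^2$ only after Poincar\'{e} is invoked, which is precisely the step you omitted.
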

\begin{proof}
Define an inner product for the Hilbert space $\mathcal{H}$ as $(\mathbf{f}, \mathbf{g})_{\Omega}:= \int_{\Omega} \mathbf{f}(\br) \cdot \mathbf{g}(\br) \,\mathrm{d} \br$.
For all $\phi, \psi \in \mathcal{D}(L)$, via integration by parts and the definition of $\mathcal{D}(L)$, one has
$$
\left( L \phi, \psi \right)_{\Omega}  = \left(\phi, L \psi \right)_{\Omega}.
$$
% \begin{equation*}
% \begin{aligned}
% \left( L \phi, \psi \right)_{\Omega} 
% =& \left(\nu^2 \left(\lambda^2 \Delta-I \right) \Delta \phi, \psi \right)_{\Omega} \\
% =& ~ \nu^2 \lambda^2  \left( \Delta \phi, \Delta \psi \right)_{\Omega}
% +\nu^2 \lambda^2 \left( \frac{\partial \Delta \phi}{\partial n}, \psi \right)_{\partial \Omega}  
% -\nu^2 \lambda^2 \left(\Delta \phi, \frac{\partial \psi}{\partial n} \right)_{\partial \Omega} \\
% &-\nu^2  \left( \phi, \Delta \psi \right)_{\Omega}
% -\nu^2  \left( \frac{\partial\phi}{\partial n}, \psi \right)_{\partial \Omega}  
% +\nu^2 \left( \phi, \frac{\partial \psi}{\partial n} \right)_{\partial \Omega} \\
%  = &~ \nu^2 \lambda^2  \left( \phi, \Delta^2 \psi \right)_{\Omega}
% -\nu^2  \left( \phi, \Delta \psi \right)_{\Omega}
% -\nu^2  \left( \frac{\partial\phi}{\partial n}, \psi \right)_{\partial \Omega}  
% +\nu^2  \left( \phi, \frac{\partial \psi}{\partial n} \right)_{\partial \Omega} \\
% &+\nu^2 \lambda^2  \left[ \left( \frac{\partial \Delta \phi}{\partial n}, \psi \right)_{\partial \Omega}  
% - \left(\Delta  \phi, \frac{\partial \psi}{\partial n} \right)_{\partial \Omega} 
% + \left( \frac{\partial \phi}{\partial n}, \Delta \psi \right)_{\partial \Omega}  
% - \left(\phi, \frac{\partial \Delta  \psi}{\partial n} \right)_{\partial \Omega} \right] \\
% =& \left(\phi, \nu^2 \left(\lambda^2 \Delta-I \right) \Delta \psi \right)_{\Omega} = \left(\phi, L \psi \right)_{\Omega}.
% \end{aligned}
% \end{equation*}
Hence we can conclude that the operator $L$ is symmetric. 

Besides, for all $\phi \in \mathcal{D}(L)$,
\begin{equation}
\label{pd0}
\begin{aligned}
\left( L \phi, \phi \right)_{\Omega} 
&= \left(\nu^2 \left(\lambda^2 \Delta- I \right) \Delta \phi, \phi \right)_{\Omega} 
=  \nu^2 \lambda^{2} \left( \Delta^2 \phi, \phi \right)_{\Omega} - \nu^2 \left( \Delta \phi, \phi \right)_{\Omega}\\
& = \nu^2 \lambda^{2}  \left( \Delta \phi, \Delta \phi \right)_{\Omega} +\nu^2  \left( \nabla \phi, \nabla \phi \right)_{\Omega}.
\end{aligned}
\end{equation}
Due to Poincar\'{e} inequality, we have
\begin{align}
\int_{\Omega} |\phi(\br)|^2 \,\mathrm{d}\br \leq C(\Omega)\int_{\Omega} |\nabla \phi(\br)|^2 \,\mathrm{d}\br. \label{fri0}
\end{align} 
By (\ref{fri0}), (\ref{pd0}) becomes
\begin{equation}
\begin{aligned}
\left( L \phi, \phi \right)_{\Omega} 
& = \nu^2 \lambda^{2}  \left( \Delta \phi, \Delta \phi \right)_{\Omega} +\nu^2  \left( \nabla \phi, \nabla \phi \right)_{\Omega} \geq C (\phi, \phi)_{\Omega} = C \|\phi\|_{L^2(\Omega)}^2
\end{aligned}
\end{equation}
for some $C> 0$, which means that $L$ is positive definite.
\end{proof}

\begin{prop}
	The linear operator $L$ is self-adjoint.
\end{prop}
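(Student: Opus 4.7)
The plan is to promote the symmetry from the previous proposition to full self-adjointness by establishing that $\mathcal{D}(L^*) \subseteq \mathcal{D}(L)$. For a symmetric operator the standard shortcut is to prove surjectivity: if I can show $\mathrm{Ran}(L) = \mathcal{H}$, then for any $\psi \in \mathcal{D}(L^*)$ I may pick $\phi \in \mathcal{D}(L)$ with $L\phi = L^*\psi$, and the chain of identities $(L\eta,\psi)_\Omega = (\eta, L^*\psi)_\Omega = (\eta, L\phi)_\Omega = (L\eta,\phi)_\Omega$, valid for every $\eta \in \mathcal{D}(L)$, forces $\psi - \phi$ to be orthogonal to $\mathrm{Ran}(L) = \mathcal{H}$, hence $\psi = \phi \in \mathcal{D}(L)$.

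To obtain surjectivity I would decouple $L$ via the substitution $w = -\Delta\phi$, which rewrites the fourth-order equation $L\phi = f$ as the pair of second-order Dirichlet problems
\begin{equation*}
-\lambda^2 \Delta w + w = f/\nu^2, \quad w|_{\partial\Omega} = 0, \qquad \text{and} \qquad -\Delta\phi = w, \quad \phi|_{\partial\Omega} = 0.
\end{equation*}
Given $f \in L^2(\Omega)$, the bilinear form associated to the first equation is continuous and coercive on $H^1_0(\Omega)$, so Lax-Milgram yields a unique weak solution $w \in H^1_0(\Omega)$, and standard elliptic regularity lifts this to $w \in H^2(\Omega) \cap H^1_0(\Omega)$. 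The second problem then gives $\phi \in H^1_0(\Omega)$, and a further regularity step raises $\phi$ to $H^4(\Omega)$. Since $\Delta\phi|_{\partial\Omega} = -w|_{\partial\Omega} = 0$ by construction, we have $\phi \in \mathcal{D}(L)$ and $L\phi = \nu^2\lambda^2 \Delta^2\phi - \nu^2\Delta\phi = -\nu^2\lambda^2\Delta w + \nu^2 w = f$.

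The main obstacle I foresee is the regularity step that pushes $\phi$ up to $H^4(\Omega)$; this silently requires $\partial\Omega$ to be smooth enough (e.g.\ $C^4$). On a less regular domain one would instead realise the self-adjoint candidate as the Friedrichs extension of the coercive quadratic form $q(\phi) = \nu^2\lambda^2 \|\Delta\phi\|_{L^2}^2 + \nu^2 \|\nabla\phi\|_{L^2}^2$ that already appeared in the positive-definiteness estimate, and then verify that its domain coincides with $\mathcal{D}(L)$. Under the standing smoothness hypothesis on $\Omega$, however, the two-step splitting reduces the claim to two elementary elliptic problems and self-adjointness follows at once from the orthogonality argument in the first paragraph.
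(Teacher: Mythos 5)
Your proposal is correct, and it takes a genuinely different route from the paper. The paper invokes the ``symmetric plus $\mathrm{Ran}(L)=\mathcal{H}$'' criterion only for the whole-space case; on a bounded domain it instead realizes $\Delta^2$ as a self-adjoint operator through its Friedrichs extension (showing $\mathcal{D}(\hat A)=\mathcal{D}(A)$ via the elliptic estimate that bounds $\|u\|_{H^4(\Omega)}$ by $\|u\|_{L^2}$, $\|\Delta u\|_{L^2}$, $\|\Delta^2 u\|_{L^2}$) and then treats $-\nu^2\Delta$ as a relatively bounded perturbation, concluding by Kato--Rellich. You bypass both the Friedrichs machinery and perturbation theory by proving surjectivity of $L$ directly, splitting $L\phi=f$ into a screened Poisson problem for $w=-\Delta\phi$ followed by a Dirichlet Poisson problem for $\phi$ --- a decoupling that works precisely because the Navier-type conditions $\phi|_{\partial\Omega}=\Delta\phi|_{\partial\Omega}=0$ in $\mathcal{D}(L)$ hand each second-order problem its own homogeneous Dirichlet data, and which mirrors the factorization $\mathcal{K}=\frac{1}{\nu^2}(\Psi-\lambda^2\mathcal{W})$ the paper uses for the fundamental solution. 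Your argument is more elementary and self-contained; its cost is the same one the paper silently pays, namely enough boundary regularity to run elliptic regularity up to $H^4(\Omega)$, and your remark that on rougher domains one should fall back on the form $q(\phi)=\nu^2\lambda^2\|\Delta\phi\|_{L^2}^2+\nu^2\|\nabla\phi\|_{L^2}^2$ is essentially the paper's own strategy in disguise. One small point worth making explicit in your first paragraph: having shown $\psi=\phi\in\mathcal{D}(L)$, you should also record that $L^*\psi=L\phi=L\psi$, so that $L^*\subseteq L$, which together with symmetry ($L\subseteq L^*$) gives $L=L^*$.
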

\begin{proof}
Consider the whole space $\Omega = \mathbb{R}^d$. Combined the symmetry of the linear operator $L$ with $Ran(L) = \mathcal{H}$, the proposition holds.

Let $\Omega \subset \mathbb{R}^d$ be a bounded open region with smooth boundary. Let the operator $A = \Delta^2$, then $A$ is a closed linear symmetric operator. Denote $\hat{A}$ as the Friedrichs extension of $A$. Let
$$
a(u, v) = \int_{\Omega} \nabla u \overline{\nabla v} \mathrm{d} \br, ~~u, v \in \mathcal{D}(A) = \mathcal{D}(L).
$$
Due to Poincar\'{e} inequality, there exists $\beta > 0$, such that
$$
(Au, u) = \int_{\Omega} |\Delta u|^2 \mathrm{d} \br \geq \beta \int_{\Omega} |u(\br)|^2 \mathrm{d} \br,
$$
hence $A \geq \beta$ and 
$$
\!|\!|\!|u\!|\!|\!| = \|u\|_G, ~~ G = \{ u \in  H^2(\Omega), u|_{\partial \Omega} = \Delta u|_{\partial \Omega} = 0\}.
$$
$G$ is the closure of $\mathcal{D}(L)$ under the norm $\!|\!|\!| \cdot \!|\!|\!| $ in $\mathcal{D}(A)$. Hence $a$ can be expanded to a closed positive definite conjugate bilinear form on $G \times G$. 
Next consider the self-adjoint extension $\hat{A}$. 
$$
\mathcal{D}(\hat{A}) = \left\{ u \in G ~\big|~ \exists ~C_u > 0, \text{such that } \forall v \in G, \left|\int_{\Omega} \nabla u \overline{\nabla v} \mathrm{d} \br\right| \leq C_u \|v\|_{L^2(\Omega)} \right\},
$$
From Riesz theorem, there exists $f \in L^2(\Omega)$, such that
$$
\int_{\Omega} \nabla u \overline{\nabla v} \mathrm{d} \br = \int_{\Omega} f \overline{v} \mathrm{d} \br,
$$
then denote $f = \widetilde{\Delta^2} u$, thus $\hat{A} u = \widetilde{\Delta^2} u$.
Following from the estimates that $\|u\|_{H^4(\Omega)}$ can be dominated by $\|u\|_{L^2(\Omega)}, \|\Delta u\|_{L^2(\Omega)}, \|\widetilde{\Delta^2} u\|_{L^2(\Omega)}$ \cite{evans}, one has 
$$
\mathcal{D}(\hat{A}) = \mathcal{D}(A),
$$
which means $A$ is self-adjoint.
Furthermore, $-\Delta$ is a relatively tight operator of $\Delta^2$ and hence is bounded.
Following from the Kato-Rellich theorem, we obtain that $L$ is self-adjoint.
\end{proof}

Furthermore, the problem
\begin{equation*}
\begin{aligned}
L e(\br) = \nu^2\left(\lambda^2 \Delta-I \right) \Delta e(\br) &= 0, \br \text{ in } \Omega, \\
e &= 0, \br \text{ on } \partial \Omega, \\
\Delta e &= 0, \br \text{ on } \partial \Omega,
\end{aligned}
\end{equation*}
leads to $e(\br) \equiv 0$ in $\Omega$, hence $L$ is an injection. Consider a bounded open region $\Omega \subset \mathbb{R}^d$ with smooth boundary. Then $L$ is invertible, the inverse operator $L^{-1}: \mathcal{H} \to \mathcal{D}(L)$ is also self-adjoint.
% $$L^{-1}=(L^*)^{-1}=(L^{-1})^*$$

\subsubsection{Kernel of the operator $L$}

It is well-known that the fundamental solution to the Laplace equation 
\begin{equation}
\label{eq: Psi}
-\Delta \Psi (\br,\br^{\prime})=\delta (\br-\br^{\prime})
\end{equation}
is
\begin{equation*}
\Psi (\br,\br^{\prime})=\left\{
    \begin{array}{lll}
    -\frac{1}{2}|\br-\br^{\prime}|,\quad & d=1,\\[2mm]
    -\frac{1}{2\pi}\ln|\br-\br^{\prime}|,\quad & d=2,\\[2mm]
    \frac{1}{4\pi|\br-\br^{\prime}|},\quad & d=3.\\
    \end{array}\right.
\end{equation*}
And the fundamental solution to the screened Poisson equation 
\begin{equation}
\label{eq: W}
(I-\lambda^2\Delta)\mathcal{W} (\br,\br^{\prime})=\delta (\br-\br^{\prime})
\end{equation}
is
\begin{equation*}
    \mathcal{W}(\br,\br^{\prime})=\left\{
    \begin{array}{lll}
    \frac{\exp(-|\br-\br^{\prime}|/\lambda)}{2\lambda},\quad & d=1,\\[2mm]
    \frac{1}{2\pi\lambda^2}K_0(|\br-\br^{\prime}|/\lambda),\quad & d=2,\\[2mm]
    \frac{\exp(-|\br-\br^{\prime}|/\lambda)}{4\pi\lambda^2|\br-\br^{\prime}|},\quad & d=3,\\
    \end{array}\right.
\end{equation*}
where $K_0(r)$ is the modified Bessel function of the second kind.

Next, we have the following proposition.
\begin{prop}
Consider the equation
\begin{equation}
  L \mathcal{K}(\br,\br^{\prime})=\delta (\br-\br^{\prime}),
\label{eq: L}
\end{equation}
then the fundamental solution to \eqref{eq: L} is 
\begin{equation}
    \mathcal{K}(\br,\br^{\prime})=\frac{1}{\nu^2}(\Psi-\lambda^2 \mathcal{W})(\br,\br^{\prime})
    =\left\{
    \begin{array}{lll}
    -\frac{1}{2\nu^2}\left(|\br-\br^{\prime}|+\lambda\exp(-|\br-\br^{\prime}|/\lambda)\right),\quad & d=1,\\[2mm]
    -\frac{1}{2\pi\nu^2}\left(\ln|\br-\br^{\prime}| +K_0(|\br-\br^{\prime}|/\lambda)\right),\quad & d=2,\\[2mm]
    \frac{1-\exp(-|\br-\br^{\prime}|/\lambda)}{4\pi\nu^2|\br-\br^{\prime}|},\quad & d=3.\\
    \end{array}\right.
\label{eq: kernel}
\end{equation}
\end{prop}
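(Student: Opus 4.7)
The plan is to exploit the factorization $L=\nu^{2}(\lambda^{2}\Delta-I)\Delta$, which is a composition of two commuting second-order elliptic operators whose fundamental solutions, $\Psi$ and $\mathcal{W}$, have already been recalled in \eqref{eq: Psi} and \eqref{eq: W}. The natural candidate is suggested by the scalar partial fractions identity
\[
\frac{1}{\nu^{2}(\lambda^{2}s-1)s}
=\frac{1}{\nu^{2}}\!\left(\frac{\lambda^{2}}{\lambda^{2}s-1}-\frac{1}{s}\right),
\]
which, translated back into operator language with $s\leftrightarrow\Delta$, corresponds precisely to $\mathcal{K}=\frac{1}{\nu^{2}}(\Psi-\lambda^{2}\mathcal{W})$. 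So I would first record this guess as an ansatz and then verify $L\mathcal{K}=\delta$ by direct distributional calculation.

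The verification is short. Applying $\Delta$ and using the two defining identities $-\Delta\Psi=\delta$ and $(I-\lambda^{2}\Delta)\mathcal{W}=\delta$ (the latter rewritten as $\lambda^{2}\Delta\mathcal{W}=\mathcal{W}-\delta$) gives
\[
\Delta\mathcal{K}=\frac{1}{\nu^{2}}\bigl(-\delta-(\mathcal{W}-\delta)\bigr)=-\frac{1}{\nu^{2}}\mathcal{W}.
\]
Then applying $\nu^{2}(\lambda^{2}\Delta-I)$ to $-\frac{1}{\nu^{2}}\mathcal{W}$ and using $\lambda^{2}\Delta\mathcal{W}=\mathcal{W}-\delta$ once more yields exactly $\delta$. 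This proves $L\mathcal{K}=\delta$ without having to touch the explicit formulas.

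The remaining step is to substitute the explicit forms of $\Psi$ and $\mathcal{W}$ for $d=1,2,3$ into $\mathcal{K}=\frac{1}{\nu^{2}}(\Psi-\lambda^{2}\mathcal{W})$ and simplify. This is routine algebra: in $d=3$ the $\lambda^{2}$ cancels against the $\lambda^{2}$ in the denominator of $\mathcal{W}$, producing the clean form $\bigl(1-e^{-|\br-\br'|/\lambda}\bigr)/(4\pi\nu^{2}|\br-\br'|)$; in $d=1$ one obtains $-\bigl(|\br-\br'|+\lambda e^{-|\br-\br'|/\lambda}\bigr)/(2\nu^{2})$; and in $d=2$ the modified Bessel factor combines with the logarithm as claimed.

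The genuine subtlety, and the only place where care is needed, is justifying the above operator manipulations at the distributional level rather than treating them as purely formal. Since $\Psi$ and $\mathcal{W}$ are both tempered distributions on $\mathbb{R}^{d}$ that are smooth away from the diagonal and locally integrable, and since $\Delta$ and $(I-\lambda^{2}\Delta)$ commute on $\mathcal{S}'(\mathbb{R}^{d})$, the chain of identities $L\mathcal{K}=\nu^{2}(\lambda^{2}\Delta-I)\Delta\mathcal{K}=\delta$ is legitimate. I expect this to be the only conceptual obstacle; everything else reduces to bookkeeping with known fundamental solutions.
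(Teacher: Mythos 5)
Your proposal is correct and follows essentially the same route as the paper: both take the ansatz $\mathcal{K}=\frac{1}{\nu^{2}}(\Psi-\lambda^{2}\mathcal{W})$ and verify $L\mathcal{K}=\delta$ distributionally using the defining identities $-\Delta\Psi=\delta$ and $(I-\lambda^{2}\Delta)\mathcal{W}=\delta$, the only cosmetic difference being that you compute $\Delta\mathcal{K}$ first while the paper commutes the factors in a single chain. The partial-fraction motivation and the dimension-by-dimension substitution are both sound.
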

\begin{proof}
By substituting $\mathcal{K}(\br,\br^{\prime})=\frac{1}{\nu^2}(\Psi-\lambda^2 \mathcal{W})(\br,\br^{\prime})$ into the equation \eqref{eq: L}, one has
$$
\begin{aligned}
L \mathcal{K}(\br,\br^{\prime})=&\nu^2\left(\lambda^2 \Delta-I \right) \Delta \mathcal{K}(\br,\br^{\prime})\\
=&(I-\lambda^2\Delta)(-\Delta)(\Psi-\lambda^2 \mathcal{W})(\br,\br^{\prime})\\
 =&(I-\lambda^2\Delta)(-\Delta)\Psi(\br,\br^{\prime})-\lambda^2(-\Delta)(I-\lambda^2\Delta)\mathcal{W}(\br,\br^{\prime})\\
 =& (I-\lambda^2\Delta)\delta(\br-\br^{\prime})-\lambda^2(-\Delta)\delta(\br-\br^{\prime})\\
 =&\delta(\br-\br^{\prime}).
\end{aligned}
$$
The penultimate equation holds because of the equation \eqref{eq: Psi} and \eqref{eq: W}.
\end{proof}
With the kernel in hand, the solution to the 4PBik equation \eqref{eq: no-dim 4pbik} can be given by $\phi=\mathcal{K}*\rho$.

\subsection{The free energy functional $\mathcal{F}$ }

The dimensionless version of the Gibbs free energy functional $\mathcal{F}$ given in \eqref{eq: energy}-\eqref{eq: entropy energy} is 
\begin{equation}
\label{eq: no-dim energy}
\mathcal{F}(\mathbf{C}) =\mathcal{F}_{e l}(\mathbf{C})+\mathcal{F}_{e n}(\mathbf{C}),
\end{equation}
with  the dimensionless electric term 
\begin{equation}
\mathcal{F}_{e l}(\mathbf{C}) =\frac{1}{2} \int_{\Omega} \rho \phi \mathrm{d} \br=\frac{1}{2} \int_{\Omega} \rho L^{-1} \rho \mathrm{d} \br, \\ 
\end{equation}
and the dimensionless entropy term 
\begin{equation}
\mathcal{F}_{e n}(\mathbf{C}) = \int_{\Omega}\left\{\sum_{i=1}^{K+1} C_{i}\left(\ln \frac{C_{i}}{C_{i}^{\mathrm{B}}}-1\right)+\frac{\Gamma(\br)}{\eta v_{0}}\left(\ln \frac{\Gamma}{\Gamma^{\mathrm{B}}}-1\right)\right\} \mathrm{d} \br.
\end{equation}
% where $\mathcal{F}_{e l}(\mathbf{C}), \mathcal{F}_{e n}(\mathbf{C})$ are the dimensionless version of those in \eqref{eq: electric energy}-\eqref{eq: entropy energy}.
Correspondingly, the chemical potential $\mu_i$ of the $i^{\text{th}}$ species of \eqref{eq: no-dim energy} becomes
\begin{equation}
    \mu_i = \frac{\delta \mathcal{F}}{\delta C_i}= \ln \frac{ C_{i}}{C_{i}^{\mathrm{B}}}-\frac{v_{i}}{v_{0}} \ln \frac{\Gamma}{\Gamma^{\mathrm{B}}}+z_{i} \phi.
    \label{eq: chemical}
\end{equation}

In this subsection, we aim at studying the convexity of the free energy $\mathcal{F}$ in order to be prepared to the well-posedness of the equillibrium system \eqref{eq: no-dim PNPB}. To start with, the positivity of $\Gamma$ is needed to make sense of the physical problem. So, we characterize the steady state of PNPB model by four equivalent statements firstly. This allows for a rigorous proof of $\Gamma>0$. Then the convexity of $\mathcal{F}$ can be concluded by simple computation.

\subsubsection{Four equivalent statements for the steady state}

In this part, we give four equivalent statements for the steady state of the PNPB system.
\begin{prop}\label{prop3}%(Four equivalent statements for the positive steady state)
	Assuming that $\bar C_i \in L^1\cap L\log L$ is bounded with $\int_{\Omega}\bar{C}_i\mathrm{d}\br=m_i^0$, $\bar C_i \in C(\Omega)$, $\bar C_i > 0$ in $\Omega$. %and $\bar C_i$ decays at infinity for all $i$. 
	Let $\bar\mu_i = \ln \frac{\bar C_{i}(\br)}{C_{i}^{B}}-\frac{v_{i}}{v_{0}} \ln \frac{\bar\Gamma(\br)}{\Gamma^{B}}+z_{i} \bar\phi(\br)$, where $ \bar \Gamma(\br) = 1-\eta\sum_{i = 1}^{K+1} v_i \bar C_i(\br) $, $~ \bar \phi(\br)$ satisfies $L \bar \phi(\br) = \bar \rho(\br), ~\bar \rho(\br) = \sum_{i = 1}^{K} z_i \bar  C_i(\br)$. Further assume 
	$n \cdot D_i\bar{C}_{i}\nabla \bar{\mu}_i=0$ on $\partial \Omega$.
	Then the following four statements are equivalent:
	\begin{enumerate}[(i)]
	   
		\item Equilibrium: $\bar\mu_i \in \dot{H}^1(\Omega)$ and $\nabla \cdot (D_i\bar C_i \nabla \bar \mu_i)$ $= 0$ in $H^{-1}(\Omega)$, $\forall ~i = 1, \cdots, K+1$. 
% 		\begin{align}
% 		&\bar\mu_m = 1 + \log \bar C_m + z_m \mathcal{K}*\bar \rho + \mathcal{W}* \bar\eta \label{nmu+2}, \\
% 		&\bar \rho = \sum_{m = 1}^M z_m \bar C_m, \quad \bar \eta = \sum_{m = 1}^M \bar C_m.
% 		\end{align}
		\item No dissipation: $\sum_{i = 1}^{K+1}\int_{\Omega}D_i \bar C_i|\nabla \bar\mu_i|^2 \,\mathrm{d} \br = 0$.
		\item $\left( \bar C_1, \cdots, \bar C_{K+1} \right)$ is a critical point of $\mathcal{F}$.
		\item $\bar\mu_i$ is a constant, $\forall~ i = 1, \cdots, K+1$.
	\end{enumerate}
	
\end{prop}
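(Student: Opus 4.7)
The plan is to establish a cyclic implication (i) $\Rightarrow$ (ii) $\Rightarrow$ (iv) $\Rightarrow$ (i), and then to verify (iii) $\Leftrightarrow$ (iv) by recognising $\bar\mu_i$ as the variational derivative $\delta\mathcal{F}/\delta C_i$ computed in \eqref{eq: chemical} and applying Lagrange multipliers for the mass constraints $\int_\Omega \bar C_i\,\mathrm{d}\br=m_i^0$. The only non-formal ingredient is the self-adjointness of $L^{-1}$ established in Subsection~\ref{subsec: operator L}, which is what allows $\delta\mathcal{F}_{el}/\delta C_i$ to collapse to $z_i\bar\phi$ without an awkward symmetrisation.

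For (i) $\Rightarrow$ (ii), I would test the equation $\nabla\cdot(D_i\bar C_i\nabla\bar\mu_i)=0$ against $\bar\mu_i$ itself in the $H^{-1}$/$\dot H^1$ duality. Integration by parts, together with the assumed no-flux condition $\mathbf{n}\cdot D_i\bar C_i\nabla\bar\mu_i=0$ on $\partial\Omega$, kills the boundary term and yields $\int_\Omega D_i\bar C_i|\nabla\bar\mu_i|^2\,\mathrm{d}\br=0$ for each $i$; summing gives (ii). Boundedness of $\bar C_i$ ensures $D_i\bar C_i\nabla\bar\mu_i\in L^2(\Omega)$ so that the pairing is legitimate. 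For (ii) $\Rightarrow$ (iv), continuity and strict positivity of $\bar C_i$ together with $D_i>0$ make each summand in (ii) nonnegative, so each vanishes separately; this forces $\nabla\bar\mu_i\equiv 0$, hence $\bar\mu_i$ is constant on $\Omega$ (assumed connected). The implication (iv) $\Rightarrow$ (i) is immediate: constancy of $\bar\mu_i$ gives $\nabla\bar\mu_i=0$ pointwise, so the divergence vanishes in $H^{-1}(\Omega)$.

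For (iii) $\Leftrightarrow$ (iv), using self-adjointness of $L^{-1}$ the first variation of $\mathcal{F}_{el}=\tfrac12\int_\Omega\rho L^{-1}\rho\,\mathrm{d}\br$ is $z_i\bar\phi\cdot\delta C_i$, and direct differentiation of $\mathcal{F}_{en}$ (with the steric contribution in $\Gamma$ producing exactly the $-\frac{v_i}{v_0}\ln(\bar\Gamma/\Gamma^{\mathrm{B}})$ term) recovers \eqref{eq: chemical}, i.e.\ $\delta\mathcal{F}/\delta C_i=\bar\mu_i$. Admissible variations are constrained to $\int_\Omega\delta C_i\,\mathrm{d}\br=0$. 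The Lagrange multiplier principle then gives that $(\bar C_1,\ldots,\bar C_{K+1})$ is a constrained critical point of $\mathcal{F}$ iff there exist constants $\lambda_1,\ldots,\lambda_{K+1}$ with $\bar\mu_i\equiv\lambda_i$ in $\Omega$, which is precisely (iv).

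The main technical obstacle is the regularity bookkeeping around (i) and (iii). The hypothesis $\bar C_i\in L^1\cap L\log L$ suffices to define $\mathcal{F}_{en}$ and its Gateaux variation, but (i) implicitly requires $\bar\mu_i\in\dot H^1(\Omega)$, hence $\nabla\bar C_i/\bar C_i$ and $\nabla\bar\Gamma/\bar\Gamma$ in $L^2$; this stronger regularity must be extracted from the Fermi-like algebraic relation and elliptic estimates applied to $L\bar\phi=\bar\rho$. A parallel subtlety is that $\mathbf{n}\cdot D_i\bar C_i\nabla\bar\mu_i$ is only meaningful once $D_i\bar C_i\nabla\bar\mu_i\in H(\mathrm{div};\Omega)$, so the no-flux condition must be interpreted as a trace identity. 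Both issues are regularity-and-trace questions rather than conceptual gaps, and so will not obstruct the cyclic argument once the function-space setting is fixed.
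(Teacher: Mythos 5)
Your proposal is correct and follows essentially the same route as the paper: the cycle (i)\,$\Rightarrow$\,(ii)\,$\Rightarrow$\,(iv)\,$\Rightarrow$\,(i) via testing against $\bar\mu_i$ and using positivity of $\bar C_i$, together with (iii)\,$\Leftrightarrow$\,(iv) obtained by identifying $\bar\mu_i$ with $\delta\mathcal{F}/\delta C_i$ and testing against mean-zero variations (your Lagrange-multiplier phrasing is equivalent to the paper's direct orthogonality argument). Your closing remarks on regularity and trace interpretation go beyond what the paper records but do not change the argument.
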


\begin{proof}
	At first, we prove (i)$\Rightarrow$(ii). Since $\bar{\mu}_{i} \in H^{1}(\Omega), \nabla \cdot\left(D_i \bar C_{i} \nabla \bar{\mu}_{i} \right)=0$ in $H^{-1}\left(\Omega\right)$, $ C_{0}^{\infty}(\Omega)$ is dense in $\dot{H}^1(\Omega)$ and $\bar C_i$ is bounded, one has
	\begin{equation}
	0 = \int_{\Omega} \bar{\mu}_{i} \nabla \cdot\left( D_i \bar{C}_{i} \nabla \bar{\mu}_{i} \right) \mathrm{d} \br = -\int_{\Omega} D_i\bar{C}_{i}\left| \nabla \bar{\mu}_{i}\right|^{2} \mathrm{d} \br + \int_{\partial \Omega}\bar{\mu}_i n \cdot D_i\bar{C}_{i}\nabla \bar{\mu}_i\mathrm{d}\sigma, \quad \forall ~i = 1, \cdots, K+1.
	\end{equation}
	Hence (ii) holds.
	
	Next we prove (iii)$\Leftrightarrow$(iv). Notice that $\left( \bar C_1, \cdots, \bar C_{K+1}\right) $ is a critical point of $\mathcal{F}$ if and only if 
	\begin{equation}
	\frac{\mathrm{d}}{\mathrm{d} \epsilon}\bigg|_{\epsilon=0} \mathcal{F}\left(\left(\bar C_i, \cdots, \bar C_i + \epsilon V_i, \cdots, \bar C_{K+1} \right)'\right)=0, \quad \forall ~ V_i \in C_{0}^{\infty}(\Omega)
 	\text { with } \int_{\Omega} V_i(\br) \,\mathrm{d} \br = 0.
	\end{equation}
	Therefore,
	\begin{equation*}
	 \int_{\Omega}  z_i V_i(\br)\bar \phi(\br) \mathrm{d} \br + \int_{\Omega}\left\{ V_i(\br) \ln \frac{\bar C_{i}(\br)}{C_{i}^{\mathrm{B}}}-\frac{v_i}{v_{0}} V_i(\br) \ln \frac{\bar \Gamma(\br)}{\Gamma^{\mathrm{B}}}\right\} \mathrm{d} \br=0.
	\end{equation*}
	Equivalently,
	\begin{equation}
	\int_{\Omega} \bar{\mu}_{i} V_i \,\mathrm{d} \br = 0, \quad \forall~ V_i \in C_{0}^{\infty}(\Omega),
	\end{equation}
	which implies $\bar\mu_i $ is a constant, $\forall~ i = 1, \cdots, K+1$.
	
	Then we prove (ii)$\Rightarrow$(iv). Suppose $\sum_{i = 1}^{K+1} \int_{\Omega}D_i\bar C_i|\nabla \bar\mu_i|^2 \,\mathrm{d} \br = 0$. It follows from $\bar C_i > 0$ at any point
	$\br_{0} \in \Omega$ that $\nabla \bar{\mu}_{i}=0$ in $\Omega$ and thus $\bar {\mu}_i$ is a constant for all $i = 1, \cdots, K+1$.
	
	Hence we complete the proof for (ii)$\Rightarrow$(iii) and (iii)$\Rightarrow$(iv).
	
	Finally (i) is a direct consequence of (iv), thus (iv) $\Rightarrow$ (i). 
\end{proof}

\subsubsection{The positivity of the void volume function $\Gamma(\br)$}

With the equivalent characterizations of the steady state in hand, we are ready to show the positivity of $\Gamma$ at equilibrium. Note that this has already been stated formally in \cite{LiuEisenberg20}. 
\begin{prop}\label{prop4}
 If $1-\eta\sum\limits_{i=1}^{K+1}v_i m_i^0>0$, then $\Gamma(\br) =1-\eta\sum\limits_{i=1}^{K+1} v_{i} C_{i}(\br)>0$ at the steady state.
\end{prop}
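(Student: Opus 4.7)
The plan is to argue by contradiction, combining Proposition~\ref{prop3} with the mass conservation identity. The key mechanism: in the Fermi-like form of $C_i$ the factor $(\Gamma/\Gamma^{\mathrm{B}})^{v_i/v_0}$ with $v_i/v_0>0$ forces each $C_i$ to vanish whenever $\Gamma$ approaches zero, but then $\Gamma = 1 - \eta\sum v_i C_i$ must return to $1$ at that point, which is the contradiction.

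Concretely, I would first record that by the mass conservation relation,
\[
\int_\Omega \Gamma(\br)\,\mathrm{d}\br \;=\; 1 - \eta \sum_{i=1}^{K+1} v_i m_i^0 \;>\; 0
\]
by hypothesis, so $\Gamma$ must be strictly positive on a set of positive measure. Assume for contradiction that $\Gamma(\br_*) \le 0$ at some $\br_* \in \Omega$. Using continuity of the $\bar C_i$ (a standing hypothesis in Proposition~\ref{prop3}) together with connectedness of the bounded domain $\Omega$, continuity of $\Gamma$ and the intermediate value theorem supply a point $\br_0 \in \Omega$ with $\Gamma(\br_0) = 0$.

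Next I would invoke statement (iv) of Proposition~\ref{prop3}, namely that each $\bar\mu_i$ is constant. Rearranging this identity on the open set $\{\br : \Gamma(\br) > 0\}$ yields the Fermi-like formula
\[
C_i(\br) \;=\; C_i^{\mathrm{B}} \exp\bigl(-z_i \phi(\br) + \mathrm{const}_i\bigr)\left(\frac{\Gamma(\br)}{\Gamma^{\mathrm{B}}}\right)^{v_i/v_0}.
\]
Pick a sequence $\br_n \to \br_0$ with $\Gamma(\br_n) > 0$. The potential $\phi$, being the solution to the 4PBik equation with bounded source $\rho$, is bounded in a neighborhood of $\br_0$ by elliptic regularity for $L$ (cf.\ Section~\ref{subsec: operator L}); since $v_i/v_0 > 0$, this forces $C_i(\br_n) \to 0$ for every $i$. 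Continuity of $C_i$ then gives $C_i(\br_0) = 0$, whence $\Gamma(\br_0) = 1 - \eta \sum v_i \cdot 0 = 1$, contradicting $\Gamma(\br_0) = 0$.

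The main obstacle is justifying the regularity needed to run this pointwise argument: continuity of the $C_i$ (hence of $\Gamma$), boundedness of $\phi$ near $\br_0$, and connectedness of $\Omega$. Continuity of $C_i$ is by assumption, boundedness of $\phi$ follows from regularity for the invertible, self-adjoint operator $L$ established earlier, and connectedness is natural for the physical setting. A subtle point worth flagging is the asymmetric role of the sign of $v_i/v_0$: if these exponents were negative, $\Gamma \to 0$ would instead make $C_i$ blow up and the argument would collapse, so the proof genuinely depends on the strict positivity $v_i > 0$ for every species.
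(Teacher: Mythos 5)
Your proof is correct and rests on the same mechanism as the paper's: at the steady state the Fermi factor $(\Gamma/\Gamma^{\mathrm{B}})^{v_i/v_0}$ forces every $C_i$ to vanish wherever $\Gamma$ vanishes, whence $\Gamma = 1 - \eta\sum_i v_i C_i = 1$ there, a contradiction. The execution differs in two minor but genuine ways. The paper works on a set $D$ of positive measure where $\Gamma^* = 0$ and uses statement (iii) of Proposition \ref{prop3} (the critical-point identity tested against mean-zero $V_i$), arguing that the term $\int_{\Omega} V_i \ln(\Gamma^*/\Gamma^{\mathrm{B}})\,\mathrm{d}\br$ diverges unless $C_i^* = 0$ on $D$; you instead work pointwise, use statement (iv) (constancy of $\bar\mu_i$) to write the explicit Fermi-like formula, and pass to the limit along a sequence approaching a zero of $\Gamma$ produced by the intermediate value theorem. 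Your version leans more heavily on the continuity hypothesis $\bar C_i \in C(\Omega)$ and on $\br_0$ lying in the closure of $\{\Gamma > 0\}$ (which your IVT construction does supply), but in exchange it explicitly covers the possibility $\Gamma \le 0$, which the paper's choice of $D = \{\br : \Gamma^*(\br) = 0\}$ does not literally address, and it avoids the paper's somewhat delicate claim about a divergent integral against sign-changing test functions. Both arguments share the same unaddressed subtlety: Proposition \ref{prop3} presupposes that $\bar\mu_i$, hence $\ln\bar\Gamma$, is well defined, so invoking it while entertaining $\Gamma = 0$ is formally circular; neither proof resolves this, and yours is no worse off than the paper's on that point.
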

\begin{proof}
    Consider the set $D=\{\br:\Gamma^*(\br) =1-\eta\sum\limits_{i=1}^{K+1} v_{i} C^*_{i}(\br) = 0\}$.
	Assume that at the steady state, measure $m(D) > 0$, where $C^*_i$ and $\Gamma^*$ are the concentration and void volume function of the steady state respectively.	
	
    First of all, consider the case where $C_i(\br)\equiv m_i^0$, $\Gamma(\br)\equiv 1-\eta\sum\limits_{i=1}^{K+1}v_i m_i^0:=\Gamma_c$. The free energy
	$$\mathcal{F}_c = \frac{1}{2} \int_{\Omega} \rho_c \phi_c \mathrm{d} \br + \int_{\Omega}\left\{\sum_{i=1}^{K+1} m_i^0\left(\ln \frac{m_i^0}{C_i^{\mathrm{B}}}-1\right)+\frac{\Gamma_c}{\eta v_{0}}\left(\ln \frac{\Gamma_c}{\Gamma^{\mathrm{B}}}-1\right)\right\} \mathrm{d} \br,$$
	where $L\phi_c=\rho_c$ with $\rho_c=\sum\limits_{i=1}^{K}z_i m_i^0$. This implies
	$\mathcal{F}^*\leq \mathcal{F}_c < +\infty$.

	According to Proposition \ref{prop3}, $\left(  C^*_1, \cdots, C^*_{K+1} \right)$ is a critical point of $\mathcal{F}$, which means that $\forall~ V_i \in C_{0}^{\infty}(\Omega)$ with $\int_{\Omega} V_i(\br) \,\mathrm{d} \br = 0$,
	\begin{equation}
	\label{eq: test}
	\int_{\Omega} z_i \phi^*(\br)V_i(\br) \mathrm{d} \br + \int_{\Omega}\left\{ V_i(\br) \ln \frac{ C^*_{i}(\br)}{C_{i}^{\mathrm{B}}}-\frac{v_i}{v_{0}} V_i(\br) \ln \frac{\Gamma^*(\br)}
	{\Gamma^{\mathrm{B}}}\right\} \mathrm{d} \br=0, ~~i = 1, \cdots, K+1,
	\end{equation}
	where $L \phi^*(\br) = \rho^*(\br)$ and $\rho^*(\br) = \sum\limits_{i=1}^{K} z_{i} C_{i}^*(\br)$.
	 
	 It follows from the assumption that 
	 $\int_{\Omega}\frac{v_i}{v_{0}} V_i(\br) \ln \frac{ \Gamma^*(\br)}{\Gamma^{\mathrm{B}}} \mathrm{d} \br$ tend to infinity. Since $\phi^*$ is the solution to $L \phi^*(\br) = \rho^*(\br)$, $\phi^*$ is bounded. Hence, equation (\ref{eq: test}) holds only if $C_i^*(\br)=0$ in $D$ for all $i=1,\cdots, K+1$. As a consequence, $\Gamma^*(\br)=1-\eta\sum\limits_{i=1}^{K+1} v_{i} C^*_{i}(\br)=1$ in $D$, which leads to a contradictory. Thus $m(D) = 0$. 
\end{proof}

\begin{cor}
At the equilibrium, $0<C_i<\frac{1}{\eta v_i}$, for each $i=1,\cdots, K+1$.
\end{cor}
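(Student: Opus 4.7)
The plan is to derive both bounds directly from Proposition \ref{prop4} together with the Fermi-like distribution \eqref{eq: fermi} (in its dimensionless form in \eqref{eq: no-dim PNPB}), so the corollary becomes essentially a bookkeeping consequence of what has already been proved.

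For the upper bound, I would argue as follows. Proposition \ref{prop4} gives $\Gamma(\br) > 0$ at equilibrium, i.e. $1 - \eta\sum_{j=1}^{K+1} v_j C_j(\br) > 0$. Since each $C_j$ is of the form $C_j^{\mathrm{B}} \exp(\cdot) > 0$ (using $C_j^{\mathrm{B}} > 0$ and finiteness of the exponent, justified in the next step) and each $v_j > 0$, every term in the sum $\eta\sum_j v_j C_j(\br)$ is nonnegative. In particular, dropping all terms but the $i^{\text{th}}$ one yields
\begin{equation*}
\eta v_i C_i(\br) \;\leq\; \eta\sum_{j=1}^{K+1} v_j C_j(\br) \;<\; 1,
\end{equation*}
which gives $C_i(\br) < 1/(\eta v_i)$.

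For the lower bound, I would invoke the Fermi-like distribution from \eqref{eq: no-dim PNPB},
\begin{equation*}
C_i(\br) \;=\; C_i^{\mathrm{B}} \exp\!\left(-z_i \phi(\br) + \frac{v_i}{v_0} S(\br)\right),
\end{equation*}
together with $S(\br) = \ln(\Gamma(\br)/\Gamma^{\mathrm{B}})$. Since Proposition \ref{prop4} rules out $\Gamma(\br) = 0$ on a positive-measure set, $S$ is finite almost everywhere in $\Omega$. Combined with $C_i^{\mathrm{B}} > 0$, this immediately gives $C_i(\br) > 0$ wherever the exponent is defined, i.e.\ almost everywhere in $\Omega$ (and pointwise under the continuity assumption used in Proposition \ref{prop3}).

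The only subtle point I anticipate is to make sure the exponent is actually finite, not just $\Gamma > 0$ in the weak sense of Proposition \ref{prop4}: one needs $\phi$ to be bounded (which follows from $\phi = \mathcal{K} * \rho$ via the kernel computed in \eqref{eq: kernel}, together with the fact that $\rho = \sum_{i=1}^{K} z_i C_i$ is integrable by the mass conservation constraint) and one needs $\Gamma$ to be bounded away from $0$ on the set where $C_i$ is being evaluated. Apart from this regularity remark, the corollary is a one-line consequence of the positivity of $\Gamma$ together with the explicit exponential form of $C_i$ in equilibrium.
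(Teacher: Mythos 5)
Your proof is correct and the upper bound is obtained exactly as in the paper: from $\Gamma>0$ (Proposition \ref{prop4}) one drops all but the $i^{\text{th}}$ term in $\eta\sum_j v_j C_j<1$ using $v_j,C_j>0$. The only difference is that the paper simply takes $C_i>0$ as part of the equilibrium setting, whereas you justify it from the Fermi-like distribution $C_i=C_i^{\mathrm{B}}\exp(\cdot)$; this is a harmless extra step, not a different argument.
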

\begin{proof}
It follows from the positivity of $\Gamma$ that
$\eta\sum\limits_{i=1}^{K+1}v_i C_i<1$.
Hence, $C_i<\frac{1}{\eta v_i}, ~ i=1,\cdots, K+1$ as $v_i, C_i>0$.
\end{proof} 
This tells that $C_i$ can not reach or exceed the maximum concentration $\frac{1}{\eta v_i}$. This is exactly the saturation phenomenon.
We emphasize that the classical PB theory with point charge assumption fails to describe the concentration saturation as the Boltzmann distribution may produce an infinite concentration when the electric potential tends to infinity. This is a deficiency of PB theory for modeling a system with strong local electric fields or interactions.

\subsubsection{The convexity of free energy functional $\mathcal{F}$}
To show the well-posedness of the equilibrium of the PNPB system, we give the convexity property of the free energy $\F$.

For all $\boldsymbol{V} = (V_1, \cdots, V_{K+1})$, denote $\rho_{\epsilon}(\br) = \sum\limits_{i=1}^{K} z_{i} (C_{i} + \epsilon V_i)(\br)$, $\Gamma_{\epsilon}(\br) = 1-\eta\sum\limits_{i=1}^{K+1} v_{i} (C_{i} + \epsilon V_i)(\br)$, then
\begin{equation*}
\begin{aligned}
&\frac{\mathrm{d}^2}{\mathrm{d} \epsilon^2}  {\mathcal{F}(\boldsymbol{C} + \epsilon \boldsymbol{V})} =\frac{\mathrm{d}^2}{\mathrm{d} \epsilon^2} \left[\mathcal{F}_{e l}(\boldsymbol{C} + \epsilon \boldsymbol{V})+\mathcal{F}_{e n}(\boldsymbol{C} + \epsilon \boldsymbol{V})\right]\\
=~& \frac{\mathrm{d}^2}{\mathrm{d} \epsilon^2} \left[ \frac{1}{2} \int_{\Omega} \rho_{\epsilon} L^{-1} \rho_{\epsilon} \mathrm{d} \br 
+  \int_{\Omega}\left\{\sum_{i=1}^{K+1} (C_{i} + \epsilon V_i)\left(\ln \frac{(C_{i} + \epsilon V_i)}{C_{i}^{\mathrm{B}}}-1\right)+\frac{\Gamma_{\epsilon}}{\eta v_{0}}\left(\ln \frac{\Gamma_{\epsilon}}{\Gamma^{\mathrm{B}}}-1\right)\right\} \mathrm{d} \br \right] \\
=~& \frac{\mathrm{d}}{\mathrm{d} \epsilon} \left[ \int_{\Omega} \left(\sum_{i = 1}^{K} z_i V_i\right) L^{-1} \left(\rho_{\epsilon}\right) \mathrm{d} \br + \int_{\Omega}\left\{\sum_{i=1}^{K+1} V_i \ln \frac{(C_{i} + \epsilon V_i)}{C_{i}^{\mathrm{B}}}-\frac{1}{v_{0}} \left[ \sum_{i=1}^{K+1} v_i V_i\right] \ln \frac{\Gamma_{\epsilon}}{\Gamma^{\mathrm{B}}}\right\} \mathrm{d} \br \right] \\
=~& \int_{\Omega} \left( \sum_{i = 1}^{K} z_i V_i \right) L^{-1} \left(\sum_{i = 1}^{K} z_i V_i\right) \mathrm{d} \br
+ \int_{\Omega}\left\{\sum_{i=1}^{K+1} \frac{V_i^2}{(C_i + \epsilon V_i)} + \left[ \sum_{i=1}^{K+1} v_i V_i \right]^2 \frac{\eta}{v_0 \Gamma_{\epsilon}}\right\} \mathrm{d} \br. 
\end{aligned}
\end{equation*}
Particular attention is paid to the limit case of zero $\epsilon$, then one has
\begin{equation}
\begin{aligned}
\frac{\mathrm{d}^2}{\mathrm{d} \epsilon^2}  \bigg|_{\epsilon = 0} {\mathcal{F}(\boldsymbol{C} + \epsilon \boldsymbol{V})} 
= \int_{\Omega} \left( \sum_{i = 1}^{K} z_i V_i \right) L^{-1} \left(\sum_{i = 1}^{K} z_i V_i\right) +\left\{\sum_{i=1}^{K+1} \frac{V_i^2}{C_i} + \left[ \sum_{i=1}^{K+1} v_i V_i\right]^2 \frac{\eta}{v_0 \Gamma}\right\} \mathrm{d} \br. 
\end{aligned}
\label{eq: 2nd var}
\end{equation}

From \eqref{eq: 2nd var}, we can conclude that if $\Gamma(\br)$ and $C_i(\br), i=1,\cdots,K+1$, are well-defined in $\Omega$, then 
$\frac{\mathrm{d}^2}{\mathrm{d} \epsilon^2} \big|_{\epsilon = 0} {\mathcal{F}(\boldsymbol{C} + \epsilon \boldsymbol{V})}>0$. 
Hence, it follows from Proposition \ref{prop4} that the free energy functional $\mathcal{F}$ is strictly convex at the equilibrium. 

\subsection{Well-posedness}

% It is well-known that the weak solutions to a nonlinear partial differential equation can be identified as a variational problem, which is finding the critical points of the corresponding functional. 
Motivated by \cite{Li09}, where the well-posedness of the solution to the PB equation is proved.
We show the well-posedness of problem \eqref{eq: no-dim PNPB} by discovering the unique minimizer of the free energy functional $\mathcal{F}$ using the calculus of variations \cite{Haim_Brezis}.

\begin{thm}
If $1-\eta\sum\limits_{i=1}^{K+1}v_i m_i^0>0$, there exists a unique weak solution to the equilibrium of the PNPB system \eqref{eq: no-dim PNPB}.
\end{thm}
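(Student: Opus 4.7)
The plan is to realize solutions of \eqref{eq: no-dim PNPB} as minimizers of the free energy $\mathcal{F}$ over an admissible class encoding the mass and positivity constraints, and then identify critical points with equilibria via Proposition \ref{prop3}. Concretely, I would work on
$$\mathcal{A}=\left\{\mathbf{C}\in L^{\infty}(\Omega)^{K+1}:\ C_i\ge 0\text{ a.e.},\ \int_\Omega C_i\,\mathrm{d}\br=m_i^0,\ 1-\eta\sum_{i=1}^{K+1} v_i C_i\ge 0\text{ a.e.}\right\}.$$
The hypothesis $1-\eta\sum_i v_i m_i^0>0$ makes $\mathcal{A}$ nonempty (the constant profile $C_i\equiv m_i^0$ lies in it) and forces the a priori pointwise bound $0\le C_i\le 1/(\eta v_i)$ on every admissible $\mathbf{C}$.

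Existence I would obtain by the direct method. The electric part $\tfrac{1}{2}\int\rho L^{-1}\rho\,\mathrm{d}\br$ is nonnegative because $L^{-1}$ is positive self-adjoint, and the entropy integrands $x(\ln x - 1)$ are bounded below on the bounded intervals determined by the pointwise constraint, so $\mathcal{F}$ is bounded below on $\mathcal{A}$. Any minimizing sequence $\{\mathbf{C}^n\}\subset\mathcal{A}$ is uniformly $L^\infty$-bounded, so along a subsequence $\mathbf{C}^n\rightharpoonup^*\mathbf{C}^*$ in weak-$*$ $L^\infty$; the affine constraints defining $\mathcal{A}$ pass to the limit, so $\mathbf{C}^*\in\mathcal{A}$. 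Lower semicontinuity then gives $\mathcal{F}(\mathbf{C}^*)\le\liminf_n\mathcal{F}(\mathbf{C}^n)$: the electric term is a convex continuous quadratic form in $\rho$ in the inner product induced by $L^{-1}$ and is therefore weakly lower semicontinuous, and the entropy term is weakly-$*$ lower semicontinuous by the standard theory of integral functionals with convex integrands.

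Uniqueness follows from strict convexity. By Proposition \ref{prop4}, any minimizer $\mathbf{C}^*$ has $\Gamma^*>0$ a.e. in $\Omega$, so the second-variation formula \eqref{eq: 2nd var} is strictly positive on every nonzero admissible variation $\mathbf{V}=(V_1,\dots,V_{K+1})$ with $\int_\Omega V_i\,\mathrm{d}\br=0$. Hence two distinct minimizers cannot coexist. To identify $\mathbf{C}^*$ with a solution of \eqref{eq: no-dim PNPB}, I would write down the constrained Euler--Lagrange equations: there exist Lagrange multipliers $\mu_i^*\in\mathbb{R}$ with $\delta\mathcal{F}/\delta C_i=\mu_i^*$ a.e., which by \eqref{eq: chemical} says the chemical potentials are constant. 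Proposition \ref{prop3}, (iii)$\Leftrightarrow$(iv), then yields the equilibrium, and absorbing $e^{\mu_i^*}$ into $C_i^{\mathrm{B}}$ recovers the Fermi-like distribution \eqref{eq: fermi}, while $\phi^*:=L^{-1}\rho^*$ and $S^*:=\ln(\Gamma^*/\Gamma^{\mathrm{B}})$ close the system.

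The hard part will be treating the entropy term $\Gamma\ln\Gamma$ near where $\Gamma$ could vanish, since its derivative $-\ln\Gamma/(\eta v_0)$ blows up. Convexity of the integrand is enough to secure weak-$*$ lower semicontinuity of the entropy, so existence is not at risk; however, turning formal variations into genuine Euler--Lagrange equations requires the strict positivity $\Gamma^*>0$ a.e., which is precisely Proposition \ref{prop4}. Without that proposition one cannot differentiate the entropy at the minimizer, and strict convexity --- and hence uniqueness --- would be unavailable. It is therefore the positivity result of Proposition \ref{prop4} that carries the heaviest weight in the whole argument.
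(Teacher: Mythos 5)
Your argument is correct in its overall architecture and reaches the same conclusion, but the existence step travels a genuinely different road from the paper's. You build the pointwise constraint $\Gamma\ge 0$ (equivalently $0\le C_i\le 1/(\eta v_i)$) into the admissible class $\mathcal{A}$ from the start, so a minimizing sequence is uniformly bounded in $L^\infty(\Omega)$ and you extract a weak-$*$ limit directly; lower semicontinuity then follows from convexity of the integrands and of the quadratic form $\rho\mapsto\int_\Omega\rho\,L^{-1}\rho\,\mathrm{d}\br$. The paper instead works in $L^1\cap L\log L$ with no a priori $L^\infty$ bound: it uses the entropy bound to prove uniform integrability, invokes the Dunford--Pettis theorem (Theorem \ref{thm: DP}) for weak $L^1$ compactness, then Mazur's lemma (Lemma \ref{lem: Mazur}) plus Fatou and Jensen for lower semicontinuity of the entropy, and an $H^{-2}$ duality argument for the electric term. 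Your route is shorter and more elementary, and it is legitimate because on the set where $\mathcal{F}<+\infty$ the constraint $\Gamma\ge 0$ holds anyway, so restricting to $\mathcal{A}$ loses nothing; what the paper's heavier machinery buys is independence from the pointwise bound, i.e.\ it would survive in settings where the entropy alone (rather than the steric constraint) controls compactness. Both arguments share the same uniqueness step (strict convexity via the second variation \eqref{eq: 2nd var}) and the same reliance on Proposition \ref{prop4} to make the first variation meaningful. One caveat applies to you and to the paper alike: passing from ``minimizer'' to ``critical point'' also requires $C_i^*>0$ a.e.\ (so that $\ln(C_i^*/C_i^{\mathrm{B}})$ is finite and two-sided variations are admissible), which is not supplied by Proposition \ref{prop4}; the standard repair is to note that the inward derivative of $x(\ln x-1)$ at $x=0$ is $-\infty$, so a minimizer cannot vanish on a set of positive measure. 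You correctly identify the positivity of $\Gamma^*$ as the load-bearing ingredient, which matches the paper's emphasis.
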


\begin{proof}

Let $S_{t}(x)=x(\log x-t)$, then $S_t(x) \geq -e^{t-1}$, $S_{t}(x)$ is convex. 
Then
\begin{equation}
\begin{aligned}
\mathcal{F}(\boldsymbol{C})&= \frac{1}{2}\int_{\Omega}\rho L^{-1}\rho\mathrm{d}\br+ \sum_{i=1}^{K + 1} \int_{\Omega} S_{1+\log C_i^{\mathrm{B}}}(C_i) \mathrm{d} \br +  \int_{\Omega} \frac{1}{\eta v_0} S_{1+\log \Gamma^{\mathrm{B}}}(\Gamma) \mathrm{d} \br \\
&\geq -\left(\sum_{i=1}^{K+1}C_i^{\mathrm{B}}+\frac{\Gamma^{\mathrm{B}}}{\eta v_0}\right).
\end{aligned}
\label{eq: bound}
\end{equation}
Hence, $\mathcal{F}$ is bounded below.

Let $m=\inf_{\boldsymbol{C}} \mathcal{F}(\boldsymbol{C})$, then $m\leq\mathcal{F}^c<\infty$.
Select a minimizing sequence $\{ \boldsymbol{C}^{(k)}\}_{k=1}^{\infty}$, 
%  $\boldsymbol{C}^{(k)}=\left(C_1^{(k)},\cdots, C_{K+1}^{(k)}\right)^{'}$, 
that is
$$
\lim_{k\rightarrow \infty} \mathcal{F}(\boldsymbol{C}^{(k)})=m.
$$
We use the direct method for calculus of variation to show there exists $\boldsymbol{C}^*$ such that $\mathcal{F}(\boldsymbol{C}^*)=m$. 
It follows from \eqref{eq: bound} that 
$\int_{\Omega} S_{1+\log C_i^{\mathrm{B}}}(C_i^{(k)}) \mathrm{d} \br$ is bounded. Next, we show that $\{ C_i^{(k)}\}$ is uniformly integrable.
This is because
$$
\begin{aligned}
&\sup_{k}\int_{\{|C_i^{(k)}>M|\}}|C_i^{(k)}| \mathrm{d} \br
=\sup_{k}\int_{\Omega}1_{\{C_i^{(k)}>M\}}C_i^{(k)}\mathrm{d} \br\\
\leq& \frac{1}{\log M - 1-\log C_i^{\mathrm{B}}}\sup_{k}\int_{\Omega}1_{\{C_i^{(k)}>M\}}
C_i^{(k)}\left(\log C_i^{(k)} -1-\log C_i^{\mathrm{B}} \right)\mathrm{d} \br\\
\leq& \frac{1}{\log M - 1-\log C_i^{\mathrm{B}}}\left(\sup_{k}\int_{\Omega} S_{1+\log C_i^{\mathrm{B}}}(C_i^{(k)}) \mathrm{d} \br+C_i^{\mathrm{B}}
\right)\\
\rightarrow & ~0 ~\text{as} ~ M \rightarrow \infty.
\end{aligned}
$$
By Dunford-Pettis theorem (see Theorem \ref{thm: DP}), there exists a subsequence, denote also as $C_i^{(k)}$, such that
\begin{equation}
 C_i^{(k)}\rightharpoonup C_i^* ~\text{in} ~ L^1(\Omega) \quad i=1,\cdots, K+1.
\label{eq: weak conv}  
\end{equation}

By Mazur's lemma (see Lemma \ref{lem: Mazur}), there exist convex combinations $v_i^{(k)}=\sum_{j=1}^k\lambda_{j k}C_i^{(k)}$ with $\lambda_{j k}\geq 0$, $\sum_{j=1}^k\lambda_{j k}=1$ such that $v_i^{(k)}\rightarrow C_i^*$ in $L^1(\Omega)$. Then, there exists a subsequence, denote also as $v_i^{(k)}$, such that $v_i^{(k)}\rightarrow C_i^*$ a.e. in $\Omega$.
Let $A_i=\liminf_{k\rightarrow \infty} \int_{\Omega} S_{1+\log C_i^{\mathrm{B}}}(C_i^{(k)}) \mathrm{d} \br$. Then 
$\forall ~ \epsilon>0, \exists ~ N, \forall ~ k>N$ 
$$\int_{\Omega} S_{1+\log C_i^{\mathrm{B}}}(C_i^{(k)}) \mathrm{d} \br\leq A_i+\epsilon. $$
Hence,
$$
\begin{aligned}
\int_{\Omega} S_{1+\log C_i^{\mathrm{B}}}(C_i^*) \mathrm{d} \br
=\int_{\Omega} \lim_{k\rightarrow \infty}S_{1+\log C_i^{\mathrm{B}}}(v_i^{(k)}) \mathrm{d} \br
\leq \liminf_{k\rightarrow\infty} \int_{\Omega} S_{1+\log C_i^{\mathrm{B}}}(v_i^{(k)}) \mathrm{d} \br\\
= \liminf_{k\rightarrow\infty} \int_{\Omega} S_{1+\log C_i^{\mathrm{B}}}\left(\sum_{j=1}^k\lambda_{jk}C_i^{(k)}\right) \mathrm{d} \br\leq \liminf_{k\rightarrow\infty}\int_{\Omega}\sum_{j=1}^k\lambda_{jk}S_{1+\log C_i^{\mathrm{B}}}(C_i^{(k)}) \mathrm{d} \br
\leq A_i+\epsilon.
\end{aligned}
$$
where Fatou's lemma and Jensen's inequality are applied to the first two inequalities. 
Then, the arbitrariness of $\epsilon$ yields
\begin{equation}
    \int_{\Omega} S_{1+\log C_i^{\mathrm{B}}}(C_i^*) \mathrm{d} \br\leq \liminf_{k\rightarrow \infty} \int_{\Omega} S_{1+\log C_i^{\mathrm{B}}}(C_i^{(k)}) \mathrm{d} \br, \quad i=1,\cdots, K+1.
    \label{eq: wlc 1}
\end{equation}

On the other hand, it follows from \eqref{eq: weak conv} that 
$$
\Gamma^{(k)}=1-\eta\sum_{i=1}^{K+1}v_i C_i^{(k)}\rightharpoonup 1-\eta\sum_{i=1}^{K+1}v_i C_i^*=:\Gamma^* ~\text{in} ~ L^1(\Omega),
$$
$$ \rho^{(k)}=\sum_{i=1}^K z_i C_i^{(k)} \rightharpoonup \sum_{i=1}^K z_i C_i^*=:\rho^* ~\text{in} ~ L^1(\Omega).$$
Using the same argument as $C_i$, one has
\begin{equation}
    \int_{\Omega} S_{1+\log \Gamma^{\mathrm{B}}}(\Gamma^*) \mathrm{d} \br\leq \liminf_{k\rightarrow \infty} \int_{\Omega} S_{1+\log \Gamma^{\mathrm{B}}}(\Gamma^{(k)}) \mathrm{d} \br.
    \label{eq: wlc 2}
\end{equation}

Since $L^{-1}$ is positive definite, 
$\|\rho\|_{L}=\left(\int_{\Omega}\rho L^{-1}\rho\mathrm{d} \br\right)^{1/2}$ is a norm of $\rho$ and $|| \rho||_{H^{-2}(\Omega)}$ is equivalent to $\| \rho\|_{L}$.
It follows from \eqref{eq: bound} that $ || \rho^{(k)} ||_{H^{-2}(\Omega)}$ is bounded. 
Hence, there exists a subsequence, still labeled as $\rho^{(k)}$, that converges weakly to some $\varrho$, i.e.,
$$\rho^{(k)} \rightharpoonup \varrho ~\text{in} ~ H^{-2}(\Omega).$$
Let $\xi \in L^{\infty}(\Omega)\cap H_0^{2}(\Omega)$, one has
$$\varrho (\xi)=\lim_{k\rightarrow \infty} \rho^{(k)}(\xi) =\lim_{k\rightarrow \infty}\int_{\Omega} \rho^{(k)} \xi\mathrm{d} \br =\int_{\Omega} \rho^* \xi\mathrm{d} \br.$$
Since $L^{\infty}(\Omega)\cap H_0^{2}(\Omega)$ is dense in $H_0^2(\Omega)$, one can obtain $\rho^* \in H^{-2}(\Omega)$ and $\varrho=\rho^* ~\text{in} ~ H^{-2}(\Omega)$. Hence, $\rho^{(k)} \rightharpoonup \rho^* ~\text{in} ~ H^{-2}(\Omega)$.
Therefore, 
\begin{equation}
 \|\rho^*\|_{L}\leq \liminf_{k\rightarrow\infty} \|\rho^{(k)}\|_{L}.
\label{eq: wlc 3}   
\end{equation}
Combine \eqref{eq: wlc 1}-\eqref{eq: wlc 3}, one can conclude that  $\mathcal{F}$ is weakly lower semi-continuous. Then, 
$$
m\leq \mathcal{F}(\boldsymbol{C}^*)\leq \liminf_{k\rightarrow \infty}\mathcal{F}(\boldsymbol{C}^{(k)})=m.
$$ 
This implies $\boldsymbol{C}^*$ is a minimizer of $\mathcal{F}$.

Assume $\boldsymbol{C}^*, \boldsymbol{D}^*$ are both minimizers of $\mathcal{F}$.
Since $\mathcal{F}$ is strictly convex, 
$$
m\leq \mathcal{F}\left(\frac{\boldsymbol{C}^*+\boldsymbol{D}^*}{2}\right)\leq \frac{1}{2}\mathcal{F}(\boldsymbol{C}^*)+\frac{1}{2}\mathcal{F}(\boldsymbol{D}^*)=m.
$$
The equality holds only if $\boldsymbol{C}^*=\boldsymbol{D}^*$ a.e.
This implies that the minimizer of $\mathcal{F}$ is unique.

Hence, $\phi^*=L^{-1}\rho^*$ is the unique solution to \eqref{eq: no-dim PNPB}.

\end{proof}

%%-----------------------------------------------------------
%% ------------------Numerics-------------------------------
\section{Numerical investigation and discussion}
\label{sec: numer}

In this section, we investigate the parameter dependence of the steady state numerically. Starting from the dynamic dimensionless PNPB system \eqref{eq: no-dim dynamic}-\eqref{eq: no-dim 4pbik}, which has an energy dissipated structure, we adopt a finite volume scheme which preserves the energy dissipated property at the semi-discrete level. In the numerical tests, we run the dynamic for time large enough so that the system reaches the equilibrium. Then we can explore the parameter dependence of the steady state.

This section is organized as follows. At first, we show the energy dissipation relation of the dimensionless PNPB system \eqref{eq: no-dim dynamic}-\eqref{eq: no-dim 4pbik} in subsection \ref{1:energy}. Then we apply a semi-discrete finite volume scheme given in \cite{YongZhaoZhou} and prove the energy dissipation relation at the semi-discrete level in subsection \ref{2:scheme}. At last, various numerical investigations are given to show the parameter dependence of the steady state of the model in subsection \ref{3:results}.

\subsection{Energy dissipation relation} \label{1:energy}
The free energy functional $\mathcal{F}$ for the dynamic system \eqref{eq: no-dim dynamic}-\eqref{eq: no-dim ic} is given in the following
\begin{equation*}
\begin{aligned} 
\mathcal{F}(t) =~& \frac{1}{2} \int_{\Omega} \rho(\br, t) \phi(\br, t) \ \mathrm{d} \br \\ 
&+ \int_{\Omega}\left\{\sum_{i=1}^{K+1} C_{i}(\br, t)\left(\ln \frac{C_{i}(\br, t)}{C_{i}^{\mathrm{B}}}-1\right) + \frac{\Gamma(\br, t)}{\eta v_{0}}\left(\ln \frac{\Gamma(\br, t)}{\Gamma^{\mathrm{B}}}-1\right)\right\} \mathrm{d} \br.
\end{aligned}
\end{equation*}
Then we have the following property.
\begin{prop}\label{prop5}
	The energy functional $\mathcal{F}$ is dissipated along solutions of equation \eqref{eq: no-dim dynamic}:
	\begin{equation}
	\frac{\mathrm{d}}{\mathrm{d} t} \mathcal{F}(t) + D(t) = 0,
	\label{eq: dissipat}
	\end{equation}
	where the dissipation $D = \sum\limits_{i=1}^{K+1} \dint_{\Omega} D_i C_i \left|\nabla \frac{\delta \mathcal{F}}{\delta C_i} \right|^2 \, \mathrm{d} \br \geq 0.$
\end{prop}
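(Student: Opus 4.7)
The plan is to show that $\mathcal{F}$ behaves as a Lyapunov functional for the PNPB gradient flow by computing $\tfrac{d}{dt}\mathcal{F}$ and identifying it with the dissipation $D(t)$ up to sign. The natural target identity is
\begin{equation*}
\frac{d}{dt}\mathcal{F}(t) = \sum_{i=1}^{K+1} \int_{\Omega} \mu_i \, \frac{\partial C_i}{\partial t}\, \mathrm{d}\br,
\end{equation*}
where $\mu_i = \delta\mathcal{F}/\delta C_i$ is given in \eqref{eq: chemical}. Once this is in hand, substituting the NPB equation \eqref{eq: no-dim dynamic}, integrating by parts and invoking the no-flux boundary condition \eqref{eq: no-dim bc} yields
\begin{equation*}
\frac{d}{dt}\mathcal{F}(t) = \sum_{i=1}^{K+1} \int_{\Omega} \mu_i \,\nabla\cdot(D_i C_i \nabla\mu_i)\,\mathrm{d}\br = -\sum_{i=1}^{K+1}\int_{\Omega} D_i C_i |\nabla\mu_i|^2\,\mathrm{d}\br = -D(t),
\end{equation*}
which is exactly \eqref{eq: dissipat}, and the non-negativity of $D(t)$ is manifest from $D_i, C_i>0$.

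To reach the target identity, I would differentiate $\mathcal{F}_{el}$ and $\mathcal{F}_{en}$ separately. For the electric part, writing $\mathcal{F}_{el}(\mathbf{C}) = \tfrac{1}{2}\int_\Omega \rho L^{-1}\rho \, \mathrm{d}\br$ and using the self-adjointness of $L^{-1}$ established in Section \ref{subsec: operator L}, one obtains
\begin{equation*}
\frac{d}{dt}\mathcal{F}_{el} = \int_\Omega (L^{-1}\rho)\, \partial_t \rho \,\mathrm{d}\br = \int_\Omega \phi\,\partial_t\rho\,\mathrm{d}\br = \sum_{i=1}^{K} z_i\int_\Omega \phi\,\partial_t C_i\,\mathrm{d}\br.
\end{equation*}
For the entropy part, a direct differentiation yields $\tfrac{d}{dt}\mathcal{F}_{en} = \sum_i \int \ln(C_i/C_i^{\mathrm{B}})\partial_t C_i\, \mathrm{d}\br + \int \tfrac{1}{\eta v_0}\ln(\Gamma/\Gamma^{\mathrm{B}})\partial_t\Gamma\,\mathrm{d}\br$, at which point the relation $\partial_t \Gamma = -\eta\sum_{i=1}^{K+1} v_i \partial_t C_i$ coming from \eqref{eq: no-dim gamma} converts the $\Gamma$-contribution into $-\sum_i \int \tfrac{v_i}{v_0} S\,\partial_t C_i\,\mathrm{d}\br$. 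Summing the two pieces and using $z_{K+1}=0$ reconstructs $\mu_i$ term by term, delivering the target identity.

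The only genuinely delicate point is the application of the self-adjointness in the computation of $\tfrac{d}{dt}\mathcal{F}_{el}$: strictly speaking one needs $\partial_t \rho \in \mathcal{H}$ and $\rho(\cdot,t)\in \mathcal{D}(L)$-type regularity so that $\langle L^{-1}\rho,\partial_t\rho\rangle = \langle \rho, L^{-1}\partial_t\rho\rangle$ is legitimate, as well as enough smoothness in $C_i$ to perform the integration by parts without residual boundary terms beyond those killed by \eqref{eq: no-dim bc}. These regularity hypotheses are in line with the non-degeneracy assumptions $C_i>0$, $\Gamma>0$ already imposed for the PNPB dynamics. I expect the remaining algebra to be a straightforward bookkeeping exercise, with the substantive ingredients being (i) the self-adjointness of $L^{-1}$, (ii) the identity $\partial_t\Gamma = -\eta\sum v_i \partial_t C_i$, and (iii) the no-flux boundary condition that eliminates the boundary integral produced by the final integration by parts.
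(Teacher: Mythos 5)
Your proposal is correct and follows essentially the same route as the paper: differentiate $\mathcal{F}$ in time, use the chain-rule identity $\frac{\mathrm{d}}{\mathrm{d}t}\mathcal{F}=\sum_i\int_\Omega \mu_i\,\partial_t C_i\,\mathrm{d}\br$, substitute the NPB equation, and integrate by parts with the no-flux boundary condition. The paper simply asserts that first identity, whereas you spell out its derivation (self-adjointness of $L^{-1}$, $\partial_t\Gamma=-\eta\sum_i v_i\partial_t C_i$, and $z_{K+1}=0$) together with the regularity caveats; this is a more careful rendering of the same argument.
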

\begin{proof}
	\begin{equation*}
	\begin{aligned}
	\frac{\mathrm{d}}{\mathrm{d} t} \mathcal{F}(t) &= \sum_{i=1}^{K+1} \int_{\Omega}  \frac{\delta \mathcal{F}}{\delta C_i} \frac{\partial C_{i}(\br, t)}{\partial t} \, \mathrm{d} \br \\
	&= \sum_{i=1}^{K+1} \int_{\Omega} \frac{\delta \mathcal{F}}{\delta C_i}  \nabla \cdot \left(D_i C_i \nabla \frac{\delta \mathcal{F}}{\delta C_i} \right) \, \mathrm{d} \br \\
	&= - \sum_{i=1}^{K+1} \int_{\Omega} D_i C_i \left|\nabla \frac{\delta \mathcal{F}}{\delta C_i} \right|^2 \, \mathrm{d} \br =: -D(t) \leq 0.
	\end{aligned}
	\end{equation*}
\end{proof}

\subsection{Semi-discrete energy dissipated scheme} \label{2:scheme}

In this subsection, taking the one-dimensional case as an example, we adopt the semi-implicit finite volume scheme for the multi-species model in \cite{YongZhaoZhou} which is of second order in space, first order in time. 
	%Furthermore, our scheme involves accurate and efficient fast algorithm on the convolution terms with singular but integrable kernels. And then we show that the scheme satisfies mass conversation, positivity preserving {\color{blue} at the fully discrete level} and entropy dissipation properties {\color{blue} at the semi-discrete level}. Firstly, we describe the one-dimensional numerical scheme in the following.
	
First, we rewrite the dimensionless equations \eqref{eq: no-dim dynamic}-\eqref{eq: no-dim gamma} in the following symmetric form:
	\begin{equation}
	\label{sym_eq}
	\left\{
	\begin{aligned}
	&\frac{\partial C_i(x,t)}{\partial t} = \nabla \cdot \left( D_i \exp\left\{-f_i\right\} \nabla \dfrac{C_i}{\exp\left\{-f_i\right\}} \right), \quad i = 1, \cdots, K+1, \quad x \in \Omega, \\
	&f_i(x, t) = z_i \phi -\frac{v_i}{v_0} S, \quad i = 1, \cdots, K+1, \\
	&\phi(x, t) = \mathcal{K}*\rho, \quad \rho(x, t) = \sum_{i=1}^{K} z_i C_i,\\
	& S(x, t) = \ln \frac{\Gamma(x, t)}{\Gamma^{\mathrm{B}}}, \quad \Gamma(x, t) = 1 - \eta
	\sum\limits_{i=1}^{K+1}v_i C_i, \quad \Gamma^{\mathrm{B}} = 1 - \eta\sum\limits_{i=1}^{K+1}v_i C_i^B, \\
	 &\left\langle \mathbf{n}, D_i\exp\left\{-f_i\right\} \nabla \dfrac{C_i}{\exp\left\{-f_i\right\}}\right\rangle=0, \quad i = 1, \cdots, K+1, \quad x \in \partial \Omega, \\
	&C_i(x, 0) = C_i^0(x), \quad i = 1, \cdots, K+1,
	\end{aligned}
	\right.
	\end{equation}
	where the kernel function $\mathcal{K}$ is defined in \eqref{eq: kernel}. Hence we rewrite the free energy and the chemical potential in the symmetric form respectively as
	\begin{align}
	    \mathcal{F} &= \int_{\Omega} \left\{ \sum_{i = 1}^{K+1} C_i \log \dfrac{C_i}{\exp\{ -\frac{1}{2} z_i \phi + \frac{v_i}{v_0} S\} } - C_i  + \frac{1}{v_0 \eta} \left( S - \Gamma \right) \right\} \,\mathrm{d} x, \\
	    \mu_i &= \frac{\delta \mathcal{F}}{\delta C_i} = \log \frac{C_i}{C_i^{\mathrm{B}}\exp\{-f_i\}}, 
%	    E_{\Delta}(t) &= \Delta x \sum_{i = 1}^{K+1} \sum_{j = -N}^{N} \bar{C}_{i, j}(t) \left(\log  \frac{\bar{C}_{i, j}(t)}{C_i^{B} \exp\{- g_{i, j}(t)\}} - 1\right)+ \frac{\Delta x}{v_0 \eta} \sum_{j = -N}^{N} \left(S_j^{trc}(t) - \Gamma_j(t)\right),  g_i = \frac{1}{2} z_i \phi - \frac{v_i}{v_0} S,
	\end{align}
	and the equivalent dissipation relation \eqref{eq: dissipat} becomes
	\begin{align}
	\frac{\mathrm{d}}{\mathrm{d} t} \mathcal{F}(t) + \sum_{i = 1}^{K+1} \int_{\Omega} D_i \frac{\exp\{-2 f_i\}}{C_i} \left| \nabla \frac{C_i}{\exp\{-f_i\}} \right|^2 \mathrm{d}x = 0.
	\end{align}
	
	Next we give a brief introduction of the semi-discrete energy dissipated scheme proposed by \cite{YongZhaoZhou}. 
	\subsubsection{Introduction to the scheme}
	Consider a uniform mesh grid $\mathcal T =\{x_{j}\big|~~x_{j} = -L + (j + N) \Delta x, j = -N, \ldots, N, \Delta x= L/N\}$ on the computational domain $[-L, L]$, the semi-discrete finite volume scheme reads as
	\begin{equation}
	\label{bsys2}
	\frac{\mathrm{d} \bar{C}_{i, j}(t)}{\mathrm{d} t} = - \frac{F_{i, j + \frac{1}{2}}(t) - F_{i, j - \frac{1}{2}}(t)}{\Delta x}, ~~i = 1, \cdots, K+1,
	\end{equation}
where $\bar{C}_{i, j}(t)$ is the average concentration of the $i-$th ionic species on $[x_{j - \frac{1}{2}}, x_{j + \frac{1}{2}}]$ for $j = -N+1, \cdots, N - 1$  and $ [x_{-N}, x_{-N + \frac{1}{2}}]$ or $[x_{N - \frac{1}{2}}, x_{N}]$ for $j = -N$ or $N$. The numerical flux ${F}_{i, j + \frac{1}{2}}(t)$ is defined below
\begin{equation}
\label{bsys3}
F_{i, j + \frac{1}{2}}(t) := -\frac{D_i}{\Delta x} \exp\{-f_{i, j + \frac{1}{2}}(t)\} \left\{ \frac{\bar{C}_{i, j + 1}(t)}{\exp\{-f_{i, j + 1}(t)\}} - \frac{\bar{C}_{i, j }(t)}{\exp\{-f_{i, j}(t)\}}\right\},
\end{equation}
where $\exp\{-f_{i, j + \frac{1}{2}}(t)\}$ takes the harmonic mean of $\exp\{-f_{i, j}(t)\}$ and $\exp\{-f_{i, j + 1}(t)\}$, i.e.,
\begin{equation}
\label{bsys5}
\exp\{-f_{i, j + \frac{1}{2}}(t)\} = \frac{2}{\exp\{ f_{i, j}(t)\} +\exp\{ f_{i, j + 1}(t)\} },
\end{equation} 
where $f_{i,j}(t)$ denotes the numerical approximation of  $f_{i}(x_{j},t)$ at time $t$
%The harmonic mean (\ref{bsys5}) has been used in numerics \cite{Hu20192} , but it is not necessary, see \cite{Hai2014} for an alternative choice: the algebraic mean. 
and can be computed as follows
\begin{equation}
\label{bsys4}
\begin{aligned}
f_{i,j}(t) &= \int_{-L}^{L}  z_i \mathcal{K} \left(x_{j}-x\right) {\rho}^h(x, t) \,\mathrm{d} x - \frac{v_i}{v_0} S_j^{\mathrm{trc}}, \\
&=  \sum_{m = 1}^K   z_i z_m \int_{-L}^L  \mathcal{K} \left(x_{j}-x\right) {C}_{m}^h(x, t) \,\mathrm{d} x - \frac{v_i}{v_0} S_j^{\mathrm{trc}}, \\
&= \sum_{m = 1}^K   z_i z_m \sum_{p=-N}^{N}  \bar C_{m, p}(t) \int_{-L}^{L} \mathcal{K} (x_{j}-x)  e_p(x) \,\mathrm{d} x - \frac{v_i}{v_0} S_j^{\mathrm{trc}}. 
\end{aligned}
\end{equation}
Here the semi-discrete steric potential and the void function become $$S_j^{\mathrm{trc}}(t) = \ln \frac{ \Gamma_j(t)}{\Gamma^{\mathrm{B}}}, ~~\Gamma_j(t) = 1 - \eta \sum\limits_{i=1}^{K+1}v_i \bar{C}_{i,j}(t)$$ and the approximation of the total charged density 
\begin{equation}
\rho^h(x, t) = \sum_{m = 1}^{K} z_m {C}_{m}^h(x, t),
\end{equation}
where ${C}_{m}^h(x, t)$ is chosen as the piecewise linear interpolation of $C_{m}(x,t)$ using $\bar C_{m,j}$, and is given explicitly 
\begin{equation}\label{chFun}
{C}_{m}^h(x, t) = \sum_{j = -N}^{N} \bar C_{m, j}(t) e_j(x),  \quad ~\forall~x \in [-L, L],
\end{equation}
with $e_j(x)$ being the piecewise linear interpolation function, i.e., the typical hat function. Obviously, $C_{m}^{h}$ is a second order approximation of $C_{m}$. 
Define the convolution tensor as
\begin{equation}
T_{j-p}^{\mathcal{K}}:=\Delta x \int_{-1}^{1}\mathcal{K}\left((j-p-x)\Delta x\right)\hat{e}_0(x)\mathrm{d}x,
\label{eq: tensor}
\end{equation}
where $\hat{e}_0(x)= 1-|x|$ for $x \in [-1, 1]$.
As a result, the convolution field  
$$
f_{i,j}(t)=\sum_{m = 1}^K   z_i z_m \sum_{p=-N}^{N}  \bar C_{m, p}(t) T_{j-p}^{\mathcal{K}} - \frac{v_i}{v_0} S_j^{\mathrm{trc}}. 
$$
We remark that the convolution tensors $T_{j-p}^{\mathcal{K}}$ can be precomputed and $f_{i,j}(t)$ can be evaluated efficiently by FFT. More details related to the scheme can be found in \cite{YongZhaoZhou}.

% \subsubsection{Fully discrete scheme}

A fully discrete finite volume scheme by applying the backward Euler method while treating the convolution-type field $f_{i, j + 1}$ explicitly in numerical flux term (\ref{bsys3}) reads as follows
\begin{align}
&\frac{\bar{C}_{i, j}^{n + 1} - \bar{C}_{i, j}^{n}}{\Delta t} = - \frac{{F}_{i, j + \frac{1}{2}}^{n + 1} - {F}_{i, j - \frac{1}{2}}^{n + 1}}{\Delta x},\label{sysodeb} \\
&{F}_{i, j + \frac{1}{2}}^{n + 1} = -\frac{D_i}{\Delta x} \exp\{-f_{i, j + \frac{1}{2}}^{n}\} \left\{ \frac{\bar{C}_{i, j + 1}^{n + 1}}{\exp\{-f_{i, j + 1}^{n }\}} - \frac{\bar{C}_{i, j}^{n + 1}}{\exp\{-f_{i, j}^{n}\}}\right\},  \label{sysfluxb}
\end{align}
for all $i = 1, \cdots, K+1$ and $j = -N,\cdots, N$. 
% The scheme will be complemented by a fast solver of the field $f_{i,j}$.
We emphasize that the fully discrete scheme (\ref{sysodeb})-(\ref{sysfluxb}) is only linearly implicit, and thus it avoids the use of nonlinear solvers. 

\subsubsection{Energy dissipation relation of the semi-discrete scheme}
Next we give the result of the energy dissipation relation of the semi-discrete scheme, where the proof is analogous to that in \cite{YongZhaoZhou}.

\begin{thm}\label{thm2b}(1D semi-discrete free energy dissipation estimate)
Consider the one-dimensional semi-discrete finite volume scheme (\ref{bsys2})-(\ref{bsys4}) of the system \eqref{sym_eq} with initial data $\bar C_{i, j}^0 > 0$ and $\bar C_{i, j}(t) > 0, \, i = 1, \cdots, K+1$. 
Then, for the semi-discrete form of the free energy $\mathcal{F}$ and the dissipation $D$, we have
\begin{equation}
	\dfrac{\mathrm{d}}{\mathrm{d} t} E_{\Delta}(t) = - D_{\Delta}(t) \leqslant 0, \quad \forall \,t \geqslant 0.
\end{equation}
\end{thm}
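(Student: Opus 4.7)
The plan is to mimic exactly the continuous proof of Proposition \ref{prop5}. In the continuous case the chain rule gives $\frac{d}{dt}\mathcal{F} = \sum_i\int \mu_i\,\partial_t C_i$ and the no-flux integration by parts converts this to $-\sum_i\int D_i C_i|\nabla\mu_i|^2$. At the semi-discrete level I would replicate the three analogous ingredients: (a) identify the derivative of the discrete free energy $E_\Delta$ with respect to $\bar C_{i,j}$ as a discrete chemical potential $\mu_{i,j}^\Delta$; (b) perform discrete summation by parts against the flux difference in (\ref{bsys2}); (c) exhibit the resulting bilinear form as a non-negative sum by means of the elementary inequality $(a-b)\log(a/b)\ge 0$ for $a,b>0$.

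First I would write $E_\Delta$ using the symmetric form adopted by the authors, namely
\begin{equation*}
E_\Delta(t)=\Delta x\sum_j\!\left\{\sum_{i=1}^{K+1}\bar C_{i,j}\log\frac{\bar C_{i,j}}{C_i^{\mathrm B}\exp\{\tfrac12 z_i\phi_j-\tfrac{v_i}{v_0}S_j\}}-\bar C_{i,j}+\frac{1}{v_0\eta}(S_j-\Gamma_j)\right\},
\end{equation*}
with the discrete dissipation
\begin{equation*}
D_\Delta(t)=\sum_{i=1}^{K+1}\sum_j\frac{D_i}{\Delta x}\exp\{-f_{i,j+\frac12}\}\bigl(U_{i,j+1}-U_{i,j}\bigr)\bigl(\log U_{i,j+1}-\log U_{i,j}\bigr),
\end{equation*}
where $U_{i,j}:=\bar C_{i,j}/\exp\{-f_{i,j}\}$. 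The central computational step is to verify that
\begin{equation*}
\frac{\partial E_\Delta}{\partial \bar C_{i,j}}=\Delta x\,\mu_{i,j}^\Delta,\qquad \mu_{i,j}^\Delta:=\log\frac{\bar C_{i,j}}{C_i^{\mathrm B}\exp\{-f_{i,j}\}}.
\end{equation*}
This rests on two cancellations. The factor $\tfrac12$ in the electric part conspires with the self-adjointness of the discrete Coulomb operator, i.e.\ the symmetry $T_{j-p}^{\mathcal K}=T_{p-j}^{\mathcal K}$ inherited from the symmetry of the kernel $\mathcal K$ in \eqref{eq: kernel}, to produce $z_i\phi_j$ (not $\tfrac12 z_i\phi_j$) upon differentiation. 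The steric/void cancellation is the same manipulation the authors already used to verify that the continuous symmetric energy agrees with the original form: differentiating $-\bar C_{i,j}\tfrac{v_i}{v_0}S_j$ and the void entropy $\tfrac{1}{v_0\eta}(S_j-\Gamma_j)$ with respect to $\bar C_{i,j}$ (via $\Gamma_j=1-\eta\sum v_i\bar C_{i,j}$) leaves only the term $-\tfrac{v_i}{v_0}S_j$.

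Once the derivative identity is in hand, I would apply the chain rule, substitute (\ref{bsys2}), and do a discrete summation by parts,
\begin{equation*}
\frac{d}{dt}E_\Delta=\sum_{i,j}\Delta x\,\mu_{i,j}^\Delta\frac{d\bar C_{i,j}}{dt}=-\sum_{i,j}\mu_{i,j}^\Delta\bigl(F_{i,j+\frac12}-F_{i,j-\frac12}\bigr)=\sum_{i,j}F_{i,j+\frac12}\bigl(\mu_{i,j+1}^\Delta-\mu_{i,j}^\Delta\bigr),
\end{equation*}
where the boundary contribution from the index range $[-N,N]$ vanishes because the no-flux condition built into (\ref{sym_eq}) forces $F_{i,-N-\frac12}=F_{i,N+\frac12}=0$. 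Finally I would observe that $\mu_{i,j+1}^\Delta-\mu_{i,j}^\Delta=\log(U_{i,j+1}/U_{i,j})$ while the numerical flux (\ref{bsys3}) reads $F_{i,j+\frac12}=-\tfrac{D_i}{\Delta x}\exp\{-f_{i,j+\frac12}\}(U_{i,j+1}-U_{i,j})$, so each term of the sum becomes $-\tfrac{D_i}{\Delta x}\exp\{-f_{i,j+\frac12}\}(U_{i,j+1}-U_{i,j})\log(U_{i,j+1}/U_{i,j})$, which is non-positive by the elementary monotonicity $(a-b)\log(a/b)\ge 0$ and the positivity of the harmonic mean in (\ref{bsys5}). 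Summing gives $\frac{d}{dt}E_\Delta=-D_\Delta\le 0$.

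The main obstacle is the derivative identity $\partial E_\Delta/\partial\bar C_{i,j}=\Delta x\,\mu_{i,j}^\Delta$: all the other steps are direct algebraic manipulation, but this one requires the symmetry $T_{j-p}^{\mathcal K}=T_{p-j}^{\mathcal K}$ (the discrete counterpart of the self-adjointness of $L^{-1}$ established in Section~\ref{subsec: operator L}) and the $\Gamma$-through-$\bar C_{i,j}$ cancellation. Of course all logarithms are meaningful only under the hypothesis $\bar C_{i,j}(t)>0$ already assumed in the statement, and $\Gamma_j(t)>0$ is inherited from the same hypothesis because $1-\eta\sum v_i m_i^0>0$ at $t=0$ is preserved along the flow.
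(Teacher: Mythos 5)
Your proposal is correct and follows essentially the same route as the paper's proof: the derivative identity $\partial E_{\Delta}/\partial \bar C_{i,j}=\Delta x\,\mu_{i,j}^{\Delta}$ via the tensor symmetry $T^{\mathcal K}_{j-p}=T^{\mathcal K}_{p-j}$ and the steric/void cancellation, then Abel summation with the no-flux conditions, then the sign of $(U_{i,j+1}-U_{i,j})(\log U_{i,j+1}-\log U_{i,j})$. The only (immaterial) difference is that the paper expresses this last quantity as $(U_{i,j+1}-U_{i,j})^2/\beta_{i,j}$ via the mean-value theorem rather than invoking $(a-b)\log(a/b)\ge 0$ directly; note only that your closing claim that positivity of $\Gamma_j(t)$ is ``preserved along the flow'' is not established anywhere (the paper explicitly defers this to future work), though it does not affect the argument since the theorem's hypotheses already presuppose the scheme is well defined.
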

Here $E_{\Delta}(t)$ and $D_{\Delta}(t)$, 
the semi-discrete free energy and dissipation, are defined explicitly as follows
\begin{equation}
\label{disenergyb}
\begin{aligned}
E_{\Delta}(t) = \Delta x \sum_{i = 1}^{K+1} \sum_{j = -N}^{N} \bar{C}_{i, j}(t) \left(\log  \frac{\bar{C}_{i, j}(t)}{C_i^{\mathrm{B}} \exp\{- g_{i, j}(t)\}} - 1\right)+ \frac{\Delta x}{v_0 \eta} \sum_{j = -N}^{N} \left(S_j^{\mathrm{trc}}(t) - \Gamma_j(t)\right), 
\end{aligned}
\end{equation}
\begin{equation}
\begin{aligned}
D_{\Delta}(t) &= \frac{1}{\Delta x} \sum_{i = 1}^{K+1}\sum_{j = -N}^{N} \exp\{-f_{i, j + \frac{1}{2}}(t)\} \cdot \frac{D_i }{\beta_{i, j}(t)} \left( \frac{\bar{C}_{i, j + 1}(t)}{\exp\{-f_{i, j + 1}(t)\}} - \frac{\bar{C}_{i, j }(t)}{\exp\{- f_{i, j}(t)\}}\right)^2,
\end{aligned}
\end{equation}
where $g_i = \frac{1}{2} z_i \phi - \frac{v_i}{v_0} S$, $g_{i, j}$ is defined similarly to $f_{i,j}$, and $\beta_{i, j}(t)$ sits between $\frac{\bar{C}_{i, j }(t)}{\exp\{-f_{i, j}(t)\}}$ and $\frac{\bar{C}_{i, j + 1}(t)}{\exp\{- f_{i, j + 1}(t)\}}$. 

\begin{proof}
	We only need to prove that 
	\begin{equation*}
	\begin{aligned}
	&\dfrac{\mathrm{d}}{\mathrm{d} t} E_{\Delta}(t) \\
	=~& \dfrac{\mathrm{d}}{\mathrm{d} t} \left[ \Delta x \sum_{i = 1}^{K+1} \sum_{j = -N}^{N} \bar{C}_{i, j}\left( \log  \frac{\bar{C}_{i, j}}{C_i^{\mathrm{B}} \exp\{- g_{i, j}\}}  - 1\right) + \frac{\Delta x}{v_0 \eta} \sum_{j = -N}^{N} (S_j^{\mathrm{trc}} - \Gamma_j)\right] \\
	=~& \Delta x \sum_{i = 1}^{K+1} \sum_{j = -N}^{N} \left[ \log \dfrac{\bar{C}_{i, j}}{C_i^{\mathrm{B}} \exp\{ -g_{i, j} \}} \dfrac{\mathrm{d} }{\mathrm{d} t} \bar{C}_{i, j} + \bar{C}_{i, j} \dfrac{\mathrm{d}}{\mathrm{d} t} g_{i, j} \right] + \frac{\Delta x}{v_0 \eta} \sum_{j = -N}^{N} \dfrac{\mathrm{d} }{\mathrm{d} t} (S_j^{trc} - \Gamma_j) \\
	=~& \Delta x \sum_{i = 1}^{K+1} \sum_{j = -N}^{N}  \log \dfrac{\bar{C}_{i, j}}{C_i^{\mathrm{B}} \exp\{ -f_{i,j} \}} \dfrac{\mathrm{d} }{\mathrm{d} t} \bar{C}_{i, j}  - \Delta x \sum_{i = 1}^{K+1} \sum_{j = -N}^{N} \frac{v_i \bar{C}_{i,j}}{v_0 \Gamma_j} \dfrac{\mathrm{d} }{\mathrm{d} t} \Gamma_{j} + \frac{\Delta x}{v_0 \eta} \sum_{j = -N}^{N} \dfrac{\mathrm{d} }{\mathrm{d} t} (S_j^{\mathrm{trc}} - \Gamma_j) \\
	=~& \Delta x \sum_{i = 1}^{K+1} \sum_{j = -N}^{N} \log \dfrac{\bar{C}_{i, j}}{C_i^{B} \exp\{ -f_{i,j} \}} \dfrac{\mathrm{d} }{\mathrm{d} t} \bar{C}_{i, j} 
	\end{aligned}
	\end{equation*}
	The third equality holds because of $T^{\mathcal{K}}_{p-j} = T^{\mathcal{K}}_{j-p}$.
	
	According to (\ref{bsys2}), we have 
	\begin{equation*}
	\dfrac{\mathrm{d}}{\mathrm{d} t} E_{\Delta}(t) = - \sum_{i = 1}^{K+1} \sum_{j = -N}^{N}  \left( \log \dfrac{\bar{C}_{i, j}}{C_i^{\mathrm{B}}\exp\{ - f_{i, j} \}} \right)\left(F_{i, j + \frac{1}{2}} - F_{i, j - \frac{1}{2}}\right).
	\end{equation*}
	Since we take no-flux boundary conditions, the discrete boundary conditions satisfy 
	$$F_{i, -N - \frac{1}{2}} = F_{i, N+ \frac{1}{2}} = 0, \, i = 1, \cdots, K+1.$$
	Hence, using Abel's summation formula, we obtain
	\begin{equation}
	\dfrac{\mathrm{d}}{\mathrm{d} t} E_{\Delta}(t) 
	=- \sum_{i = 1}^{K+1} \sum_{j = -N}^{N} \left( \log \dfrac{\bar{C}_{i, j}}{C_i^{\mathrm{B}} \exp\{ - f_{i, j} \}} - \log \dfrac{\bar{C}_{i, j + 1}}{C_i^{\mathrm{B}} \exp\{ - f_{i, j + 1} \}} \right) F_{i, j + \frac{1}{2}}.
	\label{eq: dt_dis_E}
	\end{equation}
	Plug \eqref{bsys3} into \eqref{eq: dt_dis_E} and apply the mean-value theorem yields
	\begin{equation*}
 	\begin{aligned}
    	\dfrac{\mathrm{d}}{\mathrm{d} t} E_{\Delta}(t) 
% 	=& \frac{1}{\Delta x} \sum_{i = 1}^{K+1} \sum_{j = -N}^{N}D_i \exp\{- f_{i, j + \frac{1}{2}}(t)\} \cdot \left( \log \dfrac{\bar{C}_{i, j}(t)}{\exp\{ -f_{i, j}(t) \}} - \log \dfrac{\bar{C}_{i, j + 1}(t)}{\exp\{ -f_{i, j + 1}(t) \}} \right) \left( \frac{\bar{C}_{i, j + 1}(t)}
% 	{\exp\{-f_{i, j + 1}(t)\}} - \frac{\bar{C}_{i, j}(t)}{\exp\{-f_{i, j}(t)\}}\right) \\
	=&- \frac{1}{\Delta x} \sum_{i = 1}^{K+1} \sum_{j = -N}^{N} \exp\{-f_{i, j + \frac{1}{2}}(t)\} \cdot \frac{D_i }{\beta_{i, j}(t)} \left( \frac{\bar{C}_{i, j + 1}(t)}{\exp\{-f_{i, j + 1}(t)\}} - \frac{\bar{C}_{i, j }(t)}{\exp\{-f_{i, j}(t)\}}\right)^2 \\
	%\leqslant& - \frac{1}{\Delta x} \sum_{m = 1}^{M} \sum_{j} \frac{2 \exp\{-f_{m, j + \frac{1}{2}}(t)\}}{\frac{\bar{c}_{m, j + 1}(t)}{\exp\{-\tilde f_{m, j + 1}(t)\}} + \frac{\bar{c}_{m, j }(t)}{\exp\{-\tilde f_{m, j}(t)\}}} 
	%\left\{ \frac{\bar{c}_{m, j + 1}(t)}{\exp\{-\tilde f_{m, j + 1}(t)\}} - \frac{\bar{c}_{m, j }(t)}{\exp\{-\tilde f_{m, j}(t)\}}\right\}^2 \\
	%\leqslant& - \frac{1}{\Delta x} \sum_{m = 1}^{M} \sum_{j} \frac{\exp\{-2\tilde f_{m, j + \frac{1}{2}}(t)\}}{\max \left\{ \bar{c}_{m, j}(t), \bar{c}_{m, j + 1}(t) \right\}} 
	%\left\{ \frac{\bar{c}_{m, j + 1}(t)}{\exp\{-\tilde f_{m, j + 1}(t)\}} - \frac{\bar{c}_{m, j }(t)}{\exp\{-\tilde f_{m, j}(t)\}}\right\}^2 \\
	=& - D_{\Delta}(t) \leqslant 0, \quad \forall t \geqslant 0,
	\end{aligned}
	\end{equation*}
	where $\beta_{i, j}(t)>0$ is a point between $\frac{\bar{C}_{i, j }(t)}{\exp\{-f_{i, j}(t)\}}$ and $\frac{\bar{C}_{i, j + 1}(t)}{\exp\{-f_{i, j + 1}(t)\}}$.
\end{proof}

\begin{rem}
The positivity preserving property of the concentrations $C_m$ and $\Gamma$ at both continuous and discrete level remains to be done for future work.
\end{rem}

\subsection{Results and discussion on the one-dimensional case} \label{3:results}

Before talking about the parameter dependence, we slightly modify the dimensionless equations (\ref{eq: no-dim dynamic}), (\ref{eq: no-dim gamma}) by adding an external electric field $V_0$, for $i = 1, \cdots, K+1$, the domain $\Omega = [-1, 1]^d$,
\begin{equation}
\label{model}
\left\{
\begin{aligned}
&\frac{\partial C_i(x,t)}{\partial t} = \nabla\cdot\left(D_i\left(\nabla C_i(x,t) + z_i C_i(x,t)\nabla\phi(x,t)-\frac{v_i}{v_0}C_i(x,t)\nabla S(x,t) + z_i C_i \nabla V_0(x)\right)\right), \quad x \in \Omega, \\
&S(x, t) = \ln \frac{\Gamma(x, t)}{\Gamma^{\mathrm{B}}}, \quad
\Gamma(x, t) = 1 - \eta\sum\limits_{i=1}^{K+1}v_i C_i,\quad \Gamma^{\mathrm{B}} = 1 - \eta\sum\limits_{i=1}^{K+1}v_i C_i^{\mathrm{B}},  \quad x \in \Omega, \\
&L \phi(x) =\nu^2\left(\lambda^2\Delta - 1\right)\Delta \phi=\rho(x), \quad \rho(x) =  \sum_{i=1}^{K} z_{i} C_{i}(x),  \quad x \in \Omega, \\
&\left\langle \mathbf{n}, D_i\left(\nabla C_i(x,t) + z_i C_i(x,t)\nabla\phi(x,t)-\frac{v_i}{v_0}C_i(x,t)\nabla S(x,t) + z_i \nabla V_0\right)\right\rangle = 0, \quad x \in \partial \Omega, \\
&C_i(x, 0) = C_i^0(x), \quad x \in \Omega.
\end{aligned}
\right.
\end{equation}

In this section, we take the one-dimensional case $d = 1$ as an example and for simplicity choose the number of the ionic species except water $K = 2$. Without loss of generality, the first ionic species is assumed to be positive while the second one is assumed to be negative. Let the external field $V_0 = 10 x$ and the constant diffusion coefficients $(D_1, D_2, D_3) = (1, 1, 1)$. In addition, we set the initial conditions in the following form,
\begin{equation}
\label{ini}
C_i^0(x) = 0.5, \quad i = 1, 2, 3.
\end{equation}
Hence, $C_i^{\mathrm{B}}=0.5$.

%%Version by Yijia-------------------------------------------------------------------%%

We recall that one of the statements of the steady state is that the chemical potential $\mu_i$ is constant for $i = 1, \cdots, K +1$, i.e.
\begin{align}
\label{steadystate}
\mu_i = \ln \frac{C_{i}(x)}{C_{i}^{\mathrm{B}}}-\frac{v_{i}}{v_{0}} \ln \frac{\Gamma(x)}{\Gamma^{\mathrm{B}}}+z_{i} \phi(x) + z_i V_0(x) = \text{constant.}
\end{align}
It is seen that the steady state of system \eqref{model} is the result of multiple effects: 
the diffusion of all the ionic and water species caused by concentration gradient $-\nabla C_i$, the electric force $-z_i\nabla \phi$, the steric force $\frac{v_i}{v_0} \nabla S$ as well as the external force $-z_i\nabla V_0$. 

Since we aim to observe those effects, we give some further assumptions on the electric field $\phi$ and the kernel function $\mathcal{K}$:
\begin{itemize}
    \item[(1)] the correlated electric field is considered in the whole space, which means that the 4PBik equation $L \phi(x) = \rho(x)$ is satisfied for all $x \in \mathbb{R}$;
    \item[(2)] the kernel function $\mathcal{K}$ of the electric field is given in the form of Picard solution;
    \item[(3)] the correlated electric field decays to zero at the infinity.
\end{itemize}
Following the above assumptions, the 1D kernel function becomes 
\begin{align}
\label{kernel}
\mathcal{K}_{\nu, \lambda} (x) = \frac{\lambda}{2 \nu^2} \exp\left(-\frac{|x|}{\lambda}\right)
\end{align}
as an example.

It must be emphasized that the external field $V_0 = 10 x$ would push the cations to the left while push the anions to the right. Hence the inclusion of the external field $V_0 = 10 x$ also contributes to the formation of the boundary layer. Besides, the steric force represents the force of void exerted on the ions and water molecules, the sign of steric force suggests that the diffusion of the void is in the opposite direction to the diffusion of other particles.

The degree and manner of aggregation of steady-state concentrations are influenced by \eqref{steadystate}. Combined with \eqref{kernel}, the relevant parameters can be divided into three parts: the parameter $\eta$, which is in the definition of $\Gamma$ and thus related to the steric effect $S$, the parameters $\lambda, \nu$, which are related to the correlated electric field $\phi$, and the parameters $z_i$, $v_i$, which are the intrinsic properties of ions. Next, we will use numerical tests to investigate the effect of all the parameters we mentioned before.

\subsubsection{Parameter related to the steric effect: $\eta$}

We can see from the expression of the steric potential
$$
S(x) = \ln \dfrac{1-\eta\sum_{i=1}^{K+1}v_i C_i(x)}{\Gamma^{\mathrm{B}}}
$$
that $\eta$ is related to the steric potential $S$ directly. Additionally, $\eta = 0$ means the steric effect vanishes. Combined with \eqref{steadystate}, larger $\eta$ makes the steric repulsion stronger and hence the concentrations of ionic species at the boundary layers are less peaked for all the species.
Next we illustrate how $\eta$ has impact on the equilibrium of the PNPB model through two tests.

At first, we pay attention to different $\eta$. Fix the parameters $\lambda = 1, \nu = 1, (z_1, z_2, z_3) = (1, -1, 0), (v_1, v_2, v_3) = (0.01, 0.01, 0.01)$, and hence $v_0 = 0.01$. Let $\eta = 0, 1, 3, 5$, Fig \ref{plot_test1} shows the results of concentrations, total charged density $\rho=z_1 C_1+z_2 C_2$, total volume density $v_1C_1+v_2C_2+v_3C_3$ and voids at time $t = 1$ with different $\eta$. Note that the system is near equilibrium at $t=1$ since the distributions are not changing any more. We observe that the inclusion of the steric potential makes the concentrations of ions $C_1$ and $C_2$ not overly peaked. Larger $\eta$ leads to larger steric repulsion effects and thus the the cations are less gathered around the left boundary while the anions are less gathered on the right. 
As for water, since the water molecules are neutral, they are not subjected to the electric field and the external field.
Hence, when $\eta = 0$, there is no steric effect, the distribution of water $C_3$ remain unchanged. When $\eta>0$, the steric force squeezes water from the boundaries to the middle. Moreover, larger $\eta$ makes the water molecules much gathered in the middle part. 
% \begin{figure}[htp] 
% 	\centering
% 	\subfigure[]{ 
% 	\begin{minipage}[t]{0.45\linewidth}
% 		\includegraphics[width=1.0\textwidth]{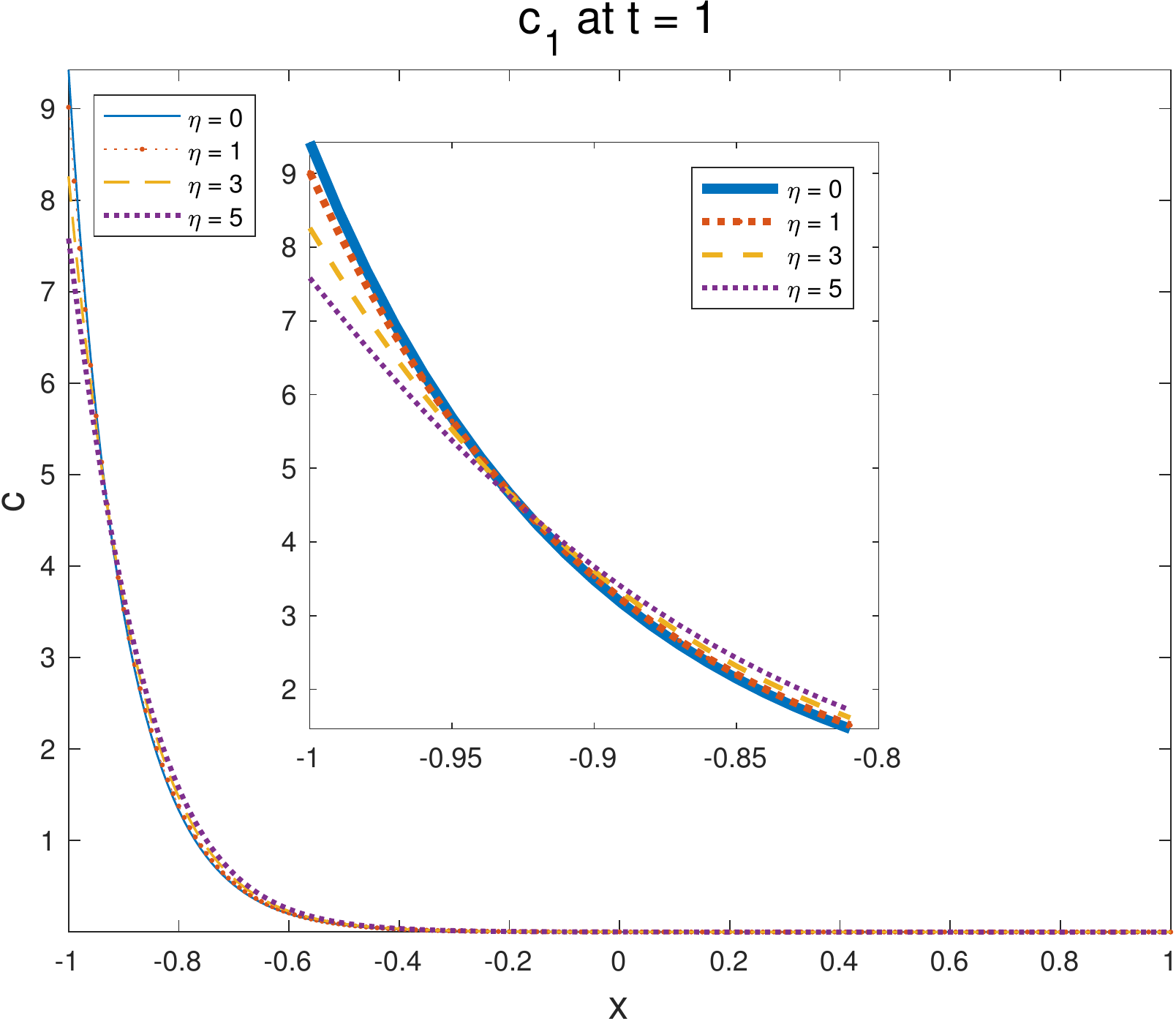}
% 	\end{minipage} 
% 	\begin{minipage}[t]{0.45\linewidth}
% 		\includegraphics[width=1.0\textwidth]{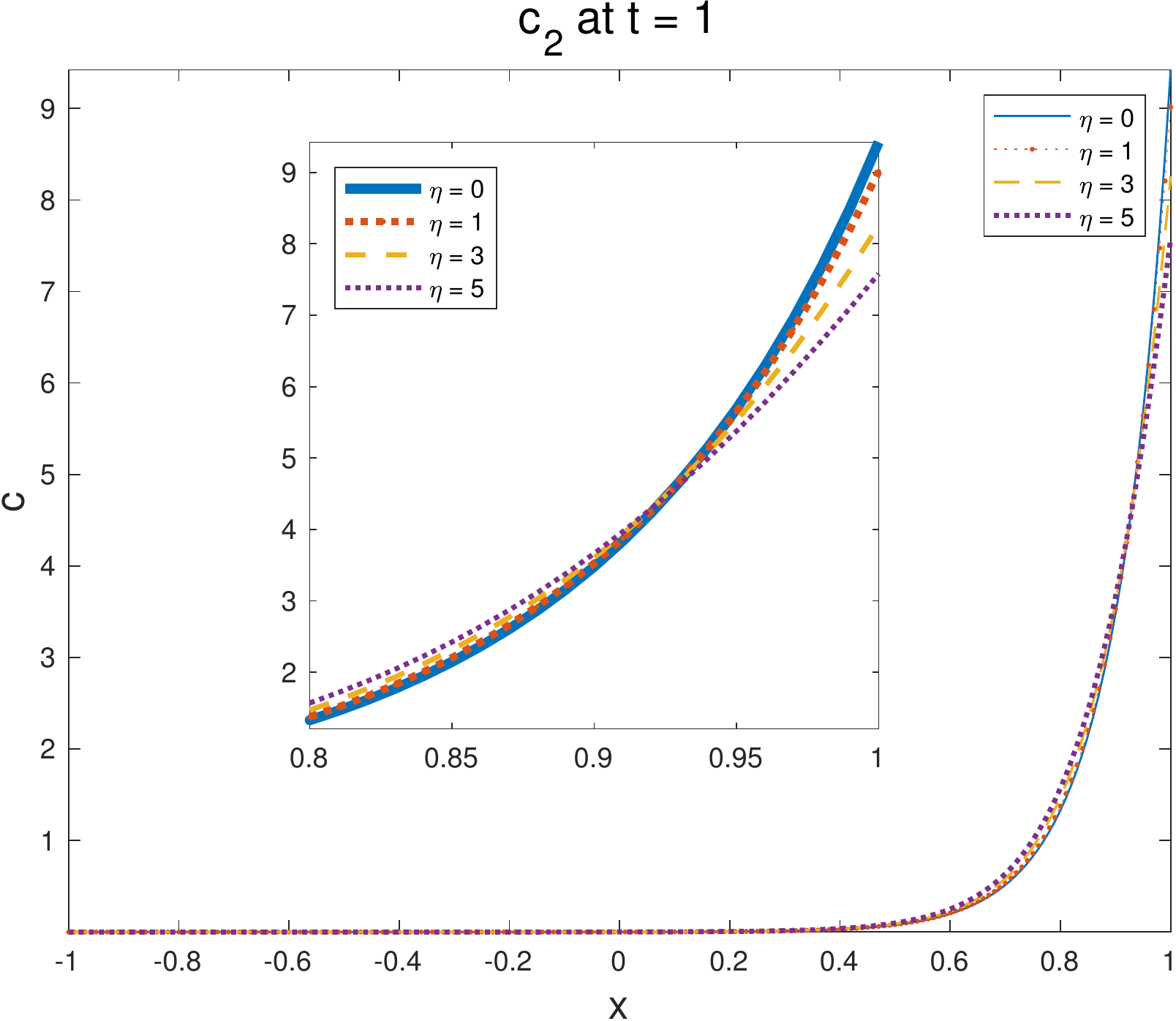}
% 	\end{minipage}
% 	}
% 	\subfigure[]{ 
% 	\begin{minipage}[t]{0.45\linewidth}
% 	\includegraphics[width=1.0\textwidth]{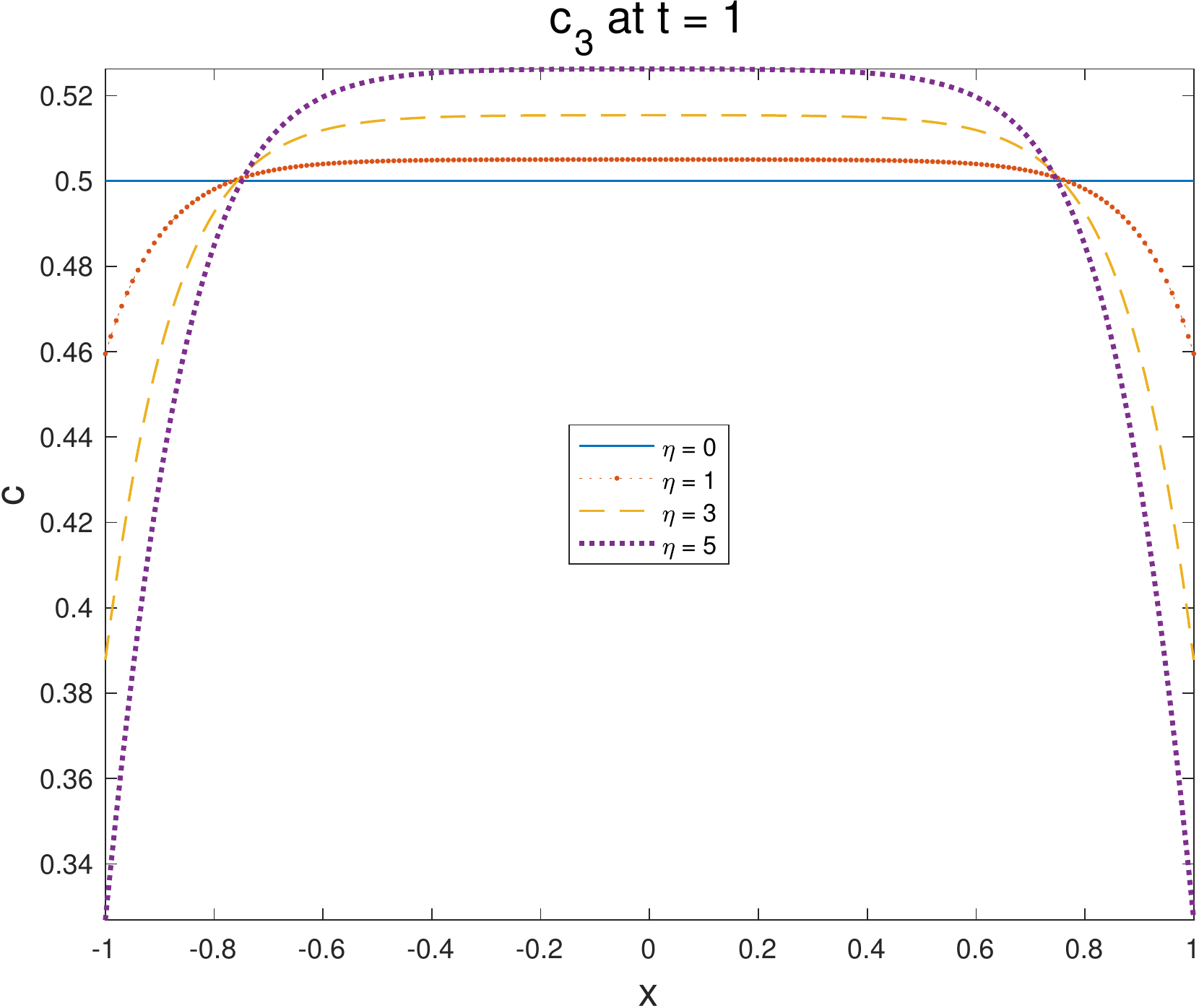}
% 	\end{minipage}
% 	\begin{minipage}[t]{0.45\linewidth}
% 		\includegraphics[width=1.0\textwidth]{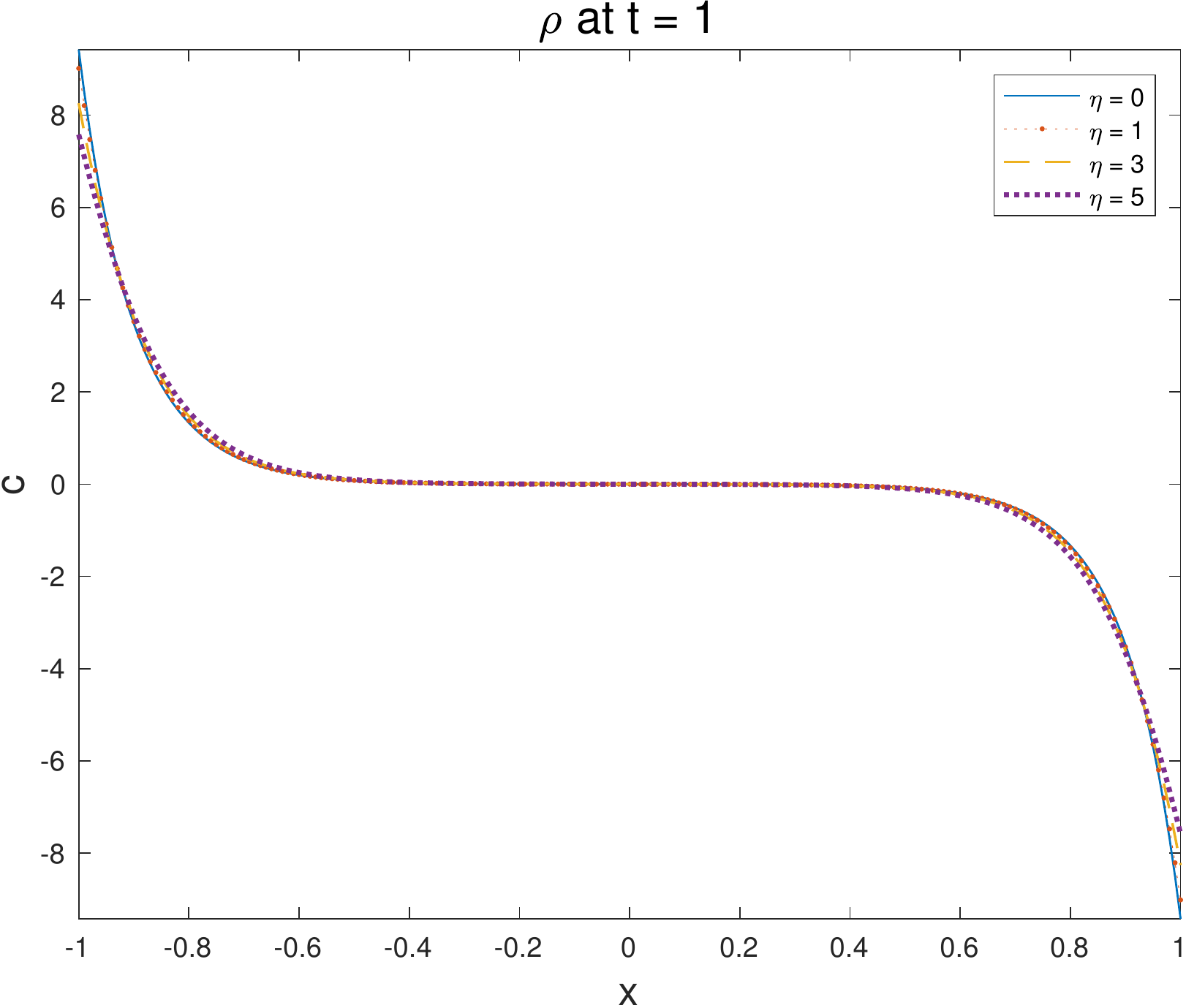}
% 	\end{minipage}  
% 	}
% 	\subfigure[]{ 
% 	\begin{minipage}[t]{0.45\linewidth}
% 		\includegraphics[width=1.0\textwidth]{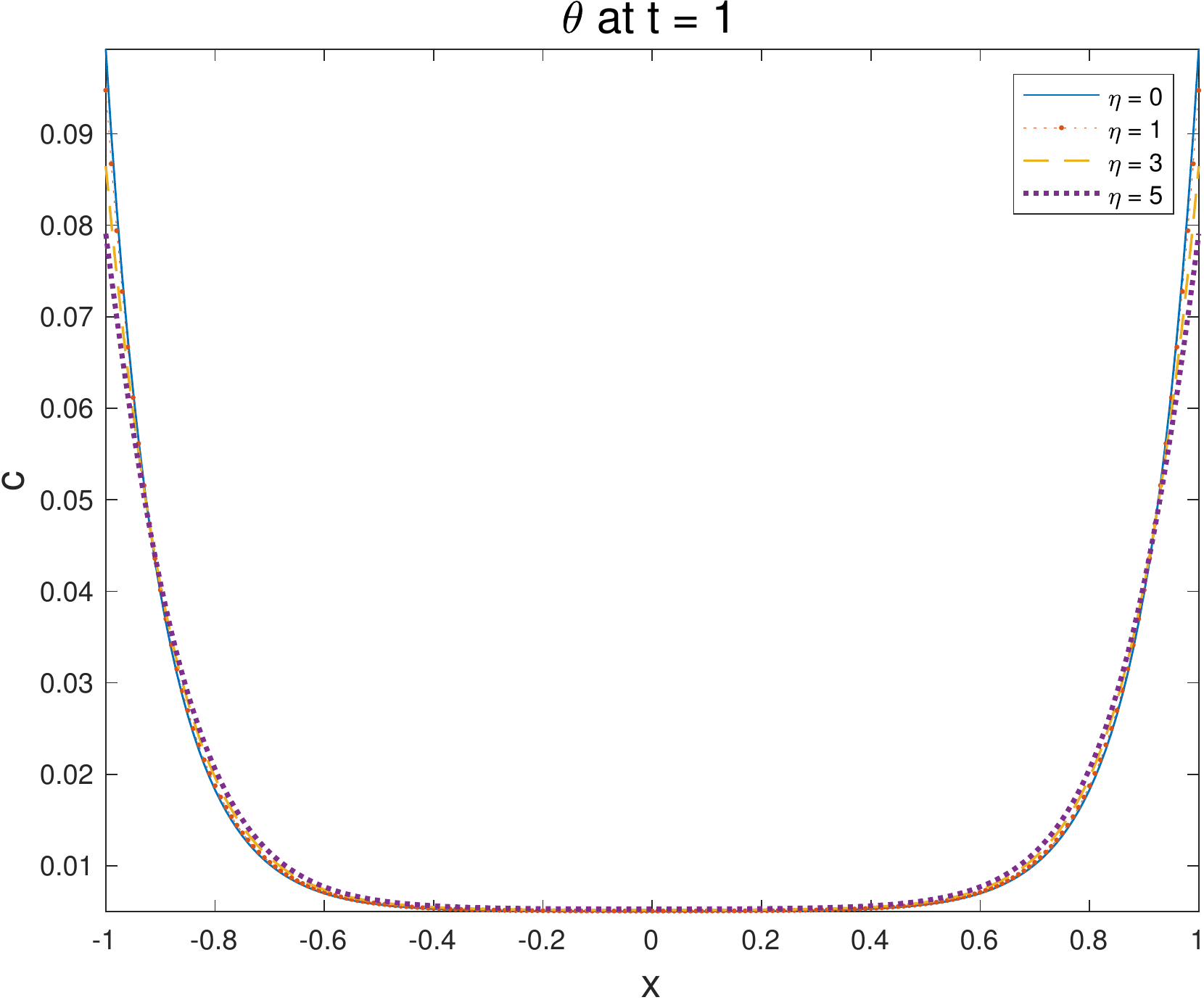}
% 	\end{minipage}
% 	\begin{minipage}[t]{0.45\linewidth}
% 		\includegraphics[width=1.0\textwidth]{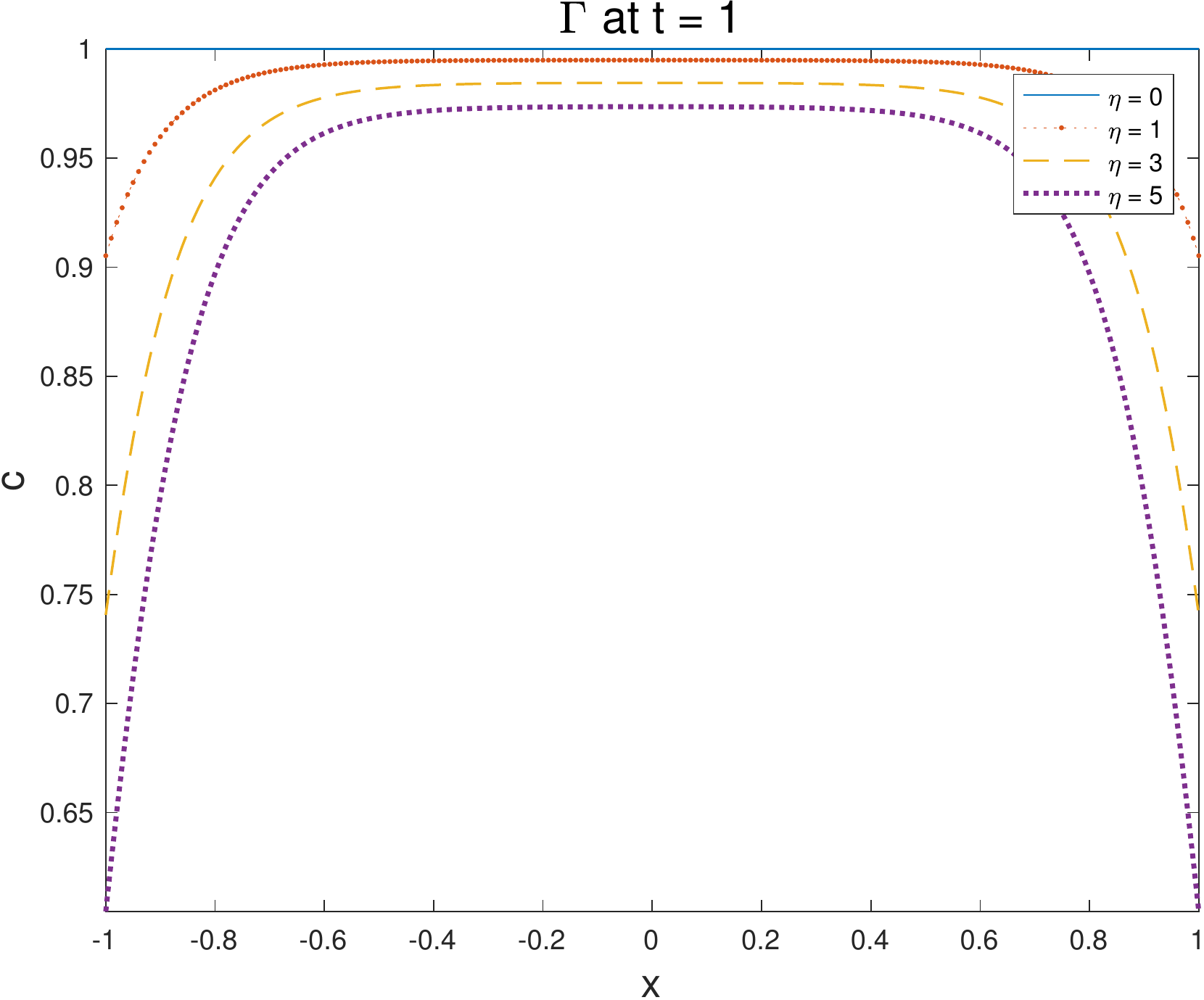}
% 	\end{minipage}
% 	}
% 	\caption{$C_i(x, 1)$ for $i = 1, 2, 3$, $\rho(x, 1), \theta(x, 1)$ and $\Gamma(x, 1)$ with the step-size $\Delta t = 0.005, \Delta x = 0.01$ and $\eta = 0, 1, 3, 5$.}
% 	\label{plot_test1}
% \end{figure}

\begin{figure}[htp]


    \centering
    \includegraphics[width=4.6cm,height=4.8cm]{pic/pic1/c11-eps-converted-to.pdf}
    \includegraphics[width=4.6cm,height=4.8cm]{pic/pic1/c21-eps-converted-to.pdf}
    \includegraphics[width=4.5cm,height=4.8cm]{pic/pic1/c31-eps-converted-to.pdf}
    
    \includegraphics[width=4.5cm,height=4.8cm]{pic/pic1/rho1-eps-converted-to.pdf}
    \includegraphics[width=4.5cm,height=4.8cm]{pic/pic1/theta1-eps-converted-to.pdf}
    \includegraphics[width=4.5cm,height=4.8cm]{pic/pic1/Gamma1-eps-converted-to.pdf}
    \caption{$C_i(x, 1)$ for $i = 1, 2, 3$, $\rho(x, 1), \theta(x, 1)$ and $\Gamma(x, 1)$ with the $\Delta t = 0.005, \Delta x = 0.01$ and $\eta = 0, 1, 3, 5$.}
	\label{plot_test1}
\end{figure}

In the second test, we want to observe the critical value of $\eta$ that makes $\Gamma$ positive. The parameters are set to be $\lambda = 1, \nu = 1, (z_1, z_2, z_3) = (1, -1, 0), (v_1, v_2, v_3) = (0.01, 0.01, 0.01)$. Let $\eta = 8, 8.1, 8.21$, Fig \ref{plot_test2} shows the results of concentrations, total charged and volume densities and voids at time $t = 1$. We can observe that when considering the initial condition \eqref{ini}, taking $\eta = 8.21$ with the mesh size $\Delta t = 0.005, \Delta x = 0.01$ leads to $\Gamma(x_0, 1) \leq 0$ for some $x_0$ in the domain $[-1, 1]$, while smaller $\eta$ can preserve the positivity of $\Gamma$. This indicates that $\eta$ can not be too large, otherwise, it will leads to a meaningless $\Gamma$. To get a better understanding of $\eta$ and the positivity preserving property of $\Gamma$ in a dynamic PNPB model with a certain mesh size, a more detailed discussion will be considered in the future.
%{\color{red} By Proposition \ref{prop4}, the critical $\eta$ might be $\frac{1}{\sum_i v_i m_i^{0}}$? which is $33.3$ in this setting. Does $\Delta t=\Delta x/2$ meet the stability condition?}

% \begin{figure}[htp] 
% 	\centering
% 	\subfigure[]{ 
% 	\begin{minipage}[t]{0.45\linewidth}
% 		\includegraphics[width=1.0\textwidth]{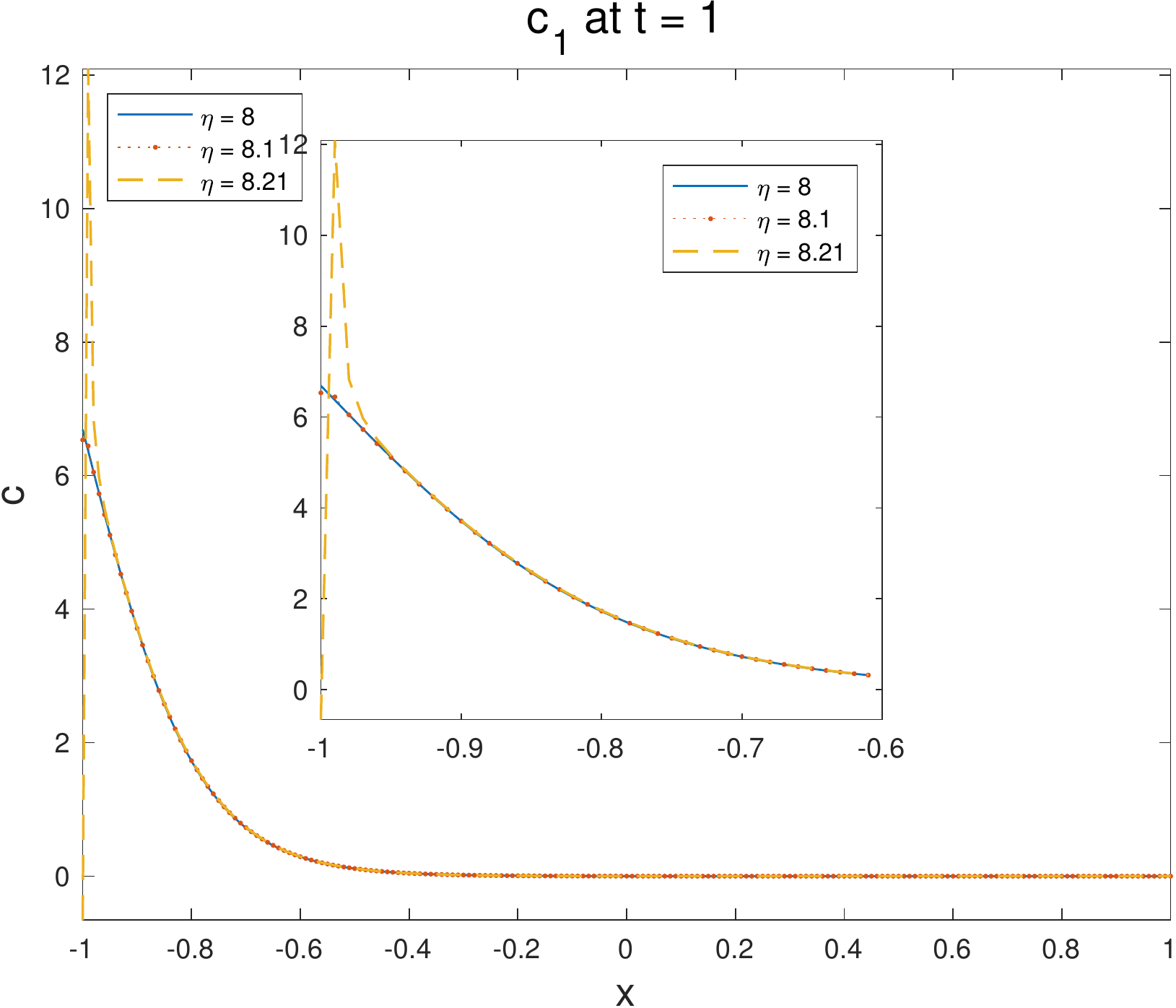}
% 	\end{minipage} 
% 	\begin{minipage}[t]{0.45\linewidth}
% 		\includegraphics[width=1.0\textwidth]{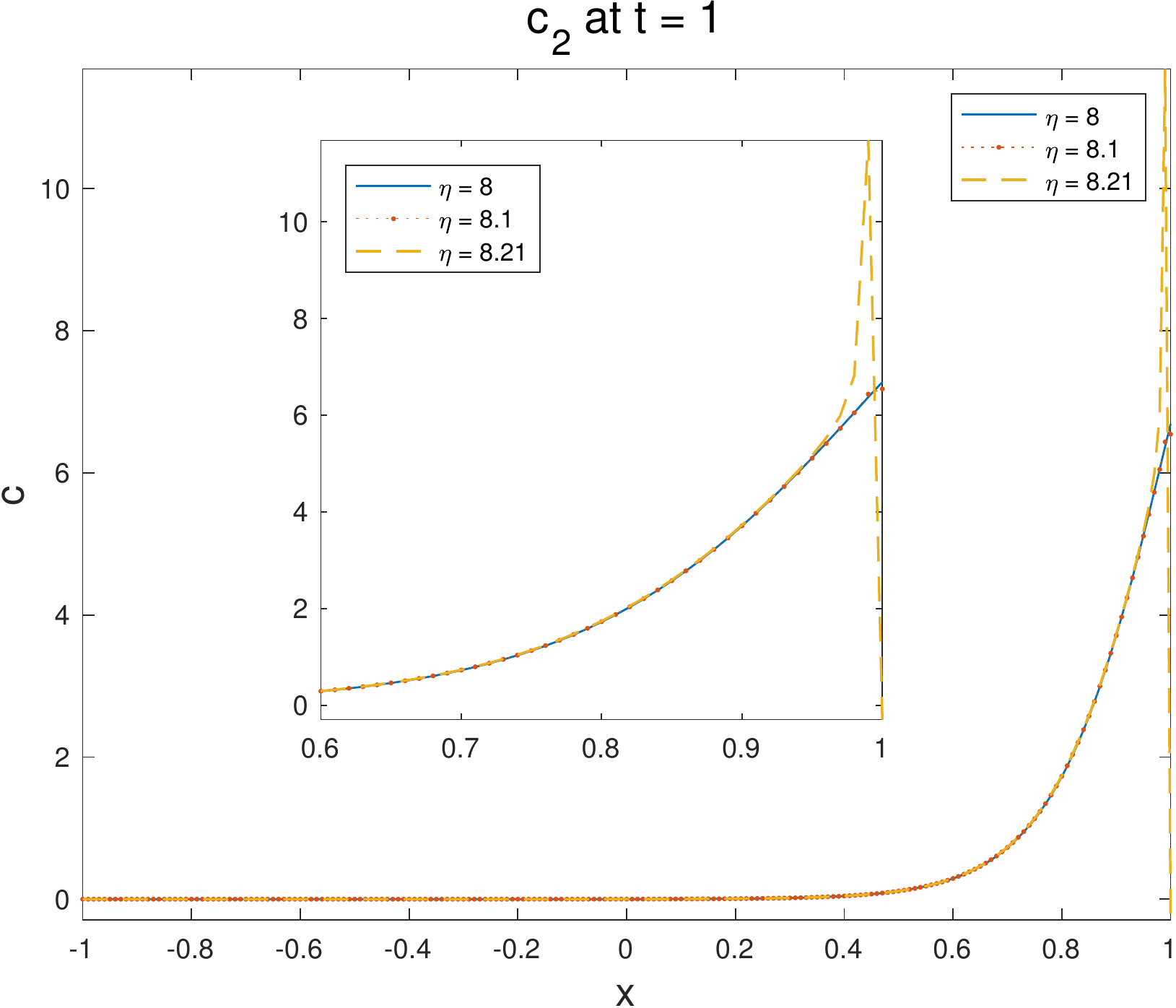}
% 	\end{minipage}
% 	}
% 	\subfigure[]{ 
% 	\begin{minipage}[t]{0.45\linewidth}
% 	\includegraphics[width=1.0\textwidth]{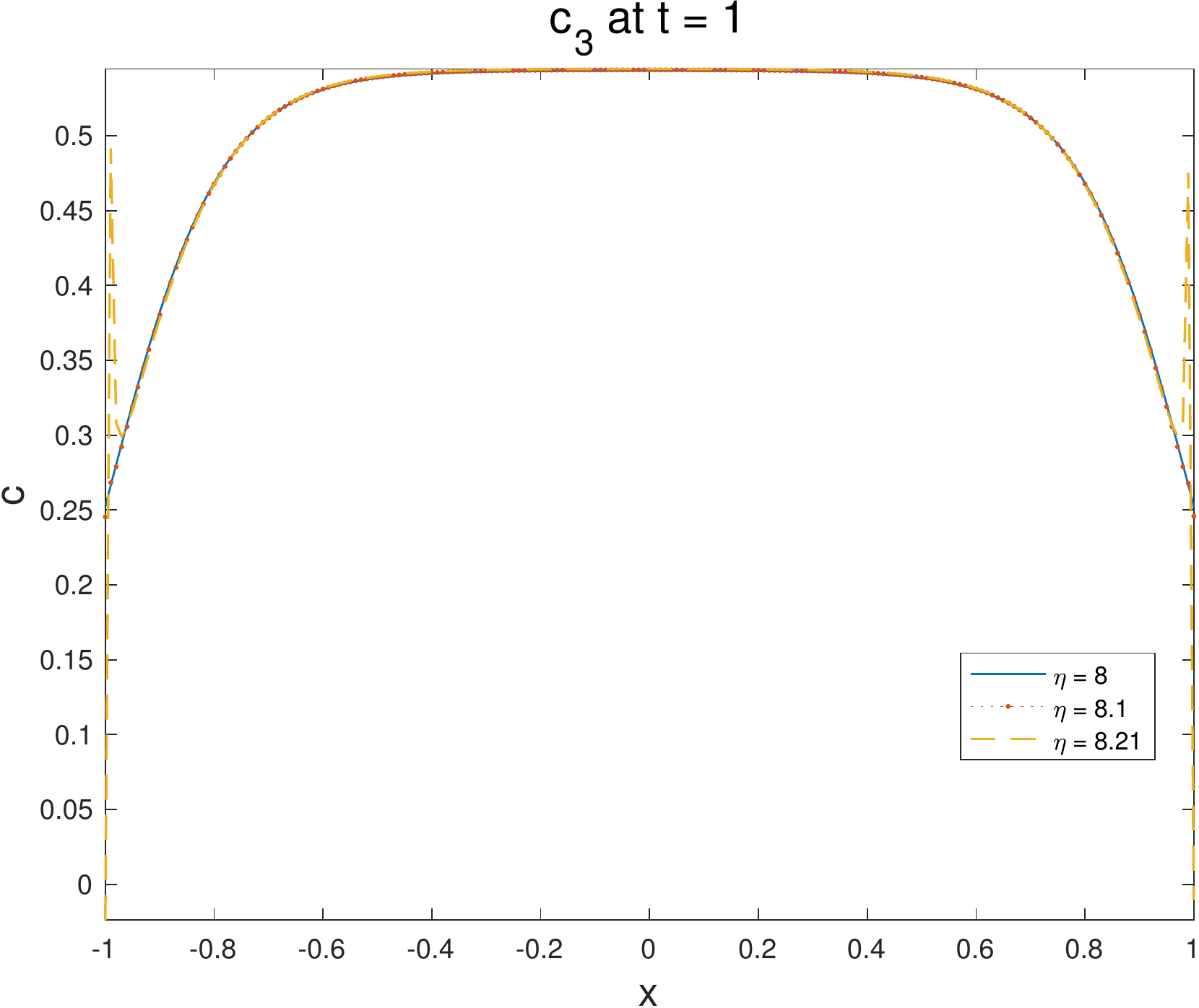}
% 	\end{minipage}
% 	\begin{minipage}[t]{0.45\linewidth}
% 		\includegraphics[width=1.0\textwidth]{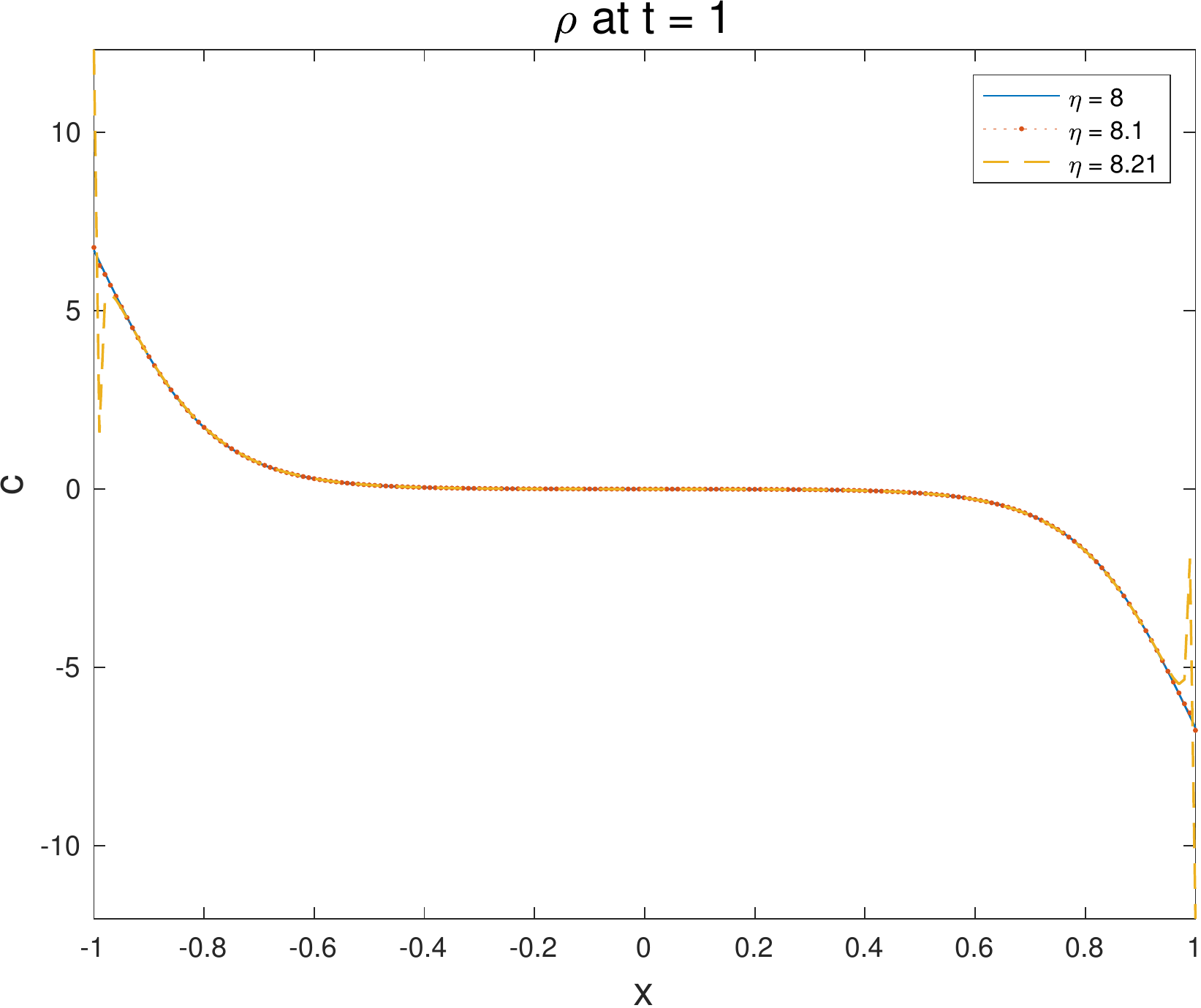}
% 	\end{minipage}  
% 	}
% 	\subfigure[]{ 
% 	\begin{minipage}[t]{0.45\linewidth}
% 		\includegraphics[width=1.0\textwidth]{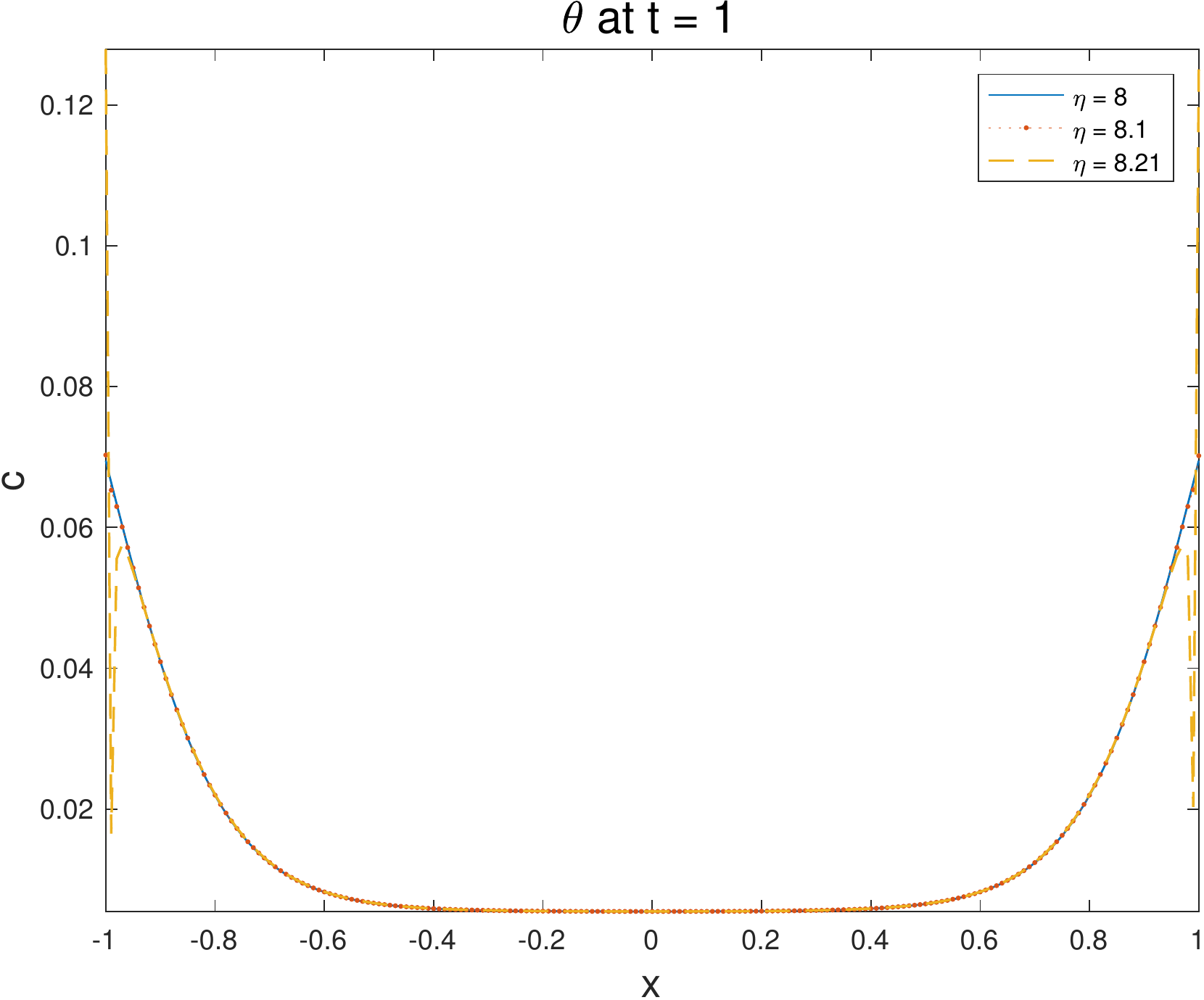}
% 	\end{minipage}
% 	\begin{minipage}[t]{0.45\linewidth}
% 		\includegraphics[width=1.0\textwidth]{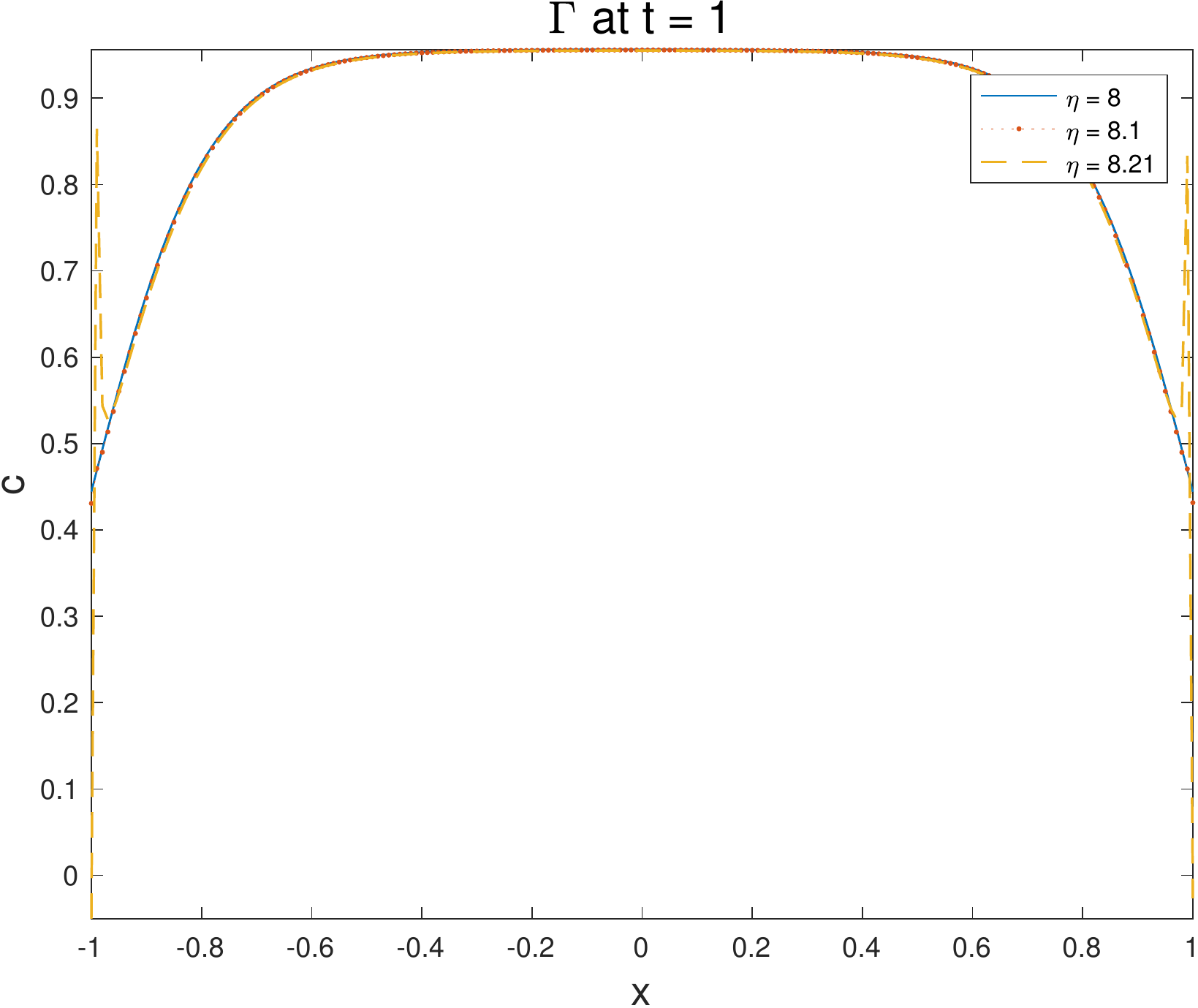}
% 	\end{minipage}
% 	}
% 	\caption{$C_i(x, 1)$ for $i = 1, 2, 3$, $\rho(x, 1), \theta(x, 1)$ and $\Gamma(x, 1)$ with the step-size $\Delta t = 0.005, \Delta x = 0.01$ and $\eta = 8, 8.1, 8.21$.}
% 	\label{plot_test2}
% \end{figure}

\begin{figure}[htp]


    \centering
    \includegraphics[width=4.5cm,height=4.8cm]{pic/pic2/c11-eps-converted-to.pdf}
    \includegraphics[width=4.5cm,height=4.8cm]{pic/pic2/c21-eps-converted-to.pdf}
    \includegraphics[width=4.5cm,height=4.8cm]{pic/pic2/c31-eps-converted-to.pdf}
    
    \includegraphics[width=4.5cm,height=4.8cm]{pic/pic2/rho1-eps-converted-to.pdf}
    \includegraphics[width=4.5cm,height=4.8cm]{pic/pic2/theta1-eps-converted-to.pdf}
    \includegraphics[width=4.5cm,height=4.8cm]{pic/pic2/Gamma1-eps-converted-to.pdf}
    \caption{$C_i(x, 1)$ for $i = 1, 2, 3$, $\rho(x, 1), \theta(x, 1)$ and $\Gamma(x, 1)$ with the $\Delta t = 0.005, \Delta x = 0.01$ and $\eta = 8, 8.1, 8.21$.}
	\label{plot_test2}
\end{figure}

\subsubsection{Parameters related to the correlated electric field: $\nu, \lambda$}
Recall that the correlated electric potential $\phi = \mathcal{K}_{\nu, \lambda}*\rho$, where we take the kernel $\mathcal{K}_{\nu, \lambda} (x) = \frac{\lambda}{2 \nu^2} \exp(-|x|/\lambda)$. The dimensionless parameters $\lambda, \nu$ play the role of correlation length and Debye length after scaling and thus have an effect on the correlated electric potential $\phi$, whose gradient influences the dynamic of the concentrations $C_i$, for all $i$.

In the third numerical test, the parameters $\eta = 1, \lambda = 1, (z_1, z_2, z_3) = (1, -1, 0), (v_1, v_2, v_3) = (0.01, 0.01, 0.01)$, and hence $v_0 = 0.01$. Fig \ref{plot_test5_nu} illustrates the results of concentrations, total charged/volume densities and the electric potential at time $t = 1$ with respect to $\nu = 1/3, 1, 3$. The larger the parameter $\nu$ is, the weaker internal correlated electric potential $\phi$ is, hence the stronger the aggregation is on the boundary layer. Clearly, we can see from Fig \ref{plot_test5_nu} that for larger $\nu$, the peaks of concentrations $C_1$ and $C_2$ are higher on the boundary layers. 
% Furthermore the smaller the parameter $\nu$ is, the larger the internal correlated electric potential $\phi$ is. 
% And the steric effect squeezes water from the boundary to the middle.
%{\color{red} Why it is $\phi+z_iV_0$ instead of $\phi+V_0$ in the last subfigure?}

% \begin{figure}[htp] 
% 	\centering
% 	\subfigure[]{ 
% 	\begin{minipage}[t]{0.45\linewidth}
% 		\includegraphics[width=1.0\textwidth]{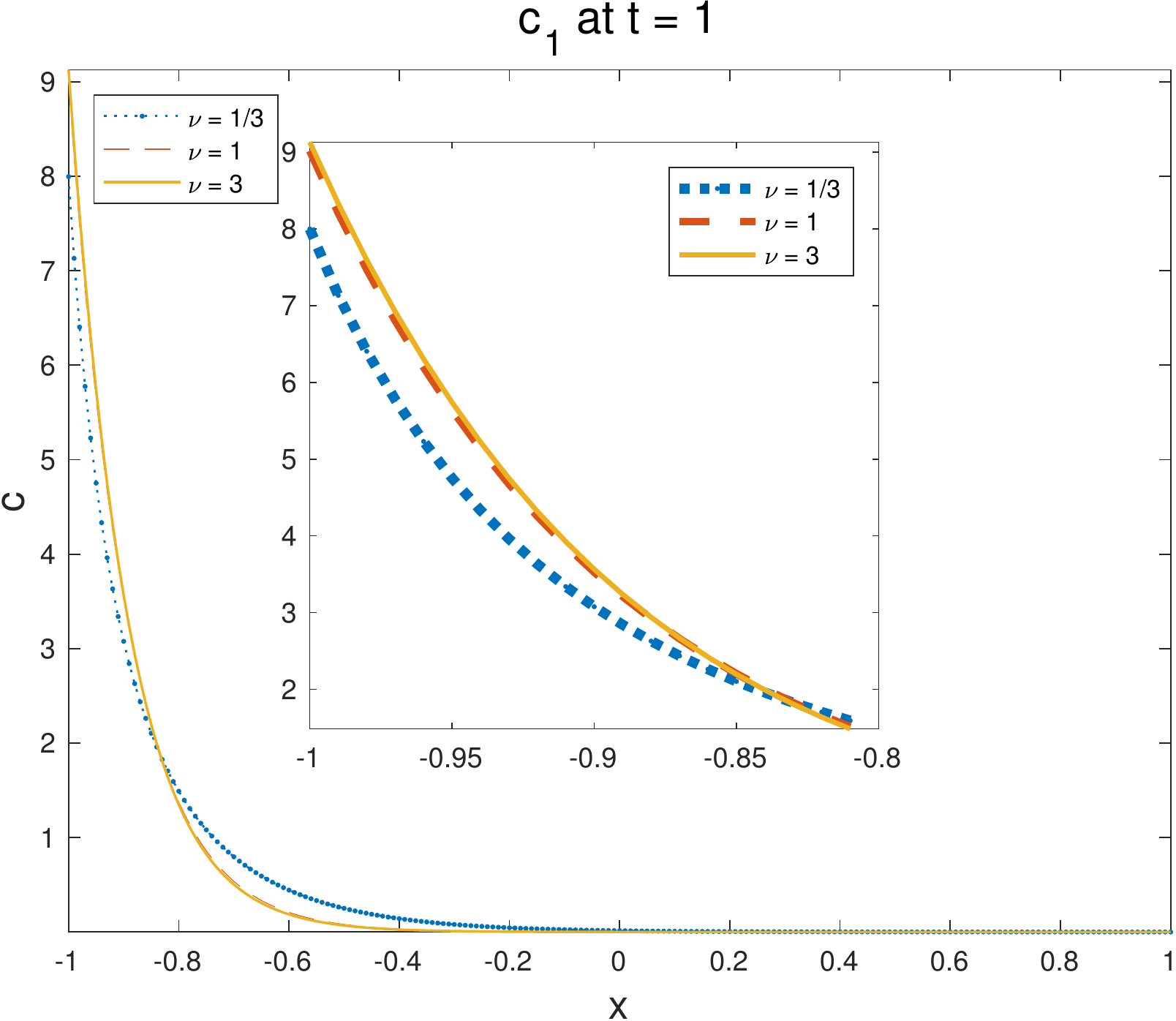}
% 	\end{minipage} 
% 	\begin{minipage}[t]{0.45\linewidth}
% 		\includegraphics[width=1.0\textwidth]{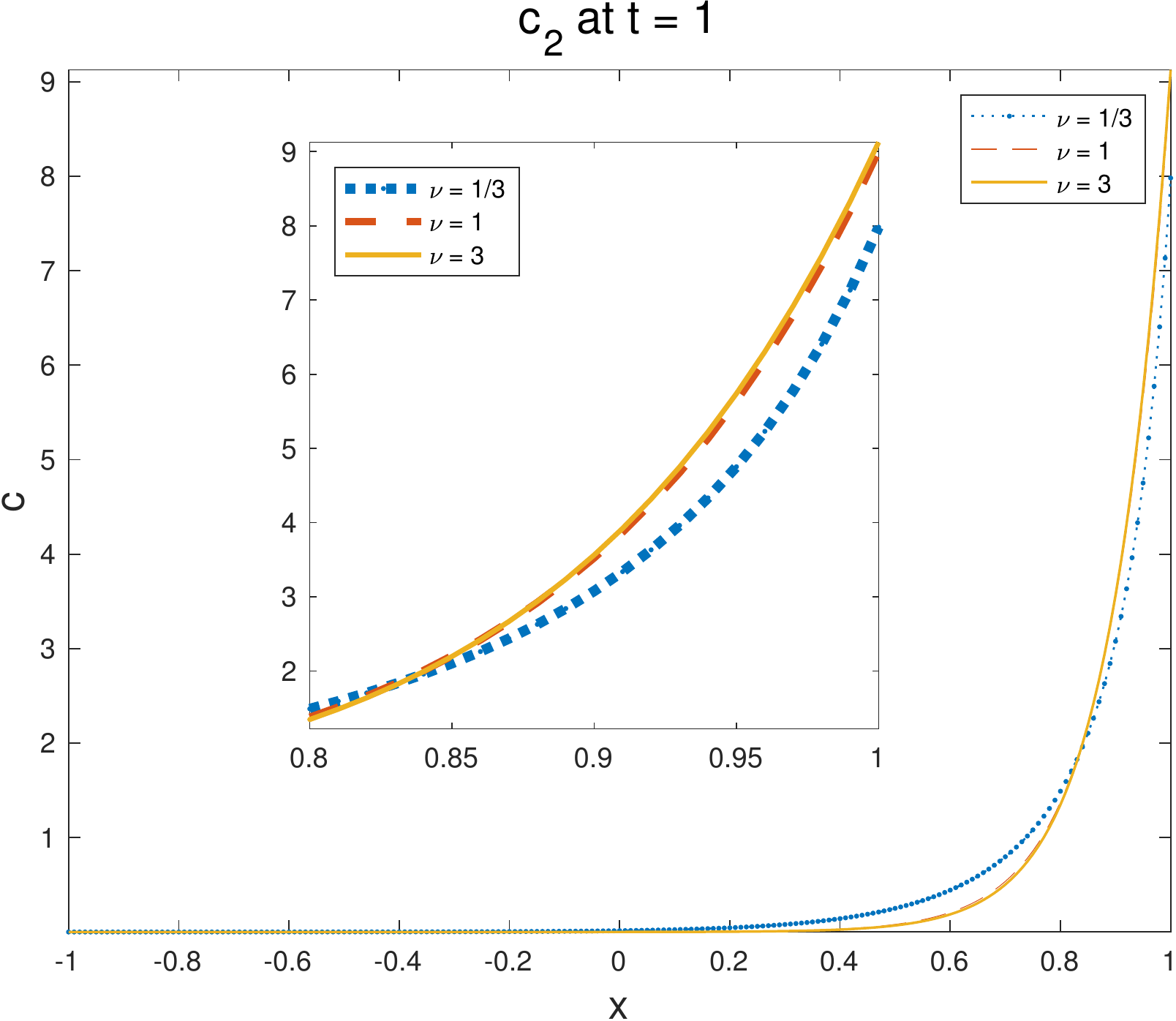}
% 	\end{minipage}
% 	}
% 	\subfigure[]{ 
% 	\begin{minipage}[t]{0.45\linewidth}
% 	\includegraphics[width=1.0\textwidth]{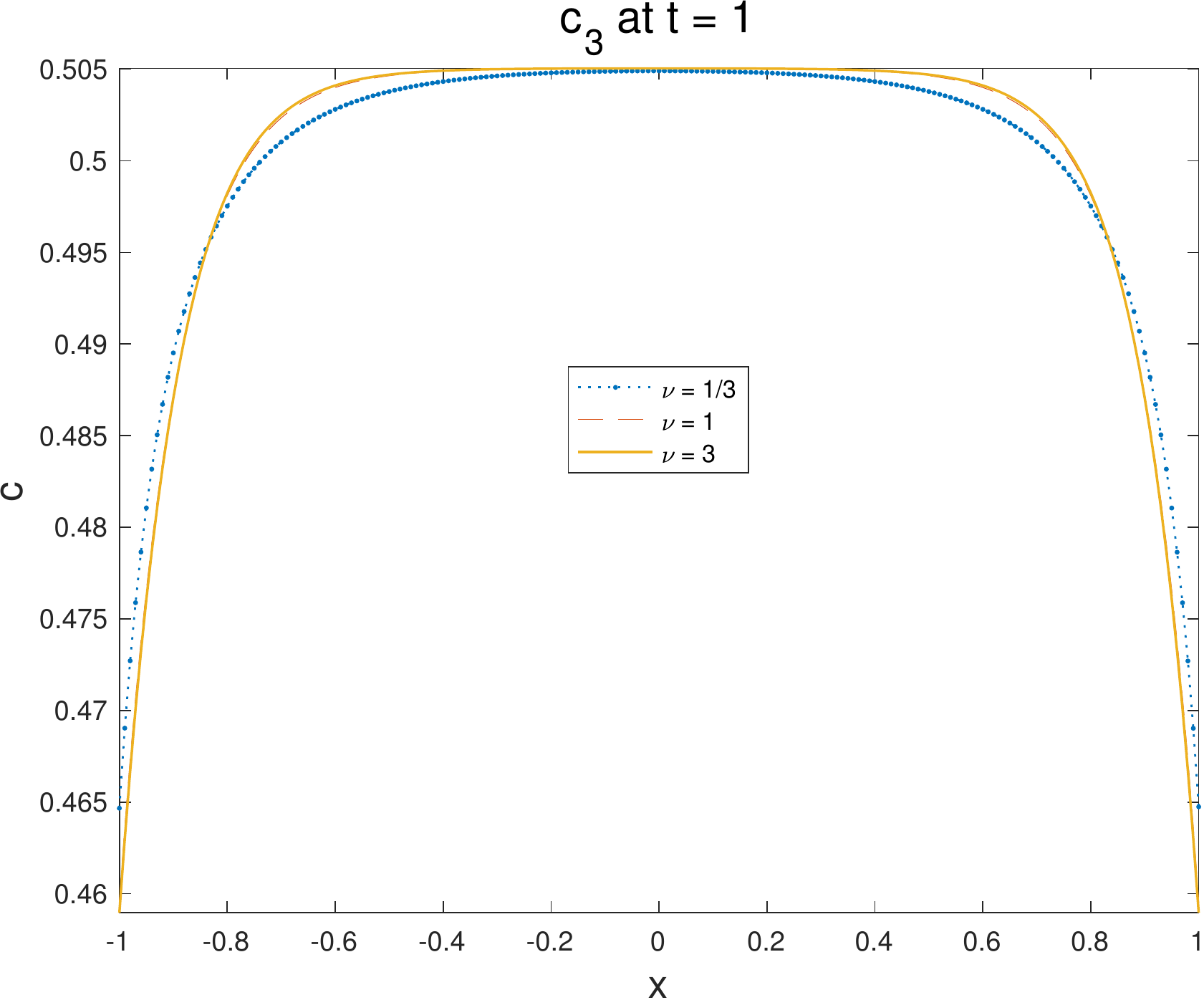}
% 	\end{minipage}
% 	\begin{minipage}[t]{0.45\linewidth}
% 		\includegraphics[width=1.0\textwidth]{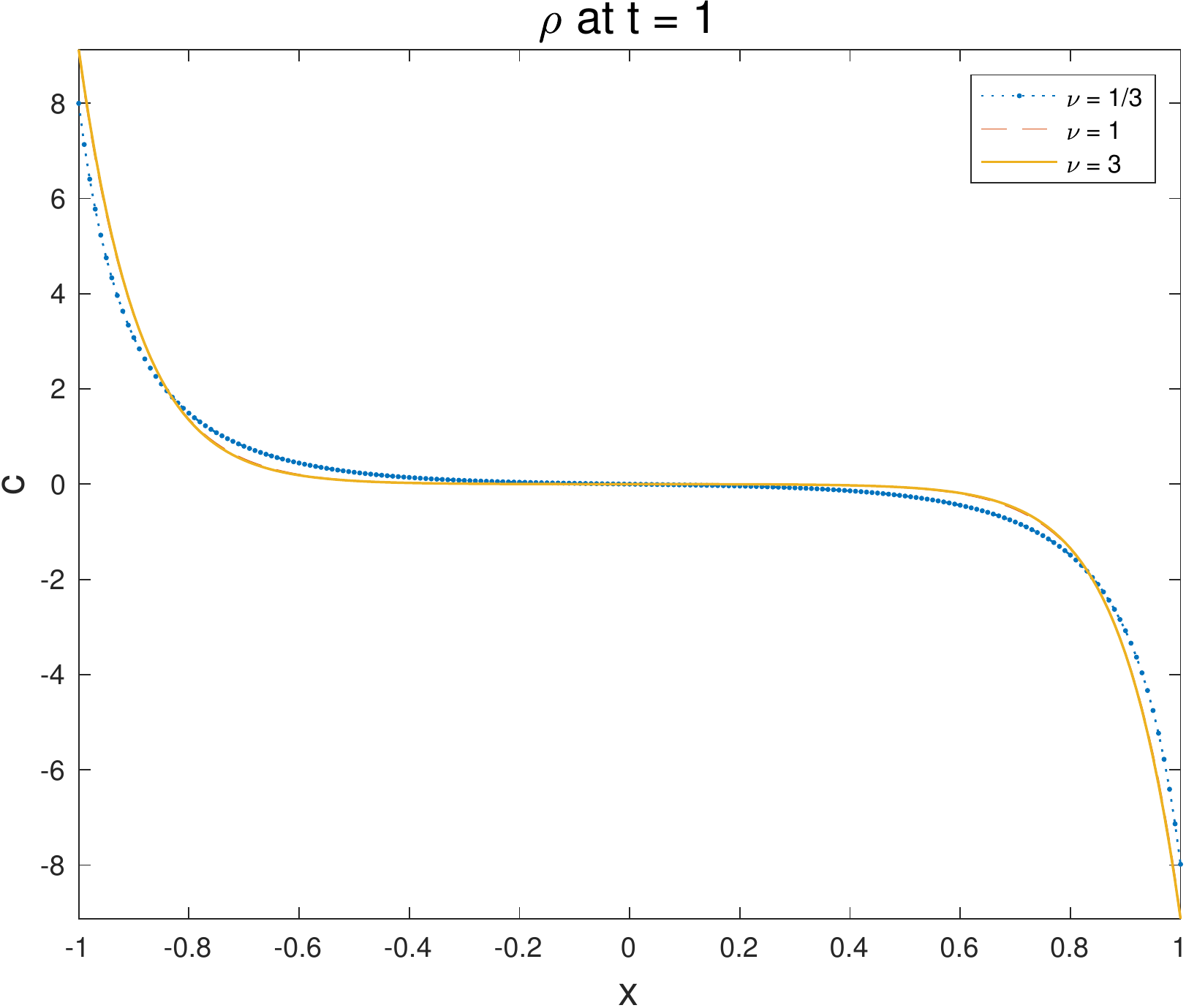}
% 	\end{minipage} 
% 	}
% 	\subfigure[]{ 
% 	\begin{minipage}[t]{0.45\linewidth}
% 		\includegraphics[width=1.0\textwidth]{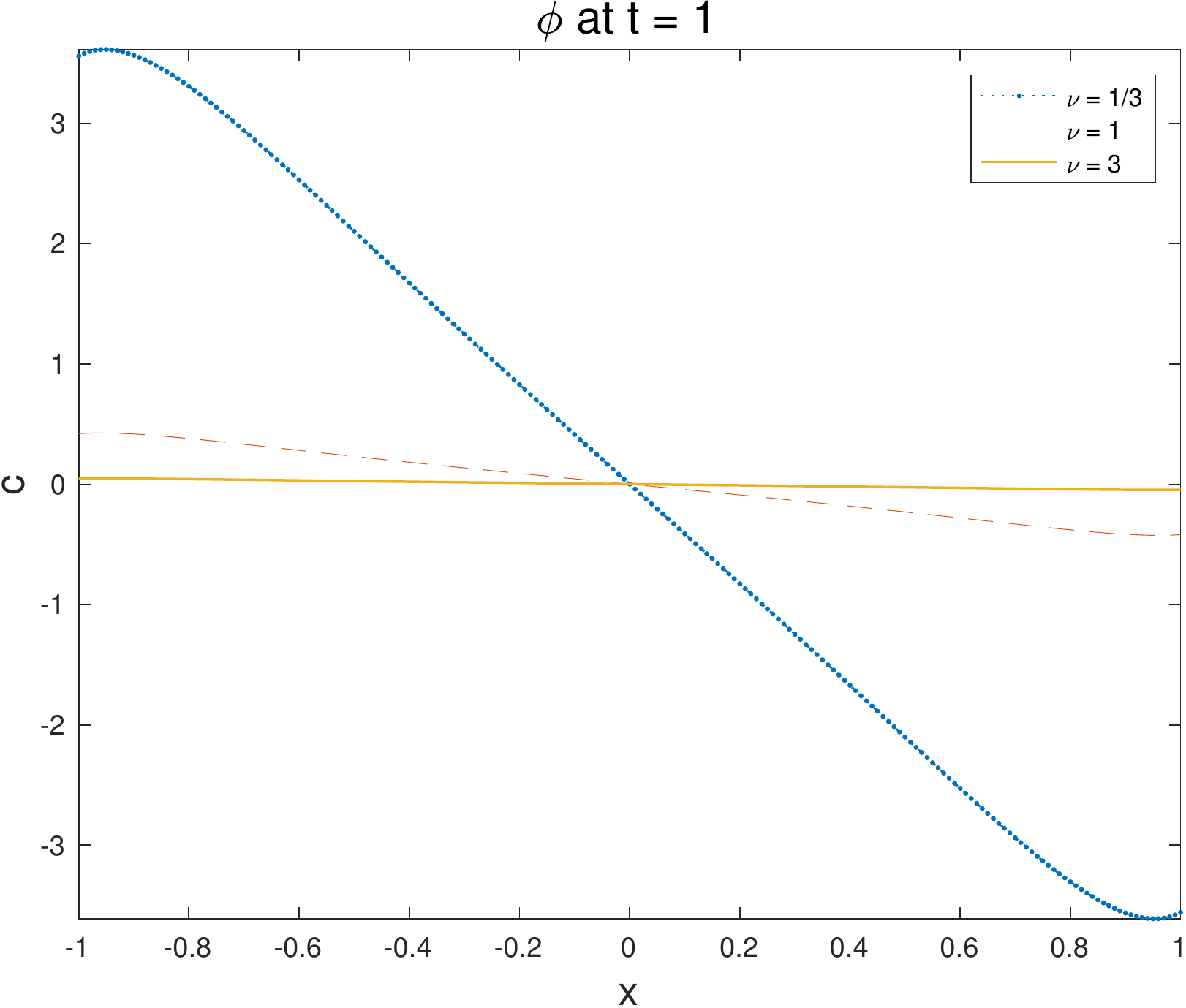}
% 	\end{minipage}
% 	\begin{minipage}[t]{0.45\linewidth}
% 		\includegraphics[width=1.0\textwidth]{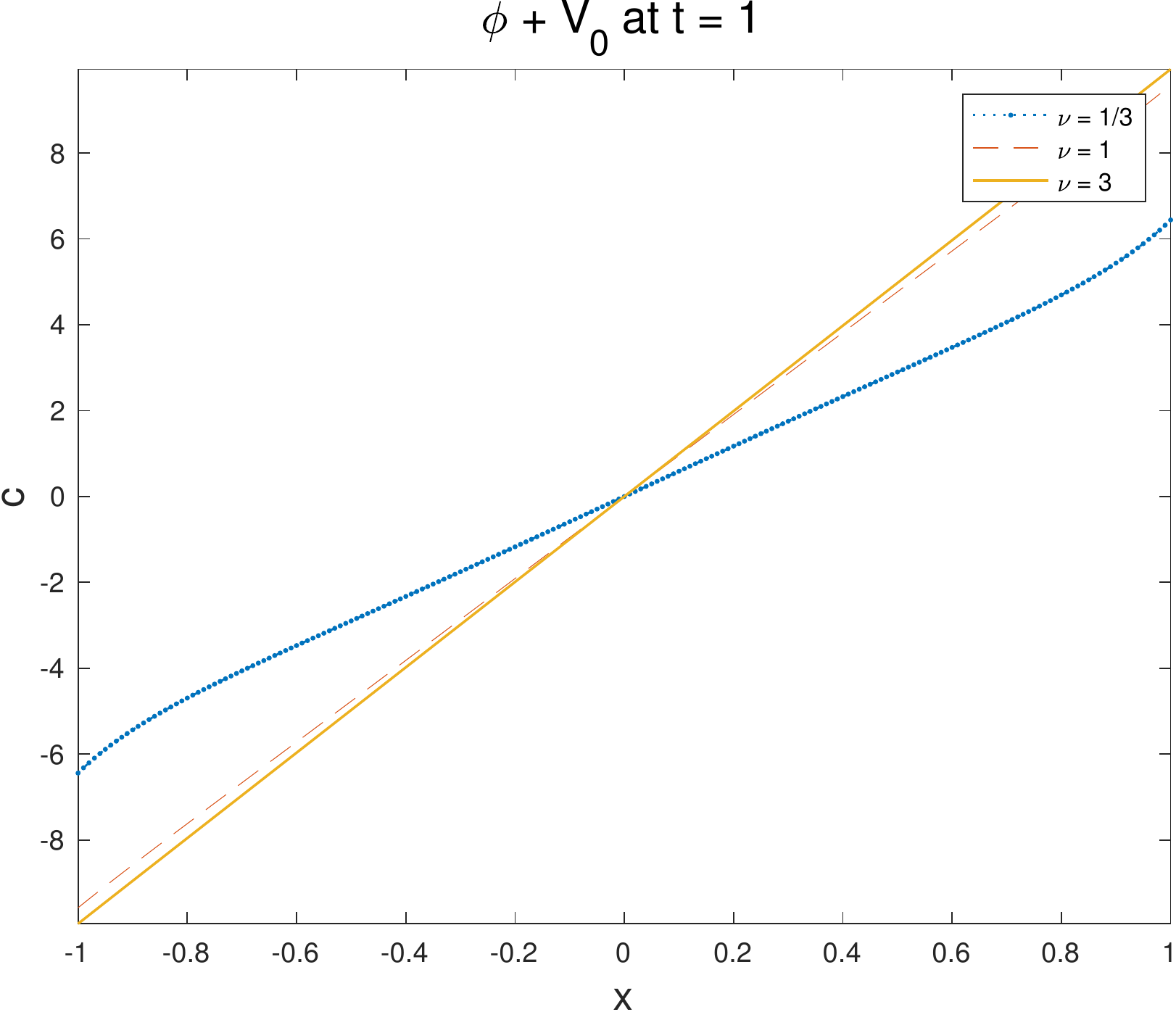}
% 	\end{minipage}
% 	}
% 	\caption{$C_i(x, 1)$ for $i = 1, 2, 3$, $\rho(x, 1), \phi(x, 1)$ and $\phi(x, 1) + z_1 V_0(x)$ with the step-size $\Delta t = 0.005, \Delta x = 0.01$ and $\nu = 1/3, 1, 3$.}
% 	\label{plot_test5_nu}
% \end{figure}

\begin{figure}[htp]


    \centering
    \includegraphics[width=4.6cm,height=4.8cm]{pic/pic6/c11-eps-converted-to.pdf}
    \includegraphics[width=4.6cm,height=4.8cm]{pic/pic6/c21-eps-converted-to.pdf}
    \includegraphics[width=4.6cm,height=4.8cm]{pic/pic6/c31-eps-converted-to.pdf}
    
    \includegraphics[width=4.6cm,height=4.8cm]{pic/pic6/rho1-eps-converted-to.pdf}
    \includegraphics[width=4.6cm,height=4.8cm]{pic/pic6/phi1-eps-converted-to.pdf}
    \includegraphics[width=4.6cm,height=4.8cm]{pic/pic6/phi11-eps-converted-to.pdf}
    \caption{$C_i(x, 1)$ for $i = 1, 2, 3$, $\rho(x, 1), \phi(x, 1)$ and $\phi(x, 1) + V_0(x)$ with $\Delta t = 0.005, \Delta x = 0.01$ and $\nu = 1/3, 1, 3$.}
	\label{plot_test5_nu}
\end{figure}

In the fourth numerical test, the parameters $\eta = 1, \nu = 1, (z_1, z_2, z_3) = (1, -1, 0), (v_1, v_2, v_3) $ $= (0.01, 0.01, 0.01)$. Fig \ref{plot_test5_lambda} shows the corresponding distributions at time $t = 1$ when $\lambda = 1/10, 1, 10$. We can see from Fig \ref{plot_test5_lambda} that the equilibrium is more insensitive to the parameter $\lambda$ than $\nu$. And larger $\lambda$ leads to a stronger internal correlated electric potential $\phi$. 
% Furthermore, the steric effect squeezes water from the boundary to the middle.

% \begin{figure}[htp] 
% 	\centering
% 	\subfigure[]{ 
% 	\begin{minipage}[t]{0.45\linewidth}
% 		\includegraphics[width=1.0\textwidth]{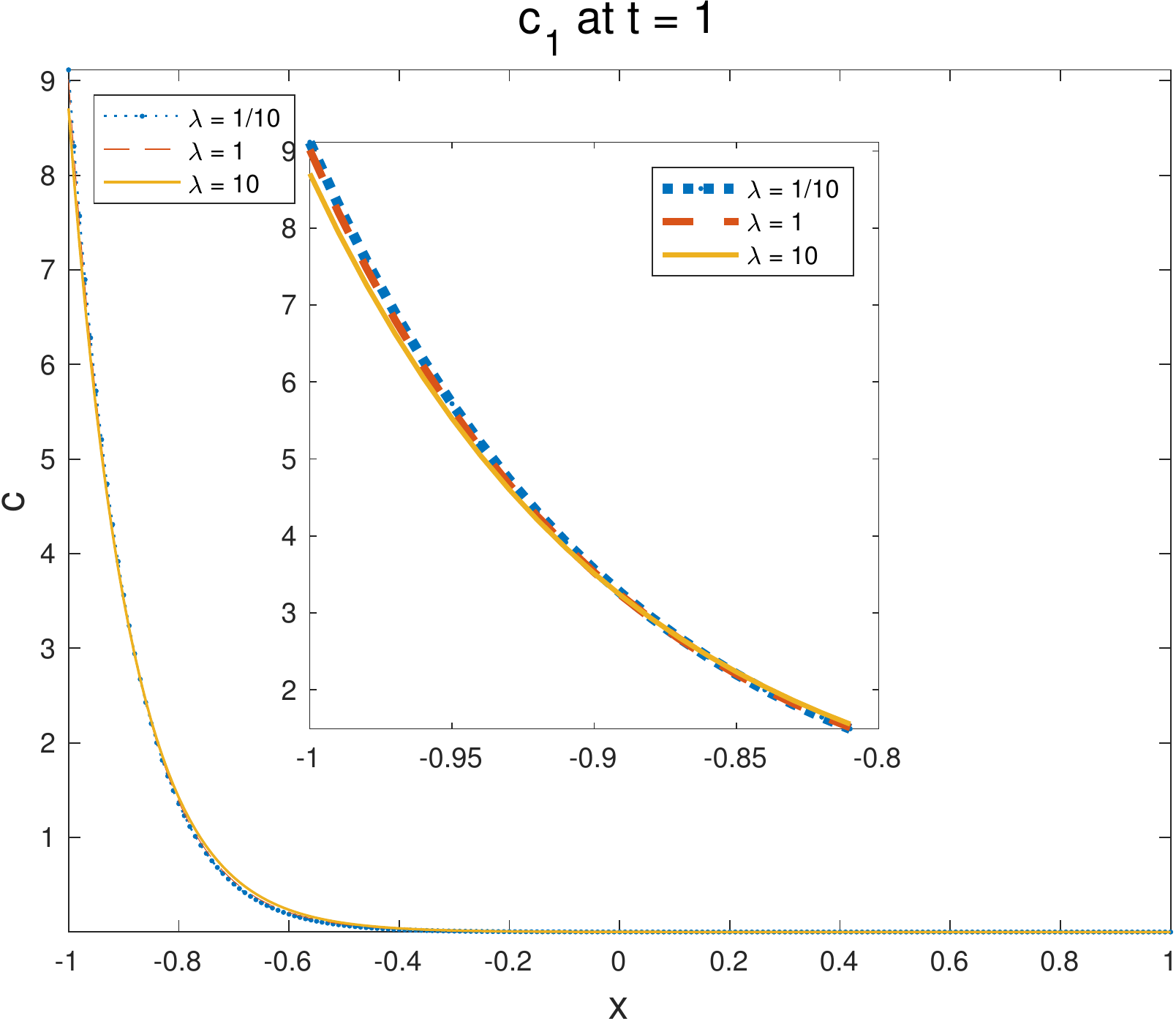}
% 	\end{minipage} 
% 	\begin{minipage}[t]{0.45\linewidth}
% 		\includegraphics[width=1.0\textwidth]{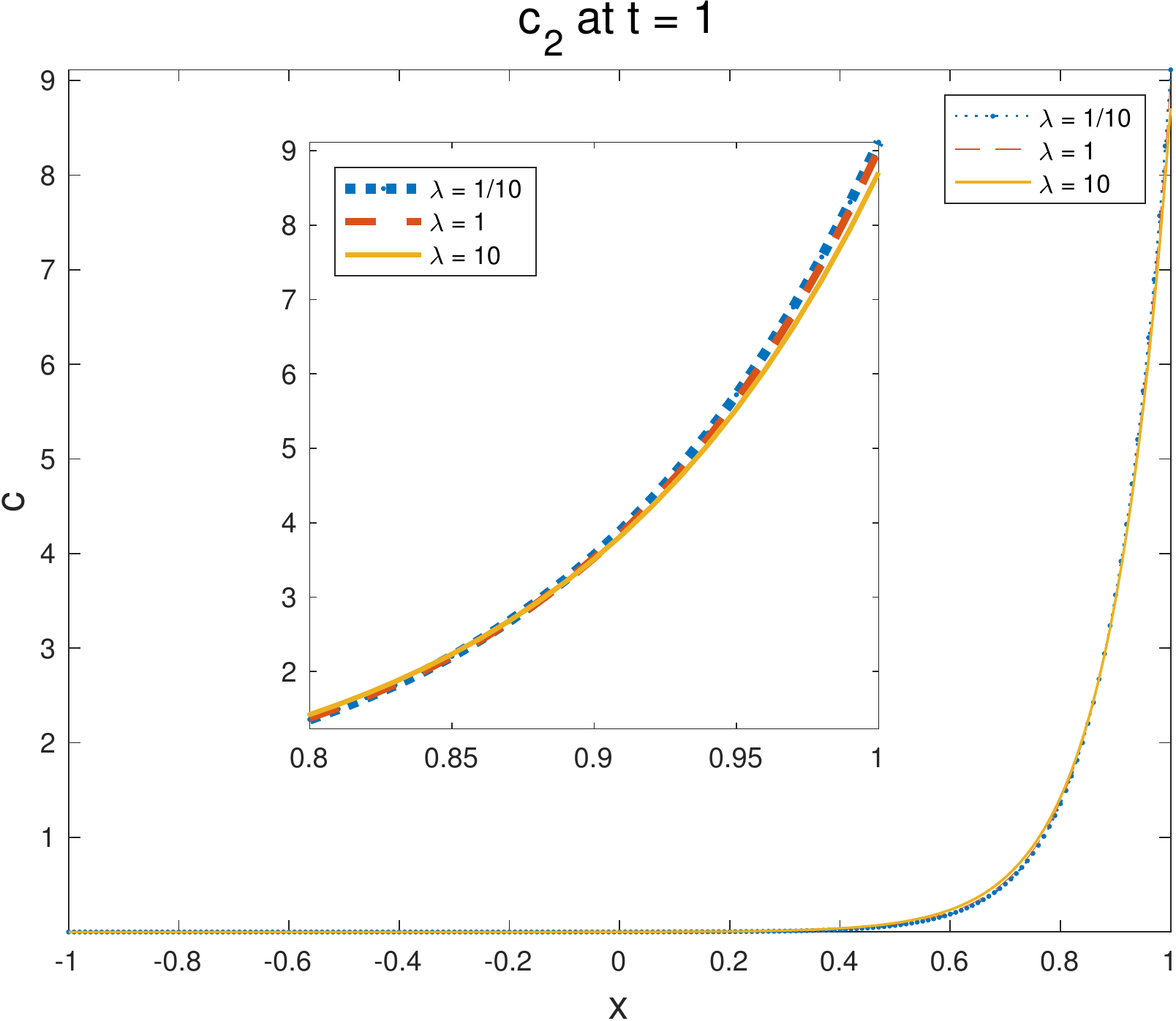}
% 	\end{minipage}
% 	}
% 	\subfigure[]{ 
% 	\begin{minipage}[t]{0.45\linewidth}
% 	\includegraphics[width=1.0\textwidth]{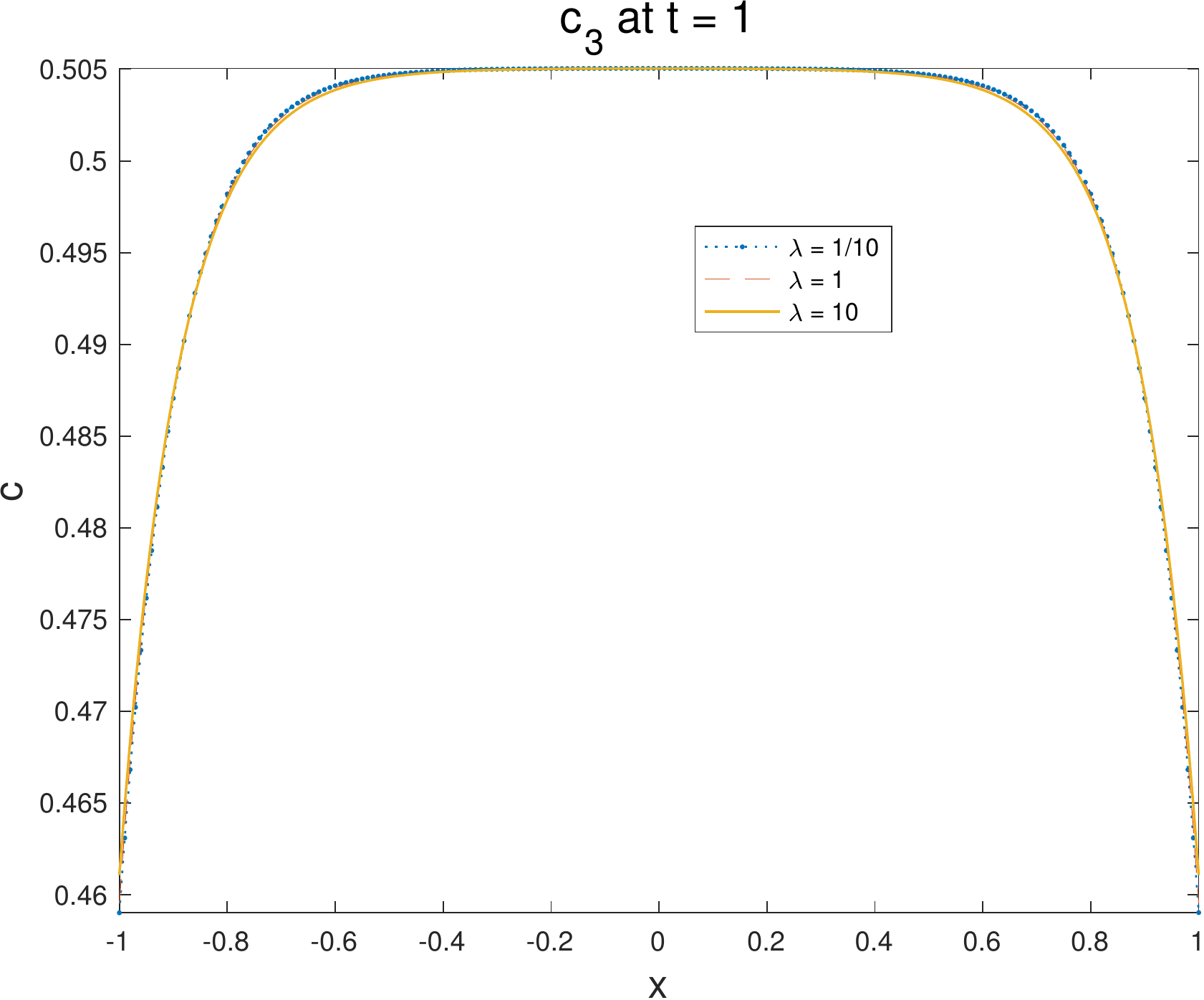}
% 	\end{minipage}
% 	\begin{minipage}[t]{0.45\linewidth}
% 		\includegraphics[width=1.0\textwidth]{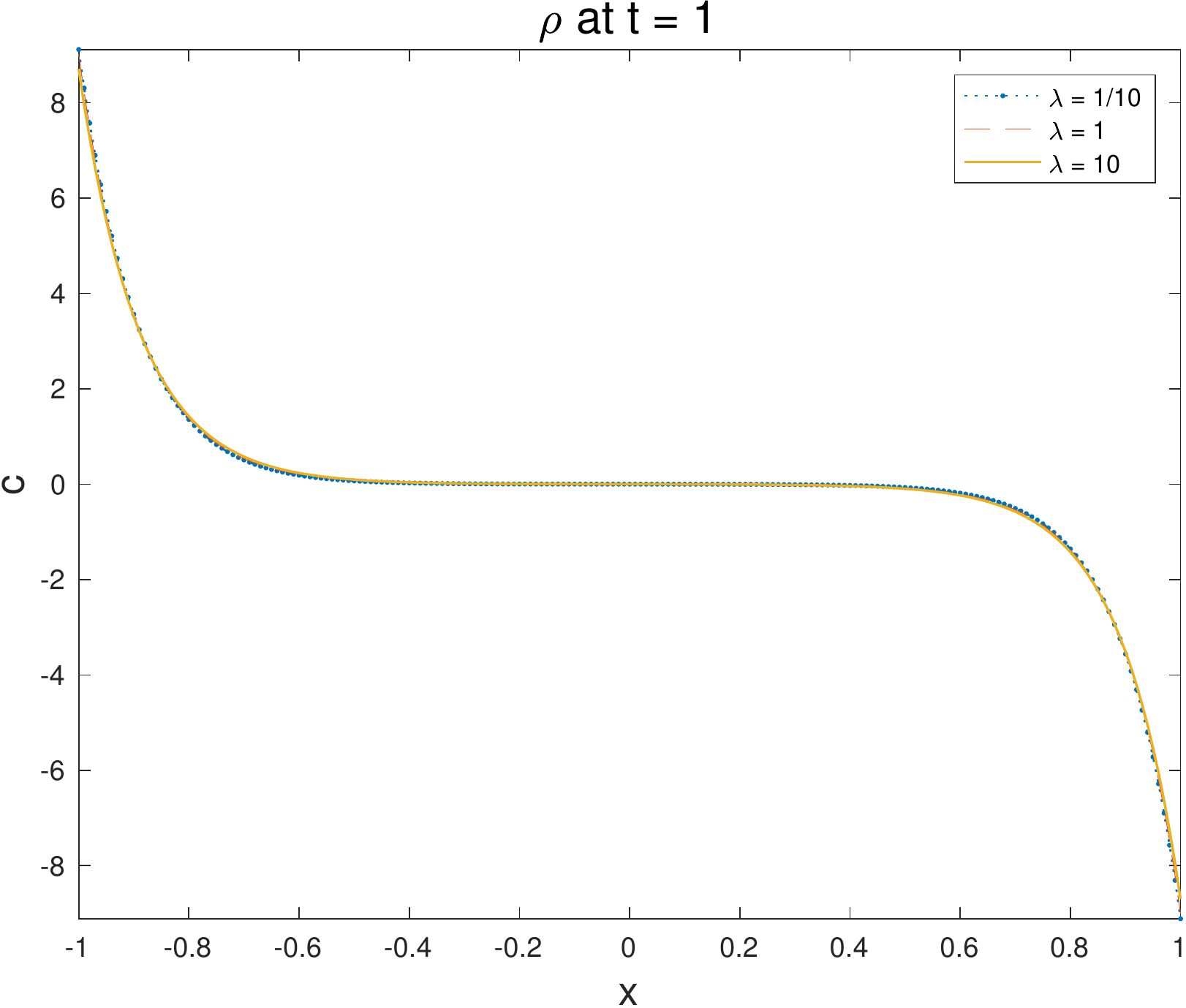}
% 	\end{minipage} 
% 	}
% 	\subfigure[]{ 
% 	\begin{minipage}[t]{0.45\linewidth}
% 		\includegraphics[width=1.0\textwidth]{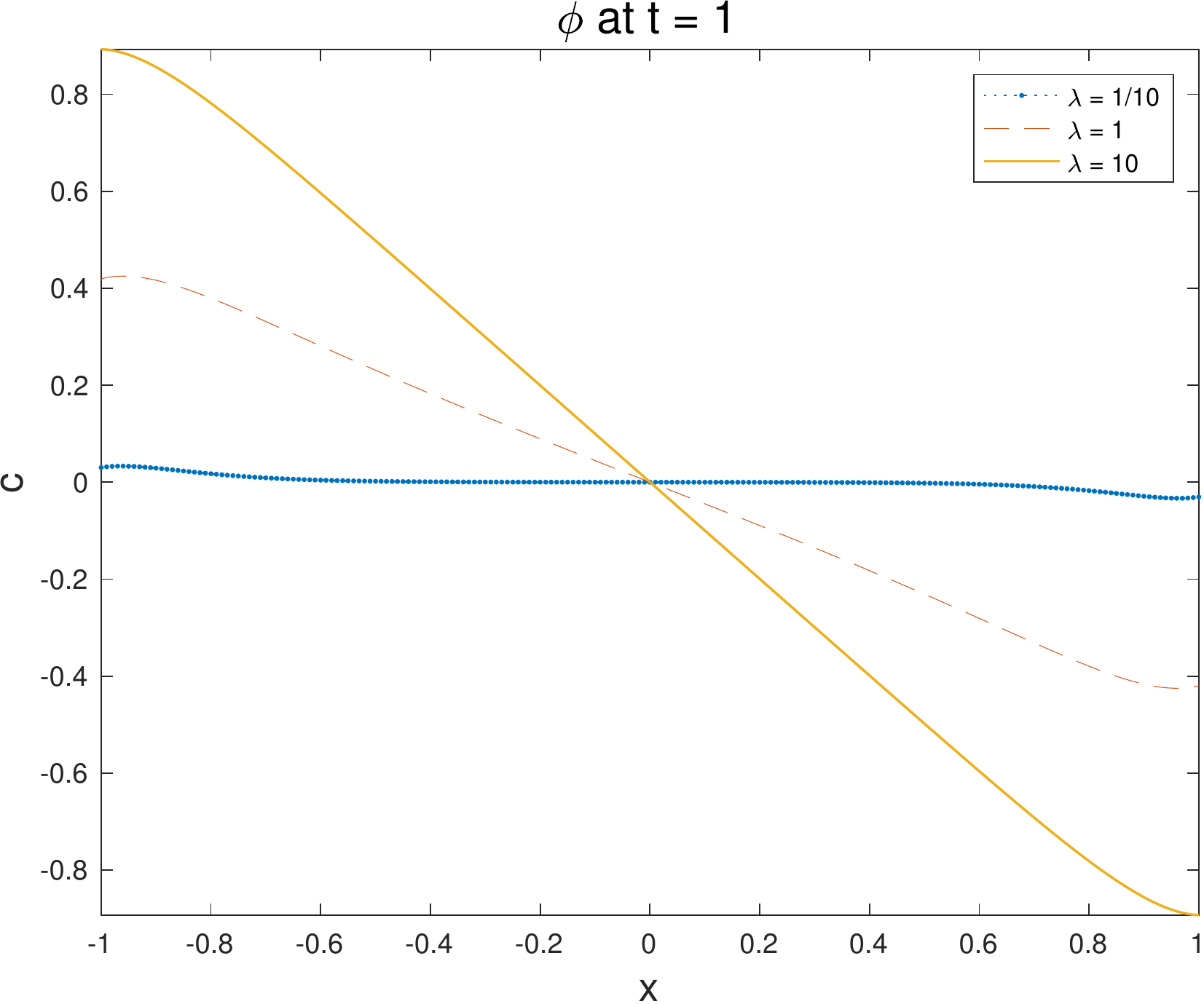}
% 	\end{minipage}
% 	\begin{minipage}[t]{0.45\linewidth}
% 		\includegraphics[width=1.0\textwidth]{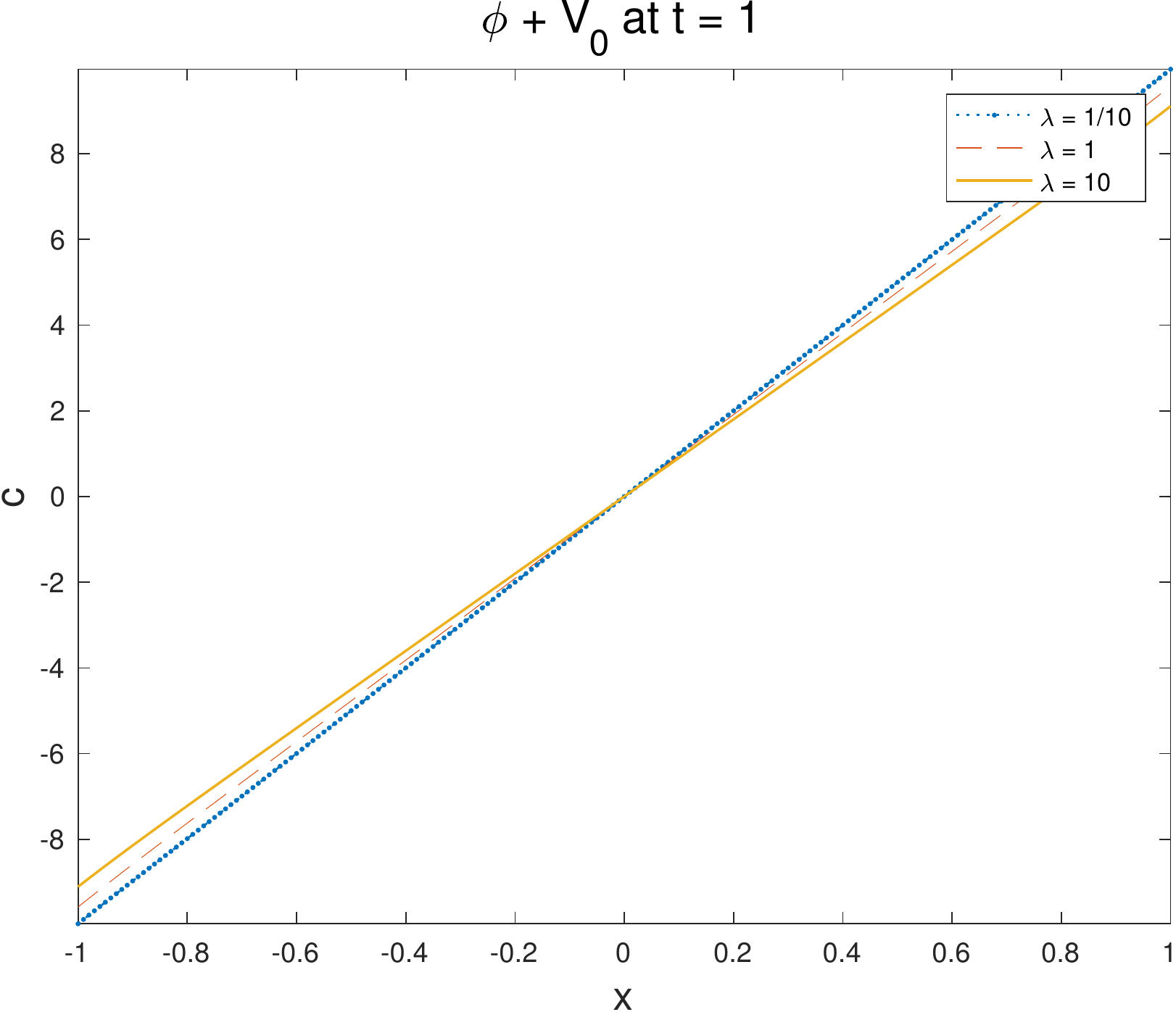}
% 	\end{minipage}
% 	}
% 	\caption{$C_i(x, 1)$ for $i = 1, 2, 3$, $\rho(x, 1), \phi(x, 1)$ and $\phi(x, 1) + z_1 V_0(x)$ with the step-size $\Delta t = 0.005, \Delta x = 0.01$ and $\lambda = 1/10, 1, 10$.}
% 	\label{plot_test5_lambda}
% \end{figure}

\begin{figure}[htp]


    \centering
    \includegraphics[width=4.6cm,height=4.8cm]{pic/pic5/c11-eps-converted-to.pdf}
    \includegraphics[width=4.6cm,height=4.8cm]{pic/pic5/c21-eps-converted-to.pdf}
    \includegraphics[width=4.6cm,height=4.8cm]{pic/pic5/c31-eps-converted-to.pdf}
    
    \includegraphics[width=4.6cm,height=4.8cm]{pic/pic5/rho1-eps-converted-to.pdf}
    \includegraphics[width=4.6cm,height=4.8cm]{pic/pic5/phi1-eps-converted-to.pdf}
    \includegraphics[width=4.6cm,height=4.8cm]{pic/pic5/phi11-eps-converted-to.pdf}
    \caption{$C_i(x, 1)$ for $i = 1, 2, 3$, $\rho(x, 1), \phi(x, 1)$ and $\phi(x, 1) + V_0(x)$ with $\Delta t = 0.005, \Delta x = 0.01$ and $\lambda = 1/10, 1, 10$.}
	\label{plot_test5_lambda}
\end{figure}

\subsubsection{Parameters related to the intrinsic properties of ions: $z_i, v_i$}
In this part, we consider the valence $z_i$ of the $i^{\text{th}} $ species at first. Due to the Kohlrausch's law, the flux related to the electric part $F_i^{\mathrm{el}}$ is given via the electrostatic potential gradient, i.e. $F_i^{\mathrm{el}} = -D_iz_i C_i \nabla \phi$, which we can see from the system \eqref{model}. Consequently, assuming all other parameters are fixed, 
the electric force felt by the $i^{\text{th}}$ species would be stronger for larger $|z_i|$.

To do numerical investigations, we set the other parameters $\eta = 2, \lambda = 1, \nu = 1, (v_1, v_2, v_3) = (0.01, 0.01, 0.01)$ and the valence of anions $z_2=-1$. The parameter $z_1$ to be considered is set to be $z_1 = 1, 2, 4$. Then Fig \ref{plot_test4_z} depicts the concentrations of the PNPB system and the PNP system (we set $\eta =0$) at time $t = 1$. We observe that, the larger the valence of cations $z_1$ is, more cations are concentrated on the left boundary. The reason is that larger $z_1$ makes the electrostatic effect become larger on the cations, combined with the external field $V_0 = 10 x$, $C_1$ will peak more on left, while it makes little difference to the anions. Besides, when $z_1=1$, the electrolyte is symmetric, the distributions are symmetric, which is also the case in the former four tests. However, when $z_1\neq 1$, i.e., asymmetric electrolyte, no symmetry is guaranteed. 
Similar behaviours can be observed in the PNP system as $z_1$ varies. However, the peaks of $C_1$ in the PNP system are higher compared to those in PNPB and the profiles of $C_3$ remain flat, as there is no steric effect.
All the results consist with the theoretical discussion.% in accordance with 

% \begin{figure}[htp] 
% 	\centering
% 	\subfigure[]{ 
% 	\begin{minipage}[t]{0.45\linewidth}
% 		\includegraphics[width=1.0\textwidth]{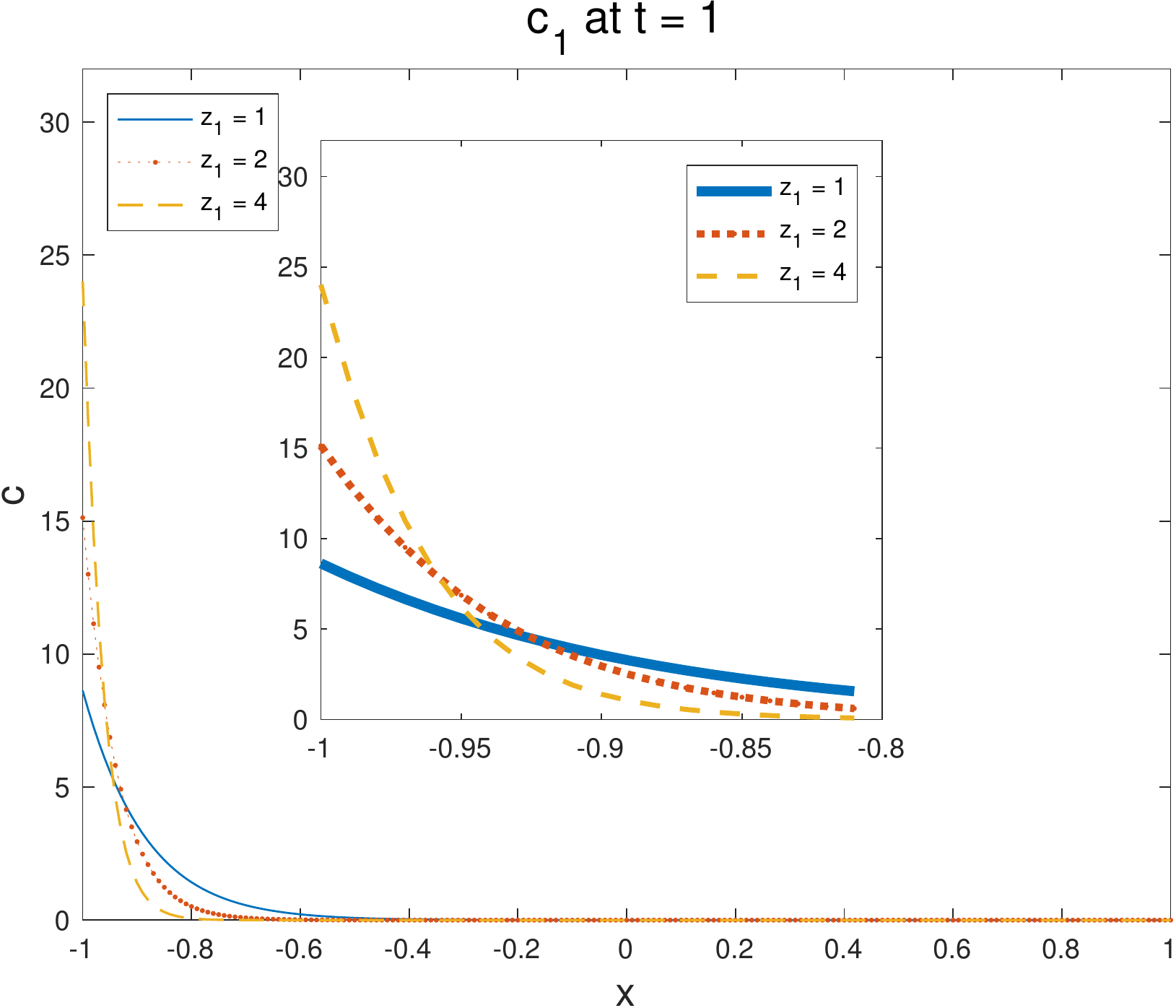}
% 	\end{minipage} 
% 	\begin{minipage}[t]{0.45\linewidth}
% 		\includegraphics[width=1.0\textwidth]{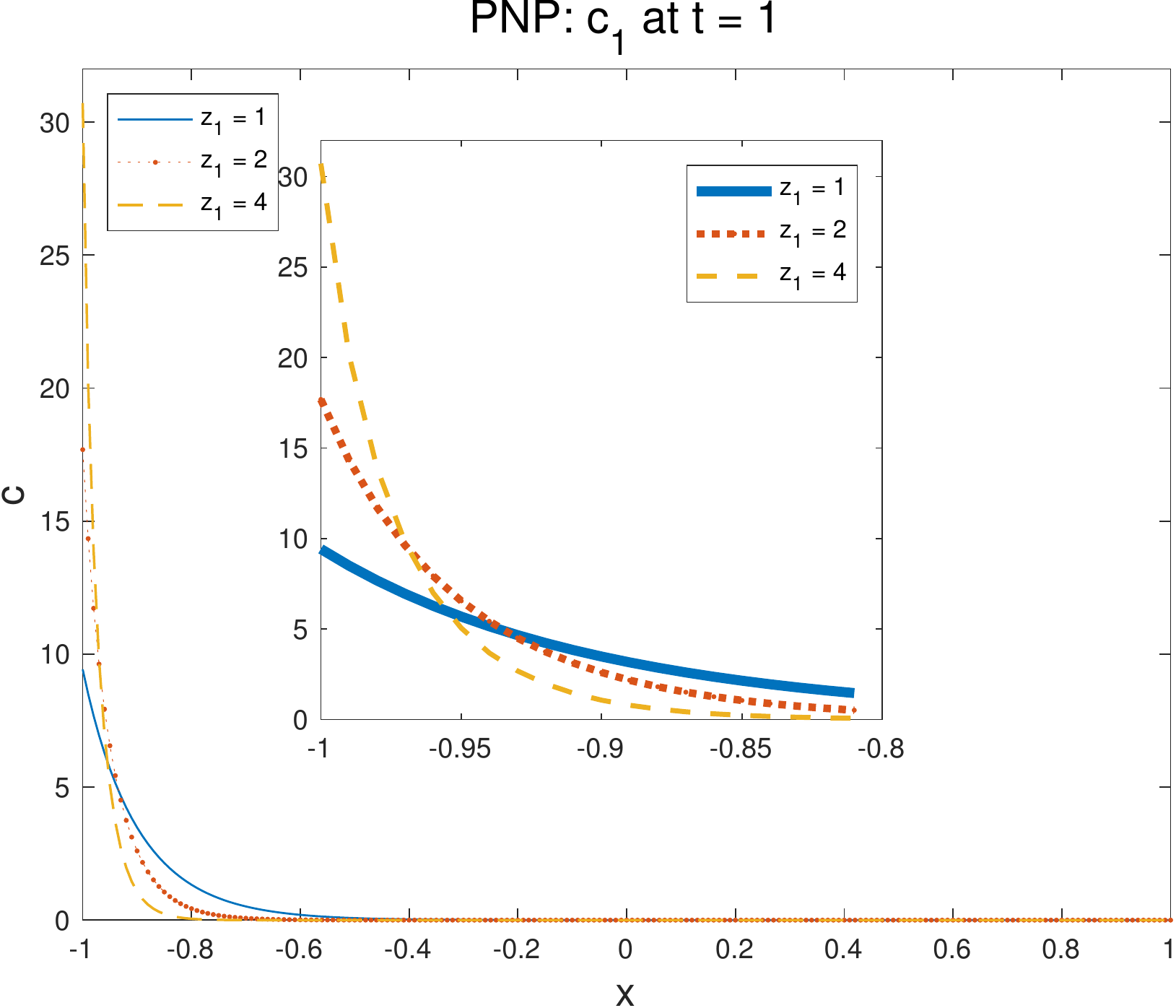}
% 	\end{minipage}
% 	}
% 	\subfigure[]{ 
% 	\begin{minipage}[t]{0.45\linewidth}
% 	\includegraphics[width=1.0\textwidth]{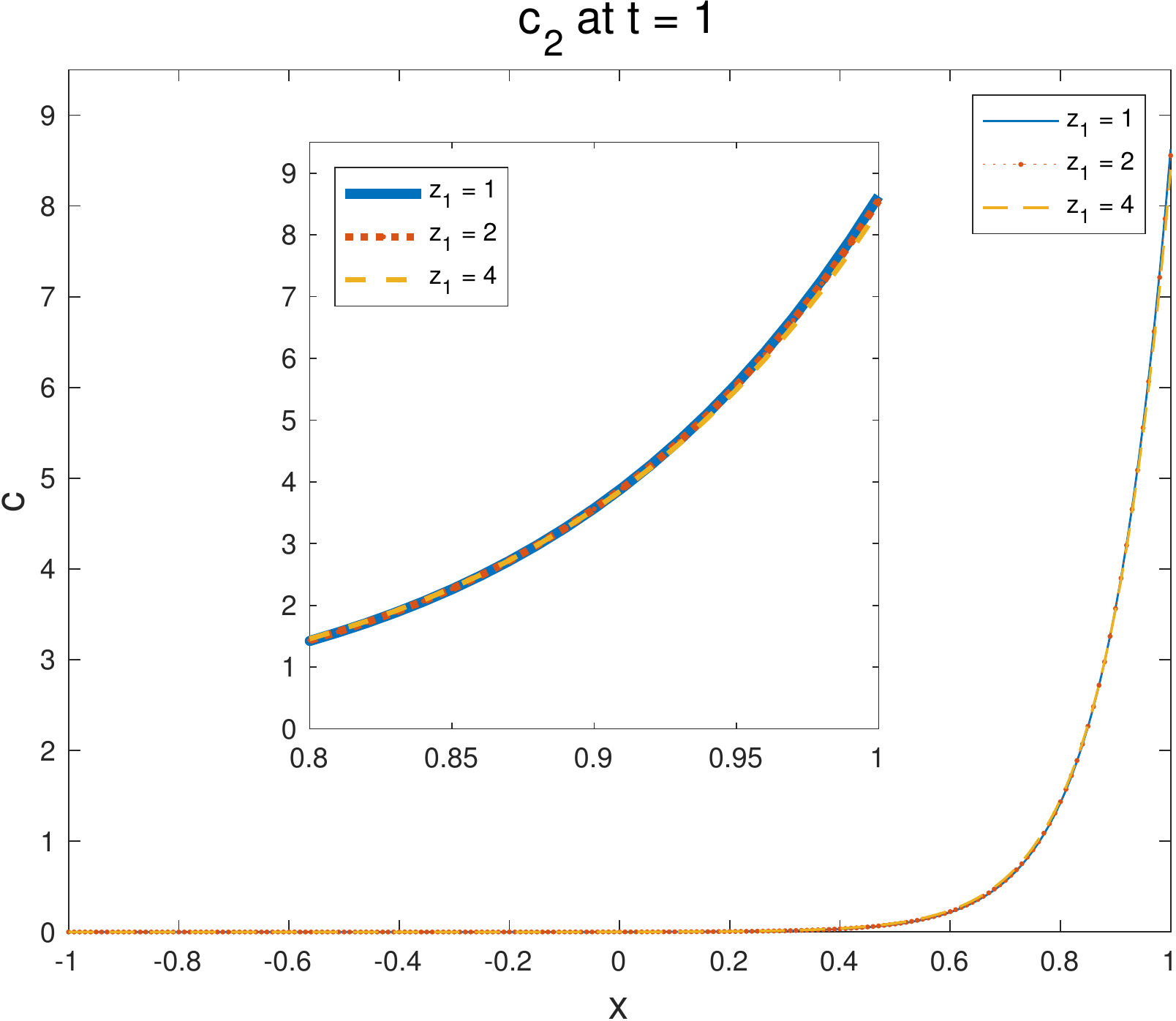}
% 	\end{minipage}
% 	\begin{minipage}[t]{0.45\linewidth}
% 		\includegraphics[width=1.0\textwidth]{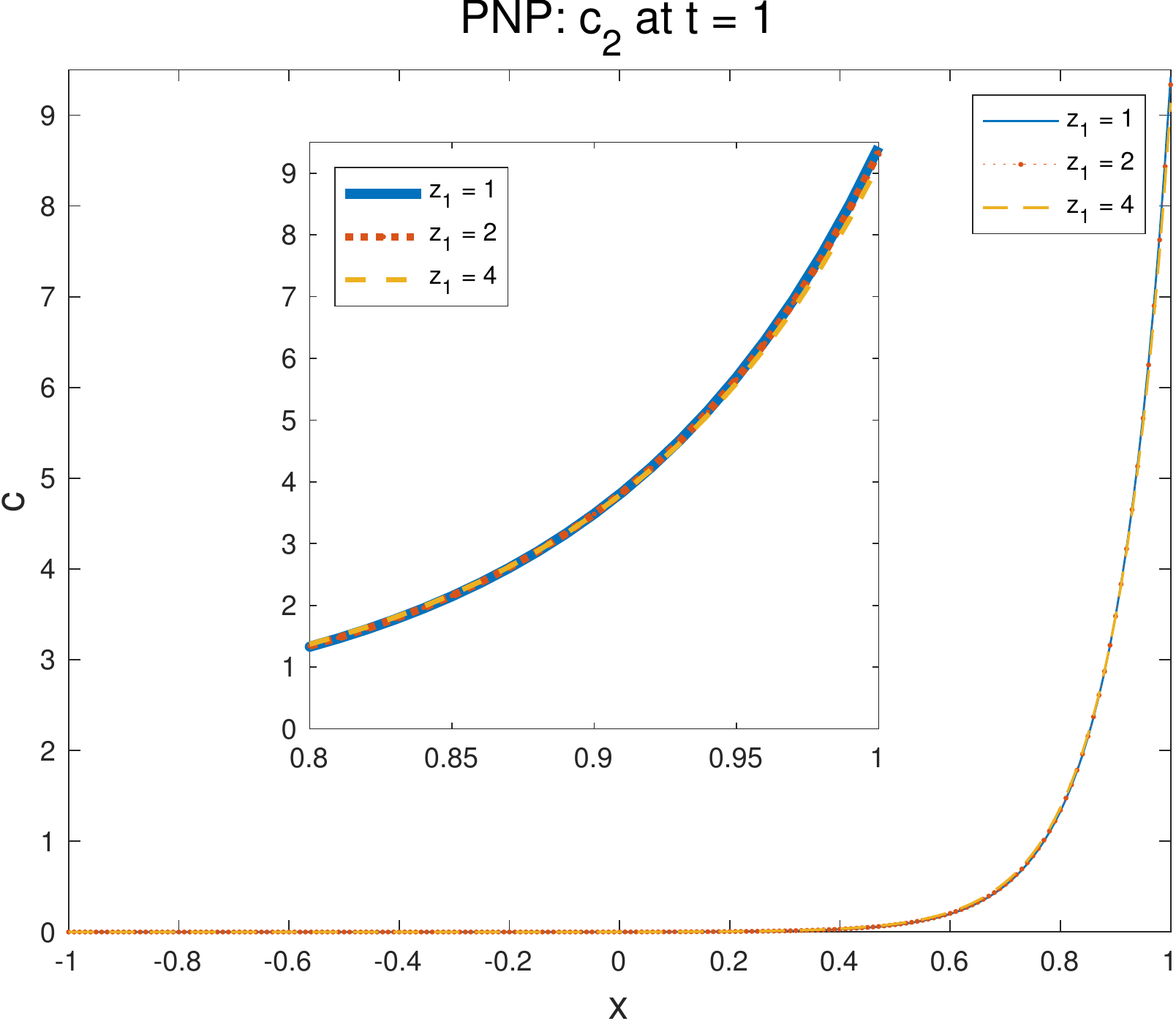}
% 	\end{minipage}  
% 	}
% 	\subfigure[]{ 
% 	\begin{minipage}[t]{0.45\linewidth}
% 		\includegraphics[width=1.0\textwidth]{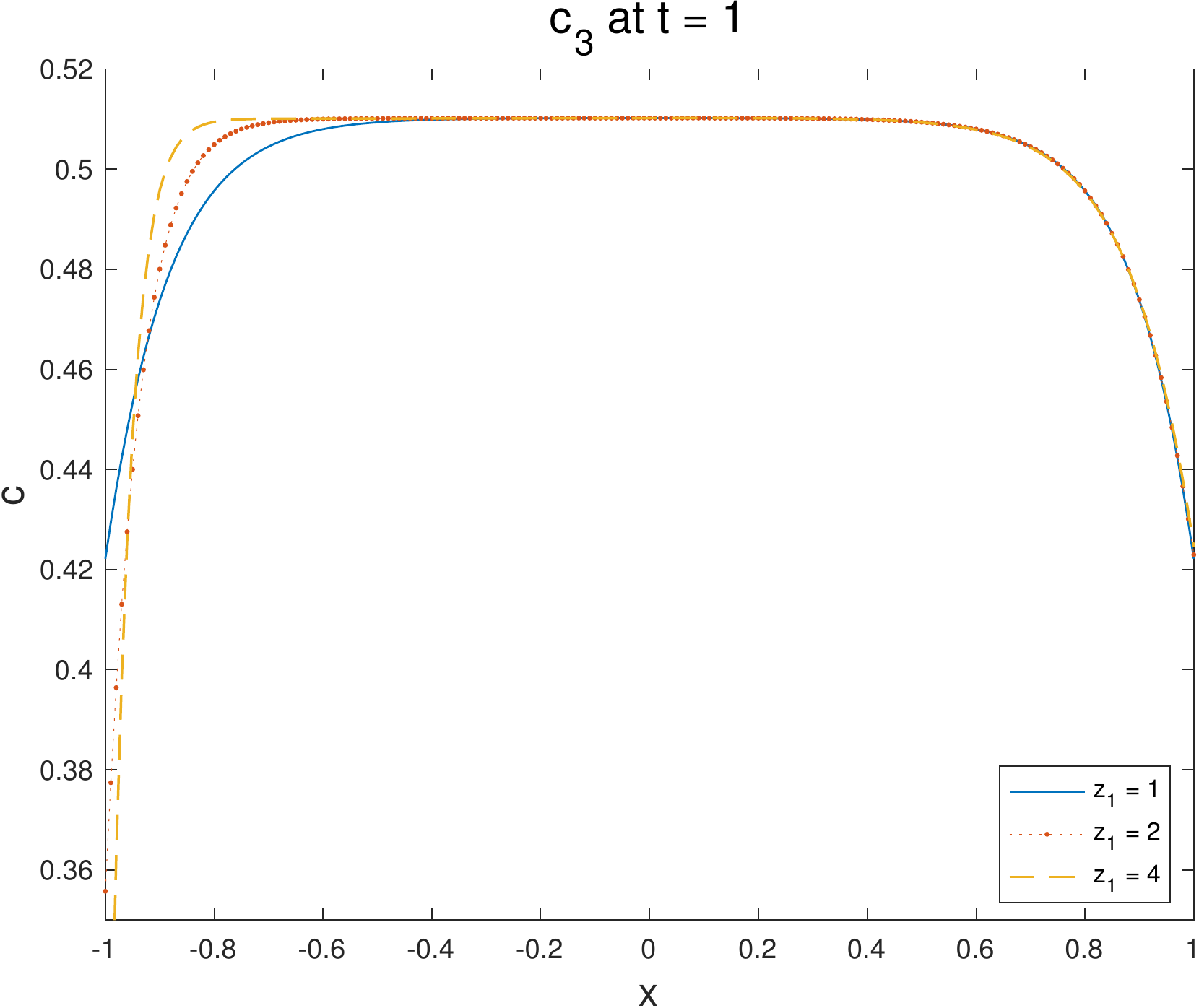}
% 	\end{minipage}
% 	\begin{minipage}[t]{0.45\linewidth}
% 		\includegraphics[width=1.0\textwidth]{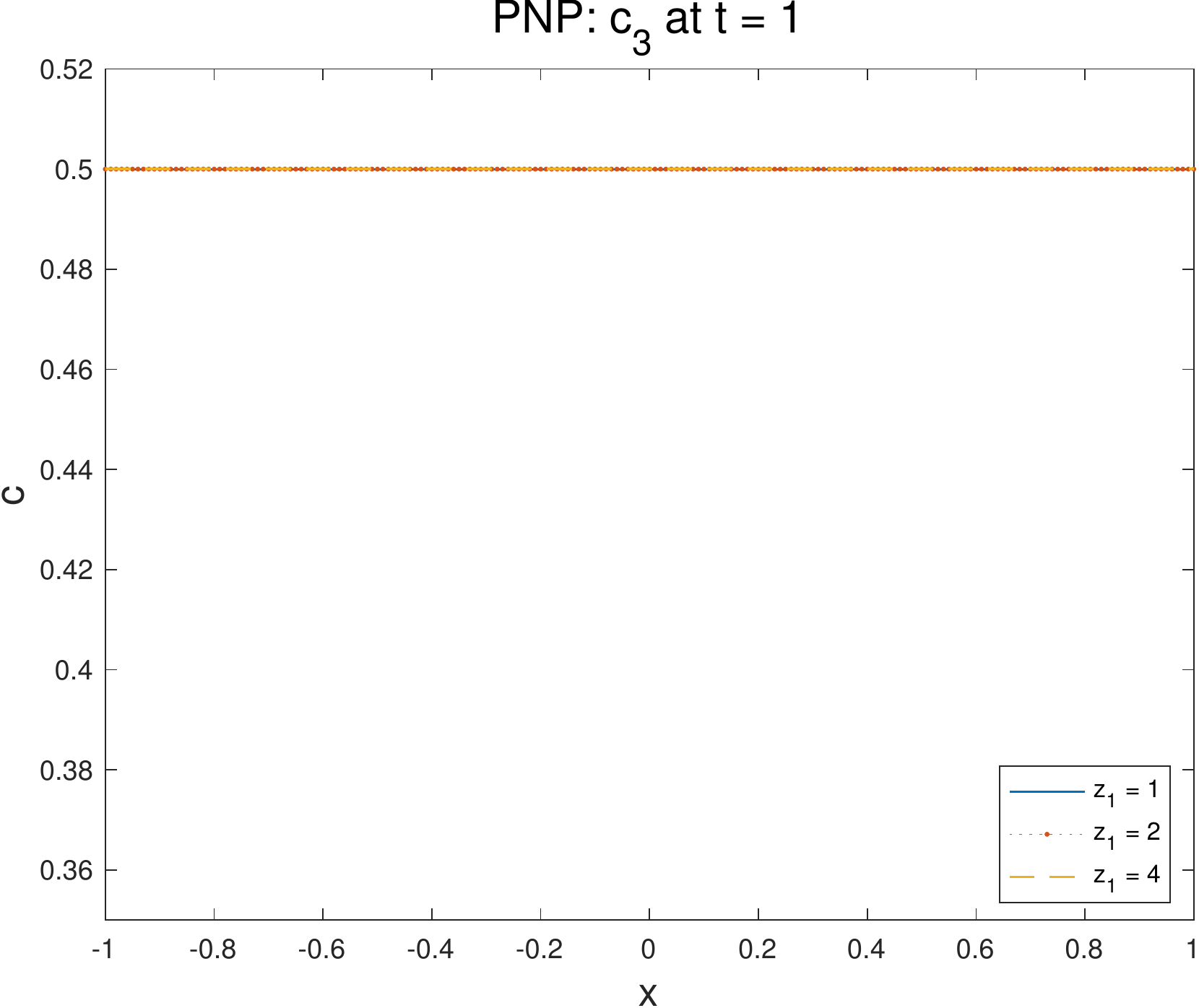}
% 	\end{minipage}
% 	}
% 	\caption{PNPB and PNP model: $C_i(x, 1)$ for $i = 1, 2, 3$ with the step-size $\Delta t = 0.005, \Delta x = 0.01$ and $z_1 = 1, 2, 4$.}
% 	\label{plot_test4_z}
% \end{figure}

\begin{figure}[htp]


    \centering
    \includegraphics[width=4.6cm,height=4.8cm]{pic/pic4/c11-eps-converted-to.pdf}
    \includegraphics[width=4.6cm,height=4.8cm]{pic/pic4/c21-eps-converted-to.pdf}
    \includegraphics[width=4.6cm,height=4.8cm]{pic/pic4/c31-eps-converted-to.pdf}
    
    \includegraphics[width=4.6cm,height=4.8cm]{pic/pic4_pnp/c11-eps-converted-to.pdf}
    \includegraphics[width=4.6cm,height=4.8cm]{pic/pic4_pnp/c21-eps-converted-to.pdf}
    \includegraphics[width=4.6cm,height=4.8cm]{pic/pic4_pnp/c31-eps-converted-to.pdf}
    \caption{Comparison of the PNPB ($\eta = 2$) and PNP ($\eta = 0$) model: $C_i(x, 1)$ for $i = 1, 2, 3$ with $\Delta t = 0.005, \Delta x = 0.01$ and $z_1 = 1, 2, 4$.}
	\label{plot_test4_z}
\end{figure}

On the other hand, as for the volume $v_i$, the flux related to the steric part $F_i^{\mathrm{trc}}$ is given via the steric potential gradient, i.e. $F_i^{\mathrm{trc}} = D_i\frac{v_i}{v_0}C_i(x,t)\nabla S$. Assuming all other parameters are constant, larger $v_i$ will lead directly to a larger steric effect on the $i^{\text{th}}$ species, hence the larger steric force makes a stronger repulsion, which will smooth the peak of the concentrations.

In this numerical test, let the other parameters $\eta = 1, \lambda = 1, \nu = 1, (z_1, z_2, z_3) = (1, -1, 0)$, the volumes of the anions and water molecules stays $v_2=v_3=0.01$. Test the cases where the volume of cations are $v_1 = 0.01, 0.03, 0.05$, and hence $v_0=0.01, 1/60, 7/300$ respectively. Fig \ref{plot_test3_v} gives the results of the concentrations of the PNPB system and the PNP system (we set $\eta = 0$) at time $t = 1$. We observe that, the larger $v_1$ is, the less $C_1$ peak on the left boundary. This is because larger $v_1$ makes the steric effect become larger on the cations but has little difference on the anions. The steric force is a kind of repulsion force to prevent the particles from getting close to each other, and hence larger $v_1$ makes the concentration $C_1$ become less on the left boundary.
Also, in the case where $v_1, v_2, v_3$ are not the same, the distributions are no longer symmetric.
And the change in volume $v_i$ has no effect on the PNP model as expected.

% \begin{figure}[htp] 
% 	\centering
% 	\subfigure[]{ 
% 	\begin{minipage}[t]{0.45\linewidth}
% 		\includegraphics[width=1.0\textwidth]{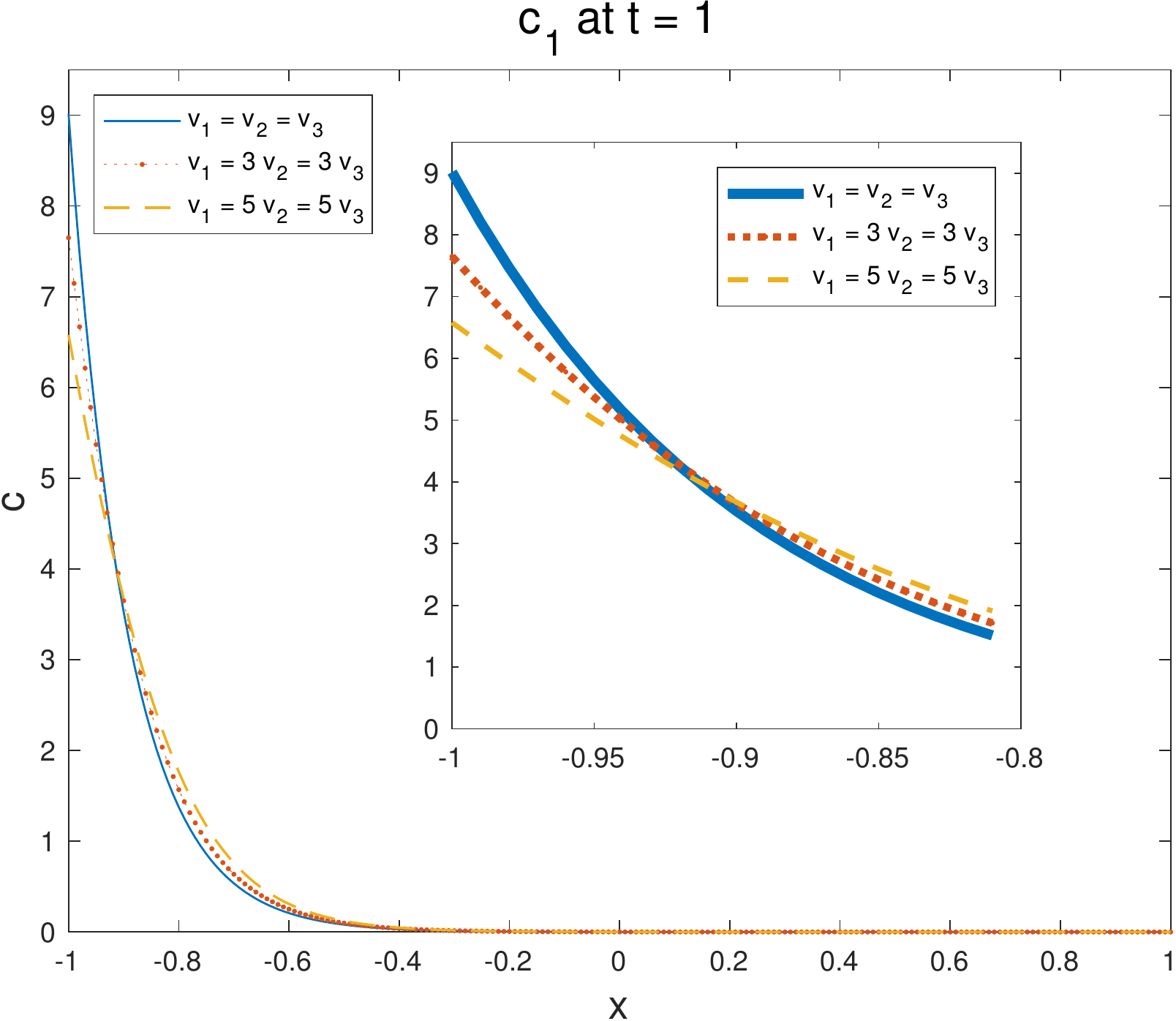}
% 	\end{minipage} 
% 	\begin{minipage}[t]{0.45\linewidth}
% 		\includegraphics[width=1.0\textwidth]{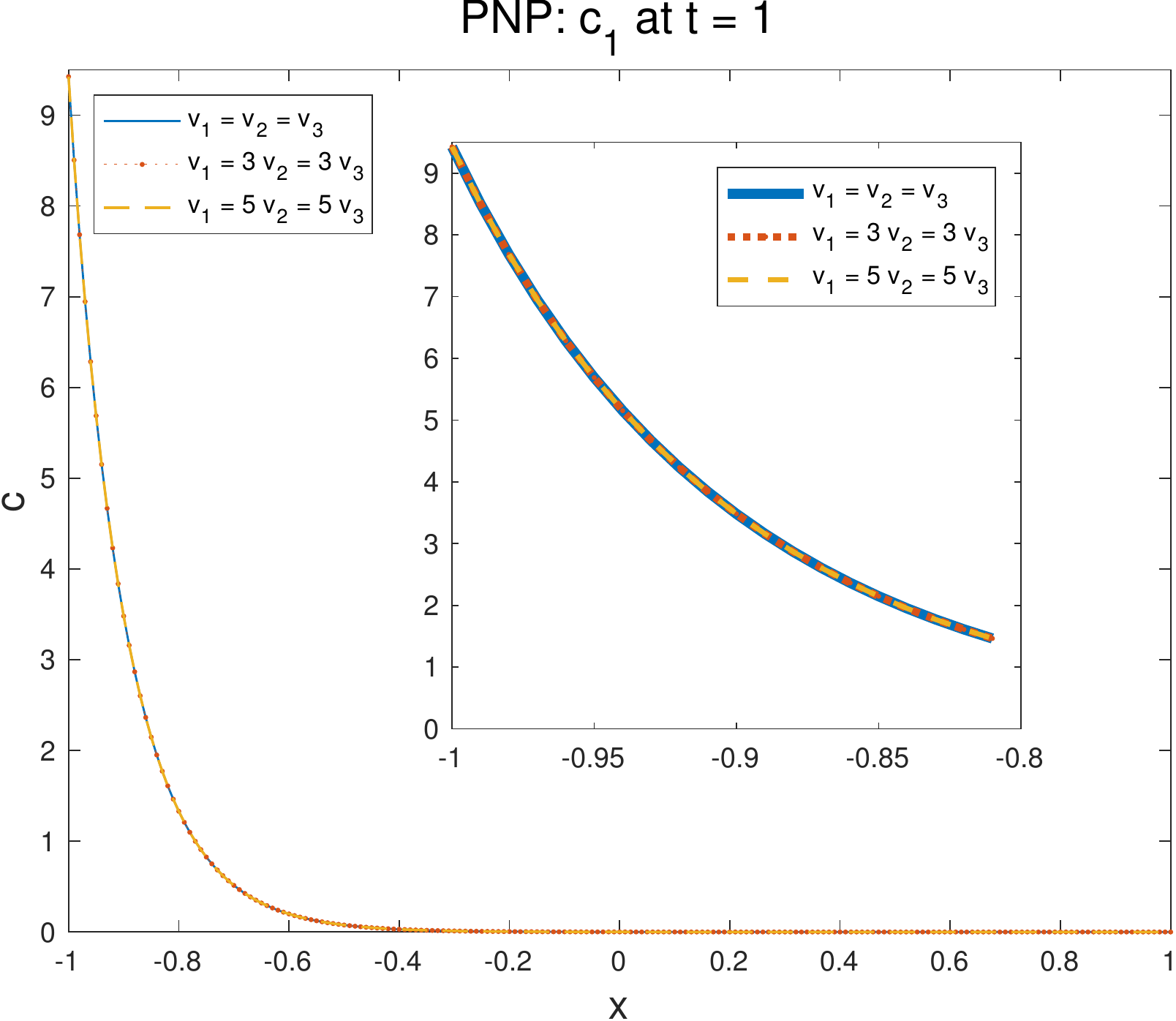}
% 	\end{minipage}
% 	}
% 	\subfigure[]{ 
% 	\begin{minipage}[t]{0.45\linewidth}
% 	\includegraphics[width=1.0\textwidth]{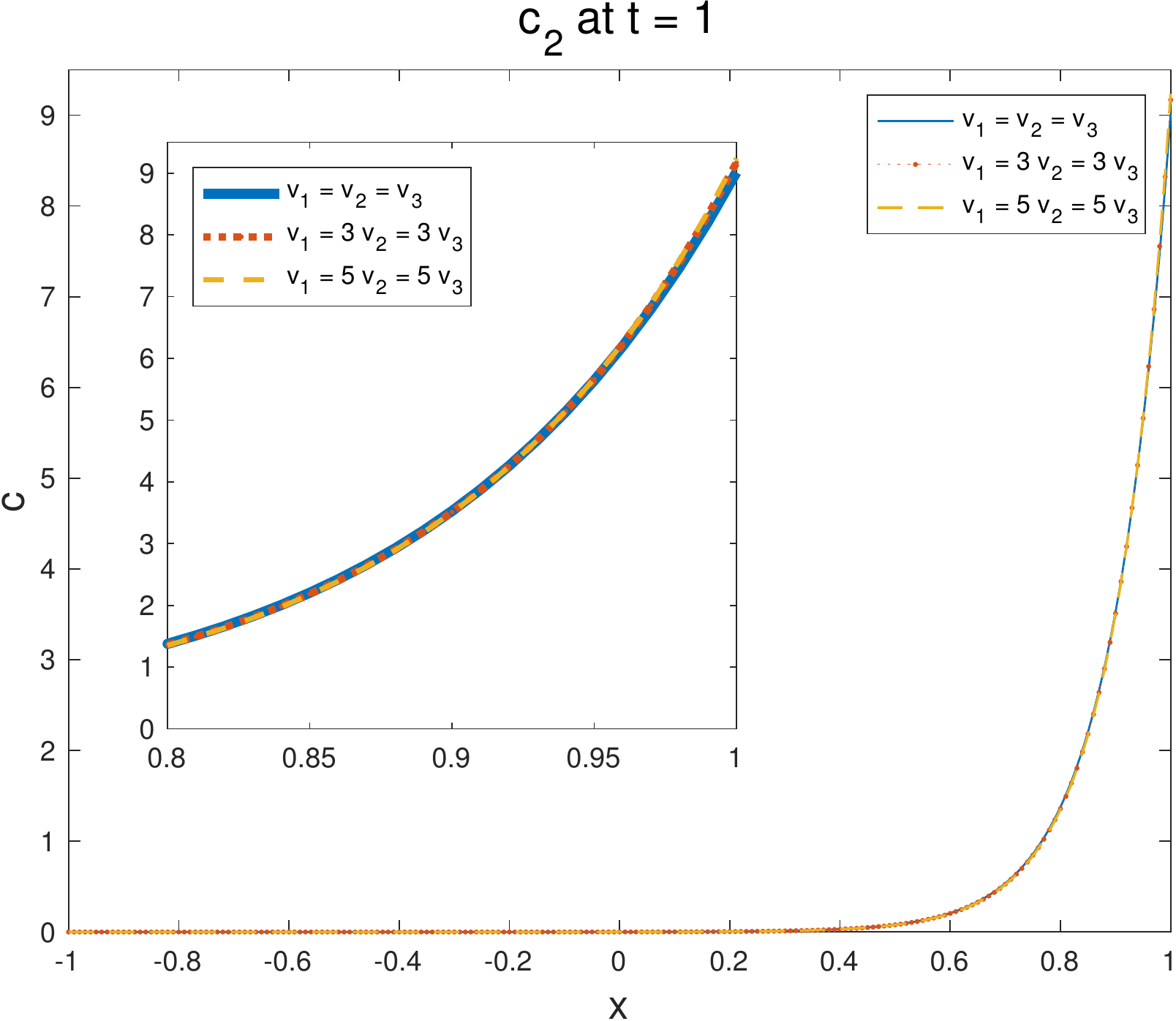}
% 	\end{minipage}
% 	\begin{minipage}[t]{0.45\linewidth}
% 		\includegraphics[width=1.0\textwidth]{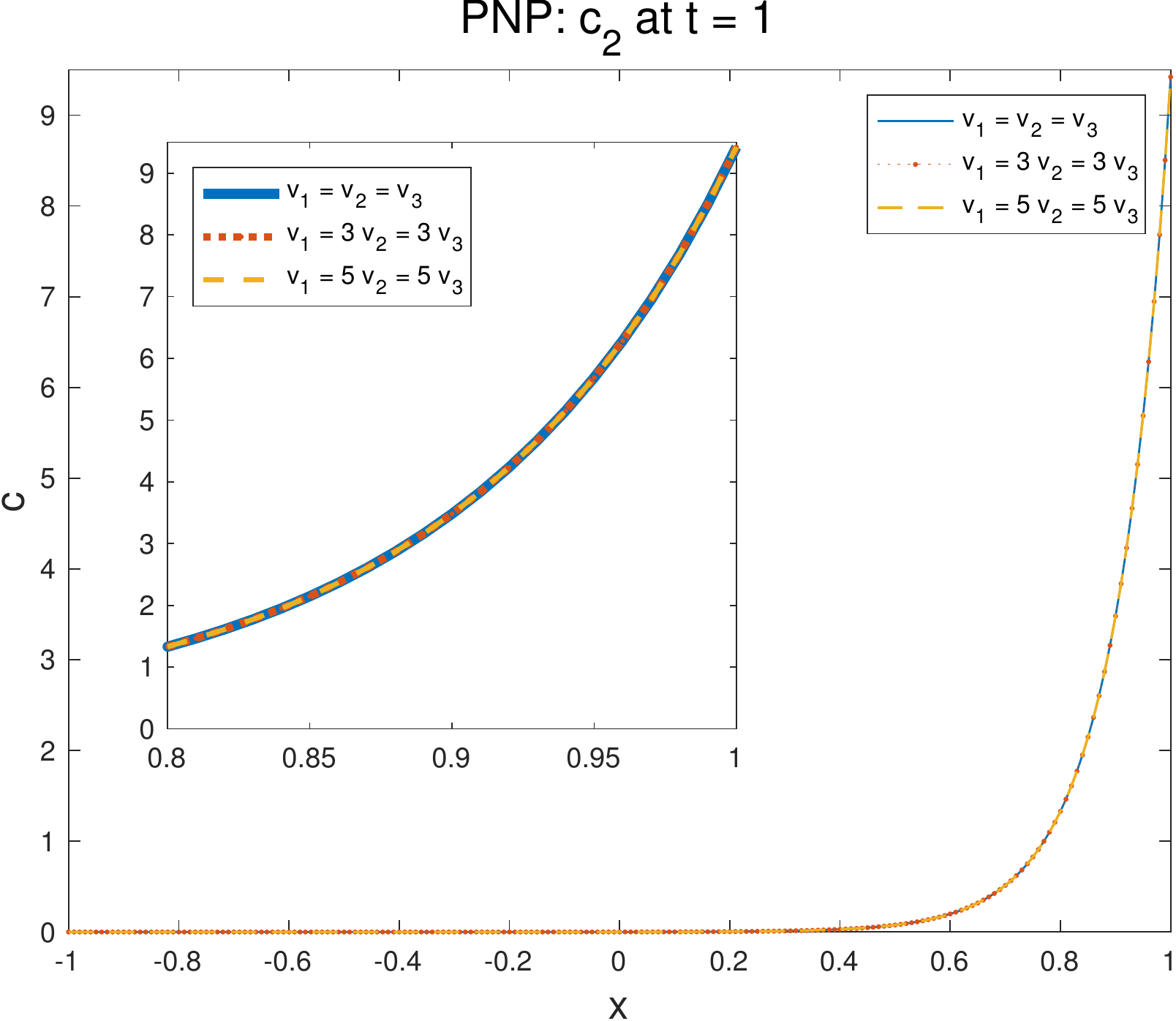}
% 	\end{minipage}  
% 	}
% 	\subfigure[]{ 
% 	\begin{minipage}[t]{0.45\linewidth}
% 		\includegraphics[width=1.0\textwidth]{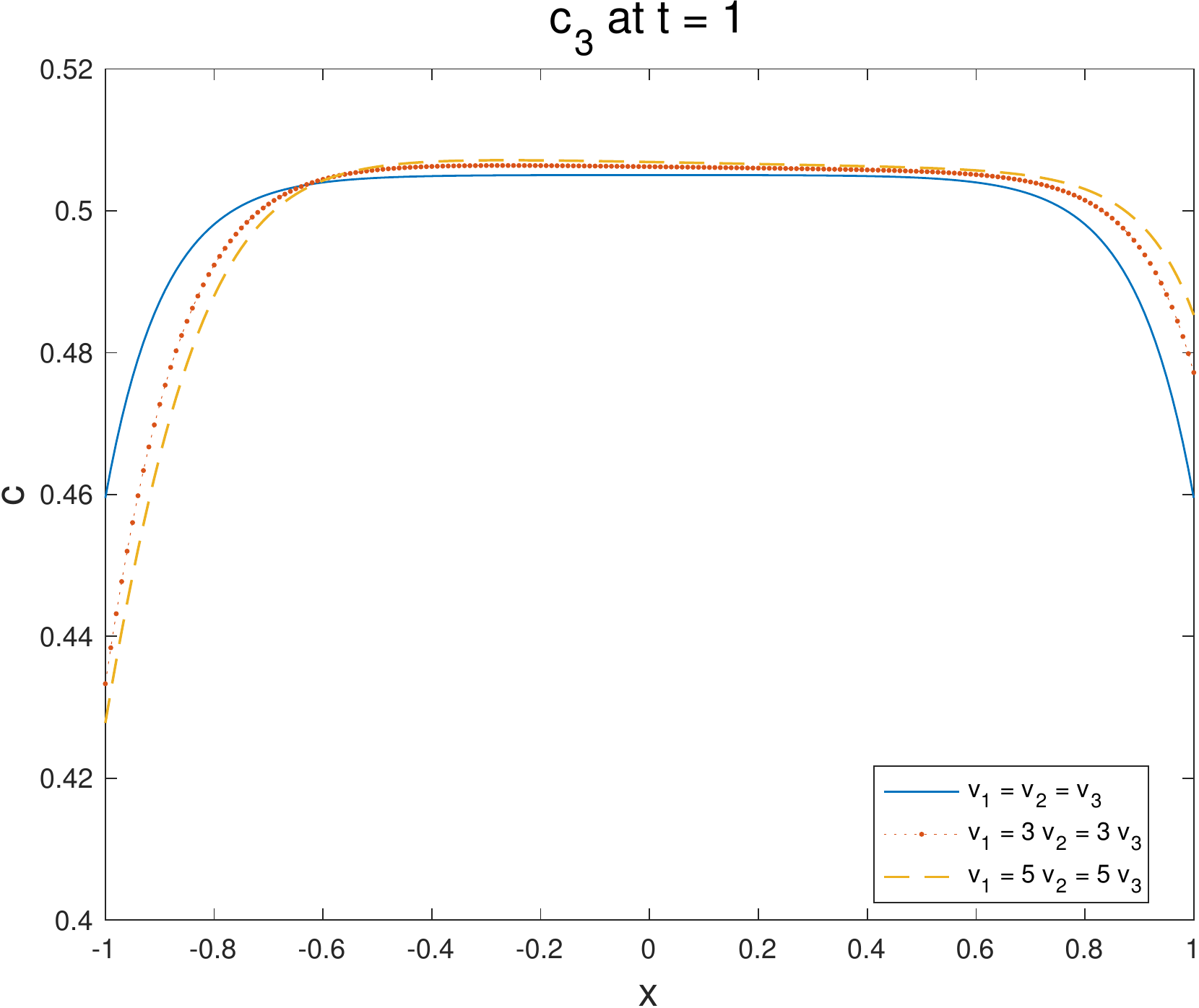}
% 	\end{minipage}
% 	\begin{minipage}[t]{0.45\linewidth}
% 		\includegraphics[width=1.0\textwidth]{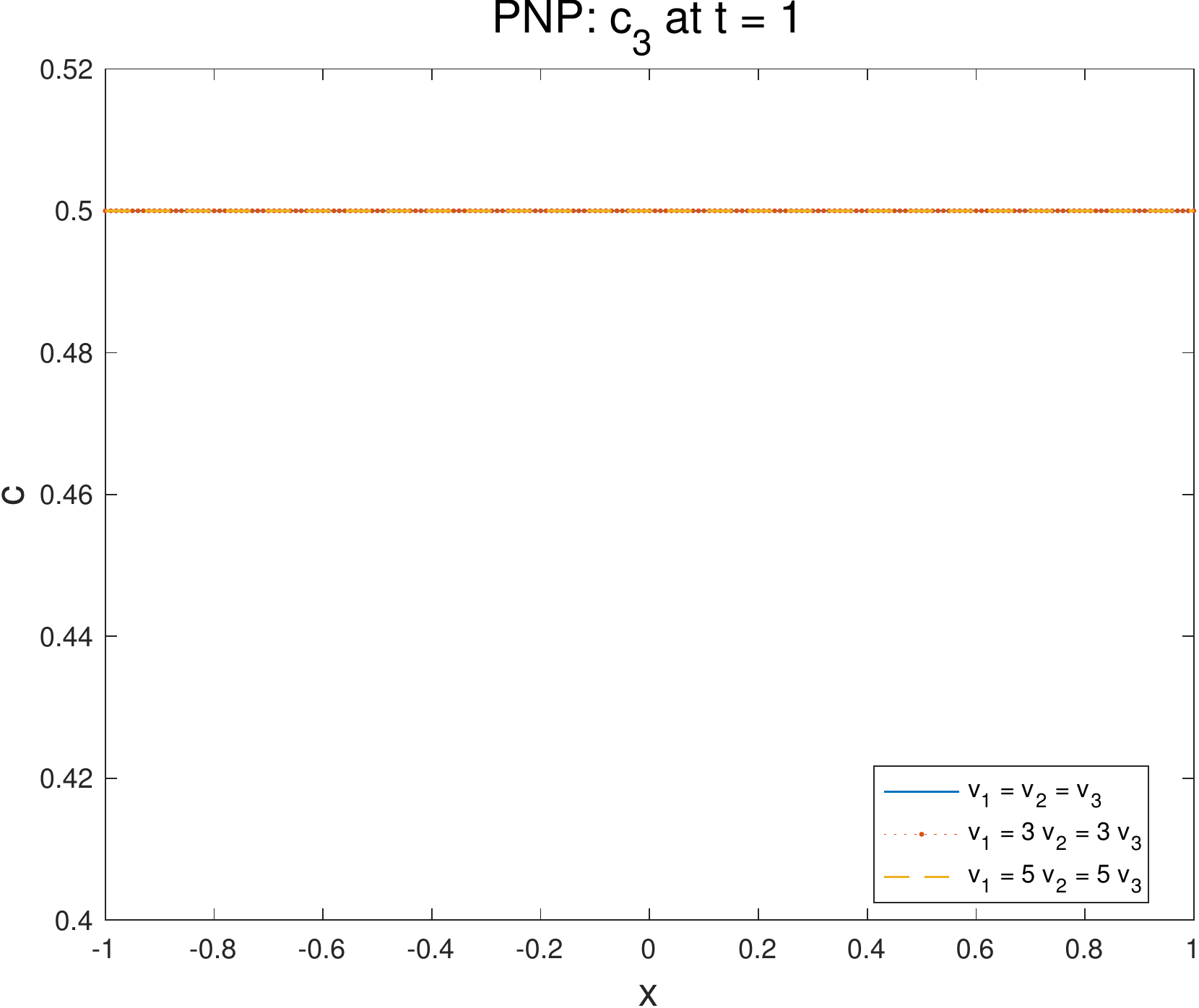}
% 	\end{minipage}
% 	}
% 	\caption{PNPB and PNP model: $C_i(x, 1)$ for $i = 1, 2, 3$  with the step-size $\Delta t = 0.005, \Delta x = 0.01$ and $v_1 = 0.01, 0.03, 0.05$.}
% 	\label{plot_test3_v}
% \end{figure}

\begin{figure}[htp]


    \centering
    \includegraphics[width=4.6cm,height=4.8cm]{pic/pic3/c11-eps-converted-to.pdf}
    \includegraphics[width=4.6cm,height=4.8cm]{pic/pic3/c21-eps-converted-to.pdf}
    \includegraphics[width=4.6cm,height=4.8cm]{pic/pic3/c31-eps-converted-to.pdf}
    
    \includegraphics[width=4.6cm,height=4.8cm]{pic/pic3_pnp/c11-eps-converted-to.pdf}
    \includegraphics[width=4.6cm,height=4.8cm]{pic/pic3_pnp/c21-eps-converted-to.pdf}
    \includegraphics[width=4.6cm,height=4.8cm]{pic/pic3_pnp/c31-eps-converted-to.pdf}
    \caption{Comparison of the PNPB ($\eta = 1$) and PNP ($\eta = 0$)  model: $C_i(x, 1)$ for $i = 1, 2, 3$ with $\Delta t = 0.005, \Delta x = 0.01$ and $v_1 = 0.01, 0.03, 0.05$.}
	\label{plot_test3_v}
\end{figure}

\subsection{Two-dimensional case} 
It is seen that in 1D, parameters $\eta, \nu, \lambda, z_i, v_i$ have influenced the steady state of the dimensionless PNPB system \eqref{eq: no-dim dynamic}-\eqref{eq: no-dim 4pbik}. In summary, large $\eta$ results in larger steric repulsion effect, if $\eta=0$, the steric effect vanishes, if $\eta$ is bigger than some critical value, the positivity of $\Gamma$ can no longer be maintained. $\lambda$ and $\nu$ have opposite effects on the correlated electric field $\phi$, to be specific, smaller $\nu$ and larger $\lambda$ give rise to stronger electric field $\phi$. As for the intrinsic properties of ions, larger $|z_i|$ and $v_i$ would enhance the electrostatic effect and the steric effect of the $i^{\text{th}}$ species respectively. The same results can be found in the two-dimensional case. Here we focus only on the effect of the parameter $\eta$ on the steady state.

Consider the two-dimensional model, where the electrostatic potential $\phi = \mathcal{K}_{\nu} * \rho$, with $\mathcal{K}_{\nu}(x, y) = - \frac{1}{2 \pi \nu^2} \log(\sqrt{x^2+y^2})$. We choose $K = 2$, $V_0(x, y) = 10 x$, and the initial conditions in the following form,
$$
\begin{aligned}
C_1^0(x, y) &= \frac{40}{\pi} \exp\left(-10\left(\left(x-\frac{1}{5}\right)^2 + \left(y-\frac{1}{5}\right)^2\right)\right),\\
C_2^0(x, y) &= \frac{40}{\pi} \exp\left(-10\left(\left(x+\frac{1}{5}\right)^2 + \left(y+\frac{1}{5}\right)^2\right)\right), \\
C_3^0(x, y) &= \frac{40}{\pi} \exp\left(-10\left(x^2 + y^2\right)\right). 
\end{aligned}
$$
Furthermore, take the parameters $\nu = 1, (D_1, D_2, D_3) = (1, 1, 1), (z_1, z_2, z_3) = (1, -1, 0)$, \\
$(v_1, v_2, v_3) = (0.01, 0.01, 0.01)$, and hence $v_0 = 0.01$. Let $\eta = 1$, Fig \ref{plot_2d_new} shows the results of the concentrations $C_i$ and the chemical potential $\mu_i$ at sufficiently large time $t = 3$. It's seen that the chemical potential becomes constant at the steady state and the boundary layers exist on the left and right boundaries due to the external field $V_0$ in the $x$-direction. 

Next, consider the parameter $\eta = 0, 1, 3$, the concentrations in the $x$-direction $\int_{\Omega} C_1(x, y) \mathrm{d} y$ and $\int_{\Omega} C_2(x, y) \mathrm{d} y$ are shown in Fig \ref{plot_2d_eta_new}. Larger $\eta$ leads to larger steric repulsion effect and thus the the cations are less gathered on the corresponding left boundary while the anions are less gathered on the right. All the conclusions are the same as the one-dimensional case.

\begin{figure}[htp] 
	\centering
	\includegraphics[width=1\textwidth]{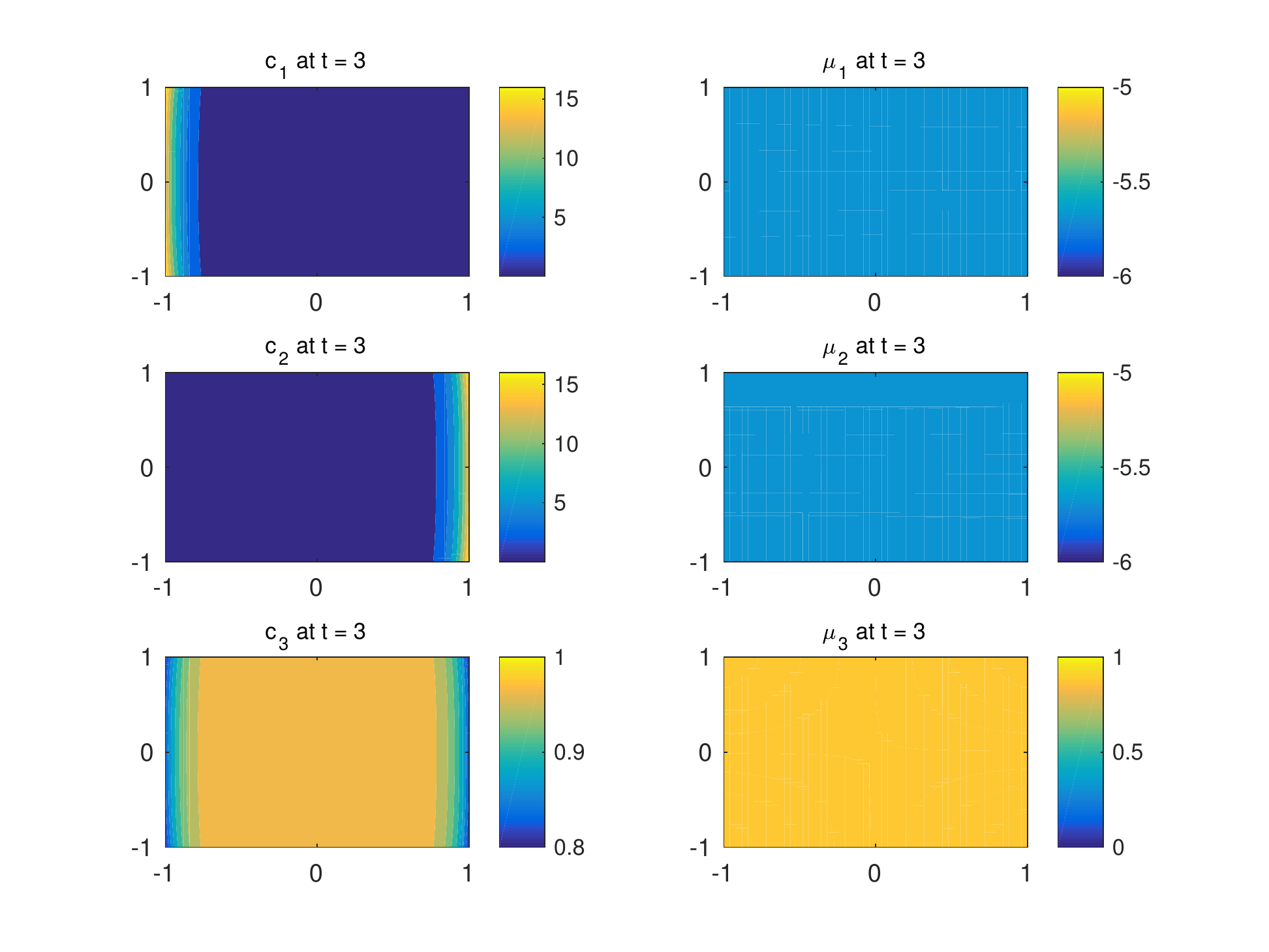}
	\caption{The concentration $C_i(x, y)$ and the chemical potential $\mu_i(x, y)$ for $i = 1, 2, 3$ with $\Delta t = 0.001, \Delta x = \Delta y = 0.04$ and $\eta = 3$.}
	\label{plot_2d_new}
\end{figure}

\begin{figure}[htp] 
	\centering
		\includegraphics[width=0.45\textwidth]{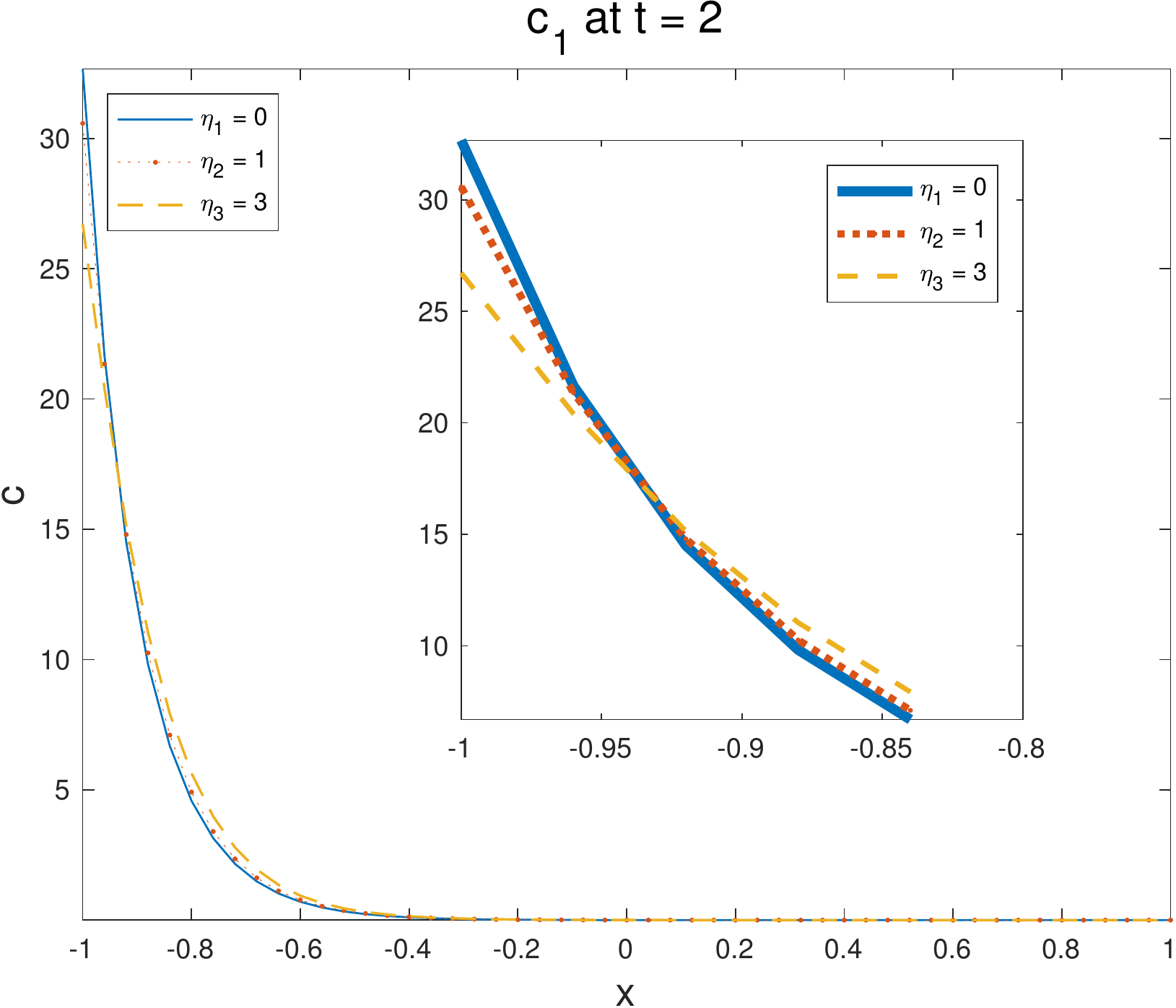}
		\includegraphics[width=0.45\textwidth]{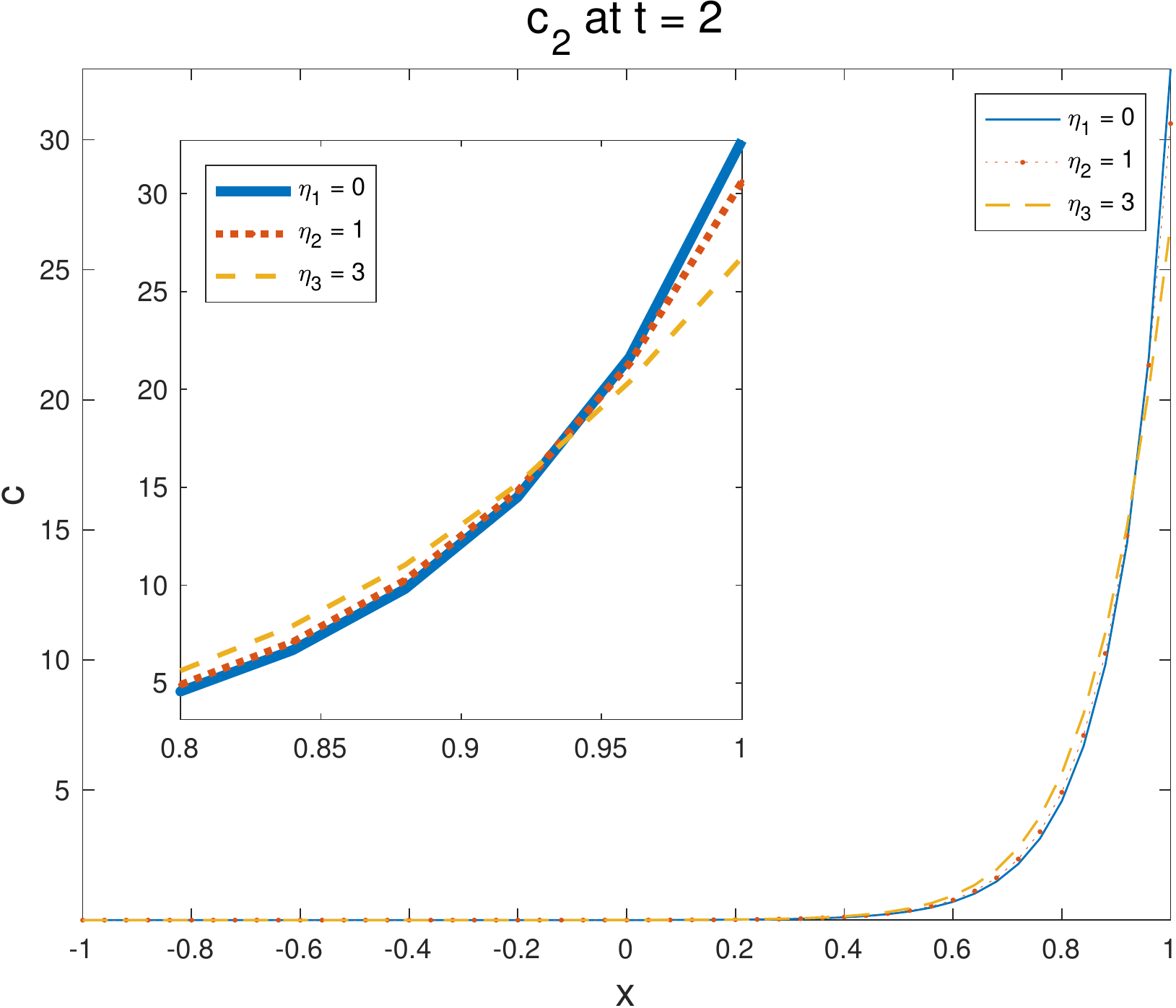}
	\caption{$\int_{\Omega} C_i(x, y) \mathrm{d} y$ for $i = 1, 2$ with $\Delta t = 0.001, \Delta x = \Delta y = 0.04$ and $\eta = 0, 1, 3$.}
	\label{plot_2d_eta_new}
\end{figure}

%-------------------------------------------------------------------
%----------------------------Conclusion-----------------------------
\section{Conclusions}
\label{sec: conclusion}

The PNPB model for ionic solution differs from the classical PNP and PB theories in many aspects. In the model setup, it treat all the ions with sizes and valences rather than volumeless. And it consider water molecules as another species with volume together with voids. In the sense of electric potential, the governing equation is the 4PBik equation instead of the Poisson equation so that it can describe the correlation effect. What's more, the steric potential is included which characterizes the steric effect quantitatively. 

In this work, we clarify some theoretical aspects of the steady state problem, i.e., the 4PBik equation with Fermi-like distribution. We give the self-adjointness and kernel of the fourth-order operator in the 4PBik equation. We also provide four equivalent statements for the steady state.
Besides, we prove the positivity of the void volume function in the steady state, otherwise, the steric potential is meaningless. Finally, we can obtain the well-posedness of the steady state owing to the convexity of the free energy functional.
Furthermore, we give some 1D and 2D tests to get intuitive impressions on the steric (finite size) effect. 

The understandings of the dynamical problem on the PNPB model is limited. The well-posedness and other analytical issues such as the positivity preserving property of the dynamical problem are left for our future work.

%-------------------------------------------------------------------
%-----------------------------Acknowledgement-----------------------
\section*{Acknowledgement}
We would like to thank Prof. Bob Eisenberg for his distinguished lectures on this topic at Duke Kunshan and helpful discussions with us and bringing the PNPB theory to us.  
We were grateful to Prof. Zhennan Zhou for fruitful discussions and comments and regular meetings on this work. Meanwhile, Y. Tang and Y. Zhao highly appreciated his guidance. The work of J.-G. Liu was supported by NSF grant DMS-2106988. 

%------------------------------------------------------------------
%----------------------Appendix------------------------------------
\appendix
\section{Appendix}
\begin{thm}[Dunford-Pettis Theorem\cite{Haim_Brezis}] \label{thm: DP}
Assume that $\Omega$ is bounded. Let $\mathcal{F}$ be a bounded set in $L^1(\Omega)$. Then $\mathcal{F}$ has compact closure in the weak topology if and only if $\mathcal{F}$ is uniformly integrable: $\forall~ \epsilon>0$, $\exists~ \delta>0$ such that
$$
\int_{A} |f|<\epsilon, \qquad \forall A \subset \Omega,~\text{measurable with}~~ |A|<\delta, \quad \forall f\in \mathcal{F}.
$$
\end{thm}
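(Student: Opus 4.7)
The plan is to establish the two implications of the equivalence separately, with different tools for each. Only the forward direction ($\Leftarrow$) is actually needed in the well-posedness argument of Section~3.3, where the authors extract a weakly $L^1$-convergent subsequence from $\{C_i^{(k)}\}$, but for completeness I would handle both.

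For $(\Leftarrow)$, the cleanest route is via the de~la~Vallée~Poussin criterion. Assuming $\mathcal{F}$ bounded in $L^1(\Omega)$ and uniformly integrable, there exists a convex nondecreasing function $\Phi:[0,\infty)\to[0,\infty)$ with $\Phi(t)/t\to\infty$ as $t\to\infty$ such that $\sup_{f\in\mathcal{F}}\int_{\Omega}\Phi(|f|)\,\mathrm{d}x<\infty$; one may additionally arrange that both $\Phi$ and its Legendre conjugate $\Phi^{*}$ satisfy the $\Delta_{2}$ condition, so that the Orlicz space $L^{\Phi}(\Omega)$ is reflexive. Then $\mathcal{F}$ lies in a bounded set of $L^{\Phi}$, hence by Banach--Alaoglu together with Eberlein--Šmulian it is weakly sequentially compact in $L^{\Phi}$. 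Since $\Omega$ is bounded, the continuous embedding $L^{\Phi}(\Omega)\hookrightarrow L^{1}(\Omega)$ transports weak compactness to $L^{1}(\Omega)$, which is the conclusion.

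For $(\Rightarrow)$, I would argue by contradiction. Weak relative compactness in $L^{1}$, combined with the uniform boundedness principle applied through the canonical embedding $L^{1}\hookrightarrow(L^{\infty})^{*}$, gives $\sup_{f\in\mathcal{F}}\|f\|_{L^{1}}<\infty$. Suppose uniform integrability fails; then there exist $\varepsilon_{0}>0$, measurable sets $A_{n}$ with $|A_{n}|\to 0$, and $f_{n}\in\mathcal{F}$ with $\int_{A_{n}}|f_{n}|\,\mathrm{d}x\geq\varepsilon_{0}$. Extracting a weakly convergent subsequence $f_{n_{k}}\rightharpoonup f$ in $L^{1}$, the integral of $|f|$ is itself absolutely continuous, so $\int_{A_{n_{k}}}|f|\,\mathrm{d}x<\varepsilon_{0}/2$ for $k$ large. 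Testing $f_{n_{k}}-f$ against the bounded $L^{\infty}$ functions $g_{k}:=\operatorname{sgn}(f_{n_{k}})\mathbf{1}_{A_{n_{k}}}$ and selecting a further subsequence with the sets $A_{n_{k}}$ mutually nearly disjoint (possible because $|A_{n_{k}}|\to 0$), one can construct a single $g\in L^{\infty}$ whose pairing with $f_{n_{k}}$ stays bounded below by $\varepsilon_{0}/4$ while its pairing with $f$ tends to zero, contradicting $f_{n_{k}}\rightharpoonup f$.

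The main obstacle is the converse direction: the obvious test functions $g_{k}$ depend on $k$, so weak convergence does not apply to them directly, and the standard fix requires carefully combining the $g_{k}$ into a fixed $L^{\infty}$ function via the smallness of the measures $|A_{n_{k}}|$. Since the theorem is a classical result from Brezis and only the forward direction is needed in the paper, I would sketch $(\Leftarrow)$ in detail along the Orlicz-space lines above and cite Brezis for the converse.
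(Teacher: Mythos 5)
The paper itself does not prove this statement: it is quoted in the appendix as a classical theorem and attributed to Brezis, so there is no in-paper argument to compare yours against. Judged on its own terms, your proposal has a genuine gap in precisely the direction that the well-posedness proof of Section 3.3 actually relies on (extracting the weakly $L^1$-convergent subsequence of $\{C_i^{(k)}\}$).

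The flawed step is the claim, in the ($\Leftarrow$) direction, that the de la Vall\'ee Poussin function $\Phi$ ``may additionally be arranged'' so that both $\Phi$ and $\Phi^{*}$ satisfy $\Delta_2$, making $L^{\Phi}(\Omega)$ reflexive. This is false in general. The condition $\Phi^{*}\in\Delta_2$ (equivalently $\Phi\in\nabla_2$) forces $\Phi(t)\geq c\,t^{1+\delta}$ for large $t$ and some $\delta>0$: iterating $\Phi(lt)\geq 2l\,\Phi(t)$ gives $\Phi(l^{n}t_0)\geq (2l)^{n}\Phi(t_0)$, i.e.\ power growth of exponent $\log(2l)/\log l>1$. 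Hence any family bounded in a reflexive Orlicz space over a bounded domain is automatically bounded in some $L^{1+\delta}(\Omega)$. But there are uniformly integrable families bounded in no $L^{1+\delta}$ at all --- for instance $f_n=\frac{n}{\log n}\mathbf{1}_{[0,1/n]}$ on $(0,1)$, which converges to $0$ in $L^1$ (hence is uniformly integrable) yet satisfies $\|f_n\|_{L^{1+\delta}}\to\infty$ for every $\delta>0$; the best de la Vall\'ee Poussin weight for this family grows only like $t\log\log t$. So no admissible choice of $\Phi$ lands you in a reflexive Orlicz space, and the Banach--Alaoglu / Eberlein--\v{S}mulian step has nothing to act on. The standard repairs are either the truncation argument (cut each $f$ at height $M$, extract weak $L^2$-limits diagonally over $M\in\mathbb{N}$, and use uniform integrability to control the tails uniformly), or the embedding $L^1\hookrightarrow (L^{\infty})^{*}$ followed by showing that the weak-$*$ limit is a countably additive, absolutely continuous set function and invoking Radon--Nikodym. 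Your sketch of the converse direction ($\Rightarrow$) via a gliding-hump construction is the right idea, though, as you acknowledge, the disjointification of the sets $A_{n_k}$ and the assembly of a single test function $g\in L^{\infty}$ still have to be carried out; that direction is in any case not used in the paper.
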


\begin{lem}[Mazur lemma\cite{Yosida}] \label{lem: Mazur}
Let $(X,\|\cdot\|)$ be a Banach space and $\{u_k\}$ a sequence in $X$ that converges weakly to $u_0\in X$. Then there exist convex combinations $v_k$ of $u_{N+1}, ..., u_{N+k}$
for all $k \geq 1$ such that $\|v_k - u_0\|_X\rightarrow 0$.
\end{lem}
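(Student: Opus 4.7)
The plan is to derive Mazur's lemma from the classical fact that for any convex subset $C$ of a Banach space $X$, the weak closure of $C$ coincides with its norm closure. Once that equivalence is in hand, the lemma follows by a straightforward approximation argument applied to convex hulls of tails of the sequence.

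The first step, which is the heart of the proof, is to show that a norm-closed convex set $C \subset X$ is also weakly closed. I would argue by contradiction: suppose $x_0$ lies in the weak closure of $C$ but not in $C$ itself. Since $\{x_0\}$ is compact and convex and $C$ is closed and convex, the geometric form of the Hahn-Banach separation theorem provides $f \in X^*$ and $\alpha \in \mathbb{R}$ such that $f(y) \leq \alpha < f(x_0)$ for all $y \in C$. But the half-space $\{x \in X : f(x) \leq \alpha\}$ is a weakly closed set containing $C$, hence it must contain the weak closure of $C$, contradicting $x_0$ being in that weak closure. Consequently, for any convex set $C$, one has $\overline{C}^{\,\|\cdot\|} = \overline{C}^{\,w}$.

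The second step applies this to the problem at hand. For each fixed $N \geq 0$, define $C_N = \operatorname{co}\{u_{N+1}, u_{N+2}, \ldots\}$, the convex hull of the tail. Since $u_k \rightharpoonup u_0$, the point $u_0$ lies in the weak sequential closure of $\{u_j : j > N\}$, and hence in $\overline{C_N}^{\,w}$. By the first step applied to the convex set $C_N$, we conclude $u_0 \in \overline{C_N}^{\,\|\cdot\|}$. Therefore, for every $k \geq 1$, there exists a finite convex combination of elements from $\{u_{N+1}, u_{N+2}, \ldots\}$ within norm distance $1/k$ of $u_0$. By enlarging the index range and padding with zero coefficients if necessary, we may arrange that the $k$-th such convex combination uses precisely the vectors $u_{N+1}, \ldots, u_{N+k}$, giving the sequence $v_k$ with $\|v_k - u_0\| \to 0$.

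The main obstacle is the Hahn-Banach separation argument behind the first step, which requires the strict-separation form of the geometric Hahn-Banach theorem (using that a singleton is compact); the remainder is essentially a reinterpretation of weak convergence together with a bookkeeping argument on the indices of the convex combinations.
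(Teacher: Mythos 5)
Your proof is correct. Note that the paper itself offers no proof of this lemma --- it is stated in the appendix as a classical result quoted from Yosida's \emph{Functional Analysis} --- so there is no in-paper argument to compare against; what you give is the standard textbook derivation (Hahn--Banach separation shows a norm-closed convex set is weakly closed, hence $\overline{C}^{\,\|\cdot\|}=\overline{C}^{\,w}$ for convex $C$, applied to the convex hull of each tail), and it is sound. The only point worth tightening is the final bookkeeping: a convex combination within $1/m$ of $u_0$ may involve indices up to some $N+j_m$ with $j_m>m$, so one should define $v_k$ for $k\in[j_m,j_{m+1})$ to be that $m$-th combination (padded with zero coefficients so that it is formally a combination of $u_{N+1},\dots,u_{N+k}$), which still gives $\|v_k-u_0\|\to 0$; your phrase ``enlarging the index range and padding with zero coefficients'' gestures at exactly this and is acceptable.
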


\bibliographystyle{amsplain}
\bibliography{rsRef.bib}

\end{document}